\documentclass[11pt]{article}

\pdfoutput=1

\usepackage[margin=1in]{geometry}

\usepackage[
  bookmarks=true,
  bookmarksnumbered=true,
  bookmarksopen=true,
  pdfborder={0 0 0},
  breaklinks=true,
  colorlinks=true,
  linkcolor=black,
  citecolor=black,
  filecolor=black,
  urlcolor=black,
]{hyperref}
\usepackage[round]{natbib}
\usepackage{amsmath,amsthm,amssymb}

\newcommand{\doublefigxscale}{0.85}
\newcommand{\allfigscale}{1}
\newcommand{\allfigfont}{\normalsize}
\newcommand{\doublefigfont}{\normalsize}

\usepackage[capitalise]{cleveref}
\usepackage{nicefrac}
\usepackage{subfigure}
\usepackage{isomath}
\usepackage{mathtools}

\usepackage{fnbreak}
\usepackage{tikz}
\usepackage{pgfplots}
\pgfplotsset{compat=1.12}
\usetikzlibrary{calc}
\usepackage{graphicx, color}
\usetikzlibrary{intersections,plotmarks}

\allowdisplaybreaks

\theoremstyle{plain}
\newtheorem{theorem}{Theorem}
\newtheorem{corollary}{Corollary}
\newtheorem{lemma}{Lemma}
\newtheorem{proposition}{Proposition}
\newtheorem{open-problem}{Open Problem}
\newtheorem{sublemma}[theorem]{Sublemma}
\Crefname{sublemma}{Sublemma}{Sublemmas}

\Crefname{equation}{Equation}{Equations}
\Crefname{figure}{Figure}{Figures}

\theoremstyle{definition}
\newtheorem{definition}{Definition}

\newcommand{\crefpart}[2]{\cref{#1}(\labelcref{#1-#2})}
\newcommand{\RR}{\mathbb{R}_{+}}
\newcommand{\eqdef}{\triangleq}
\newcommand{\prob}[2]{\Pr_{#1}\left[#2\right]}
\newcommand{\expect}[2]{\mathbb{E}_{#1}\left[#2\right]}
\newcommand{\qo}{q_*}
\DeclareMathOperator{\OPT}{OPT}
\newcommand{\ERM}{\mathit{ERM}}

\hypersetup{
  pdfauthor      = {Moshe Babaioff <moshe@microsoft.com>, Yannai A. Gonczarowski <yannai@gonch.name>, Yishay Mansour <mansour@tau.ac.il>, and Shay Moran <shaymoran1@gmail.com>},
  pdftitle       = {Are Two (Samples) Really Better Than One? On the Non-Asymptotic Performance of Empirical Revenue Maximization},
}

\title{Are Two (Samples) Really Better Than One?\texorpdfstring{\\}{ }On the Non-Asymptotic Performance of\texorpdfstring{\\}{ }Empirical Revenue Maximization}
\author{Moshe Babaioff\thanks{Microsoft Research, \emph{E-mail}: \href{mailto:moshe@microsoft.com}{moshe@microsoft.com}.} \and Yannai A. Gonczarowski\thanks{Einstein Institute of Mathematics, Rachel \& Selim Benin School of Computer Science \& Engineering, and Federmann Center for the Study of Rationality, The Hebrew University of Jerusalem, Israel; and Microsoft Research. \emph{E-mail}: \href{mailto:yannai@gonch.name}{yannai@gonch.name}.} \and Yishay Mansour\thanks{Tel Aviv University, Israel, and Google Research, Israel. \emph{E-mail}: \href{mailto:mansour@tau.ac.il}{mansour@tau.ac.il}.
The research was done while author was co-affiliated with Microsoft Research.} \and Shay Moran\thanks{Institute for Advanced Study, Princeton. \emph{E-mail}: \href{mailto:shaymoran1@gmail.com}{shaymoran1@gmail.com}. The research was done while author was co-affiliated with Microsoft Research.}}
\date{July 2, 2018}

\begin{document}
\maketitle

\begin{abstract}
The literature on ``mechanism design from samples,'' which has flourished in recent years at the interface of economics and computer science, offers a bridge between the classic computer-science approach of worst-case analysis (corresponding to ``no samples'') and the classic economic approach of average-case analysis for a given Bayesian prior (conceptually corresponding to the number of samples tending to infinity). Nonetheless, the two directions studied so far are two extreme and almost diametrically opposed directions: that of asymptotic results where the number of samples grows large, and that where only a single sample is available. In this paper, we take a first step toward understanding the middle ground that bridges these two approaches: that of a fixed number of samples greater than one. In a variety of contexts, we ask what is possibly the most fundamental question in this direction: \emph{are two samples really better than one sample?}. We present a few surprising negative results, and complement them with our main result: showing that the worst-case, over all regular distributions, expected-revenue guarantee of the \emph{Empirical Revenue Maximization} algorithm given two samples is greater than that of this algorithm given one sample. The proof is technically challenging, and provides the first result that shows that \emph{some} deterministic mechanism constructed using two samples can guarantee more than one half of the optimal revenue.
\end{abstract}

\section{Introduction}

Arguably the simplest \emph{revenue maximization} problem is that of maximizing the revenue of a single buyer from selling a single item to a single bidder. In this problem, the seller customarily possesses some prior information about the buyer, traditionally modeled via a distribution from which the valuation (maximum willingness to pay for the item) of the buyer is drawn, and the seller's task in this \emph{Bayesian} model is to devise a selling mechanism that maximizes her \emph{expected} revenue over this distribution. This classic problem was completely resolved in the seminal paper of \citet{Myerson81}, who showed that the optimal mechanism (among all truthful, possibly even randomized, mechanisms) is to offer the item for a take-it-or-leave-it price tailored for the given prior distribution.

In recent years, the literature at the interface of economics and computer science, influenced by the newly found popularity of machine learning, has seen the rise of a line of work that relaxes the assumption of complete knowledge of the underlying distribution by the seller, to the assumption of the seller having access to samples from this distribution. In a sense, this model offers a bridge, via the number of samples that are available to the seller, between the classic computer-science approach of worst-case analysis (corresponding to ``no samples'') and the above-mentioned classic economic approach of average-case analysis for a given prior distribution (conceptually corresponding to the number of samples tending to infinity). Nonetheless, all of the results that we know of in this vein are in one of two extreme and almost diametrically opposed directions: one looking at asymptotic results where the number of samples grows large  \citep{CR14,MR15,MR16,DHP16,RS16,GN17,BSV16,ABGMMY17,BSV17}, and the other asking what can be done with a single sample \citep{DRY10,HMR15,FILS15}. For example, a result of the former direction would tell us that under certain conditions, a number of samples that is polynomial in certain parameters of the problem suffices for attaining a certain approximation to the optimal revenue with high probability, while a result of the latter direction would tell us that under certain conditions, access to a single sample from the buyer's distribution allows the seller to design a mechanism that attains some constant fraction of the optimal revenue in expectation. In this paper, we take a first step towards understanding the middle ground that bridges these two approaches: that of a fixed number of samples greater than one. In particular, we ask what is possibly the most fundamental question in this direction: \emph{are two samples really better than one sample?}

To understand the specific context in which we ask the above question, and why it is more involved than may be expected, we zoom-in to provide some more context. Arguably the most natural algorithm for pricing a good given samples from an underlying distribution is the \emph{Empirical Revenue Maximization (henceforth ERM)} algorithm, which sets the price to be the one that would maximize the expected revenue over the \emph{empirical distribution} --- the uniform distribution over the given samples. For a single sample, this means offering the good for a price that equals this sample, which is shown by \citet{DRY10} to guarantee a revenue of one half (!) of the optimal revenue when the underlying distribution is regular\footnote{Regularity (sometimes called Myerson-regularity) is a standard mild restriction on valuation distributions. It is well known that without any restriction on the possible class of valuation distributions, no revenue guarantees can be proven when constructing a mechanism from a finite number of samples. To see this, consider, for small $\varepsilon$, a distribution that attains the value $0$ with probability~${1-\varepsilon}$ and the value $\nicefrac{1}{\varepsilon}$ with probability $\varepsilon$. The optimal expected revenue of $1$ is attained by posting a price of $\nicefrac{1}{\varepsilon}$, however it is futile to try and learn even the order-of-magnitude of this price (for arbitrarily small $\varepsilon$) from a finite number of samples that does not itself depend on $\varepsilon$.}.\footnote{To be completely clear, the guarantee here is that for any regular distribution, the expected revenue from this mechanism, where the expectation is taken both over the given sample and over the bidder's valuation (which are drawn i.i.d.\ from the underlying regular distribution), is at least half of the optimal expected revenue, where this expectation is taken over the bidder's valuation.}
\citet{HMR15} have in fact shown that this guarantee of one half cannot be improved upon by any deterministic mechanism that is designed based on a single sample from a regular distribution.\footnote{\citet{FILS15} have nonetheless been able to slightly improve upon this guarantee of one half using a randomized pricing mechanism.} As the number of samples grows large, \citet{HMR15} show that ERM attains revenue that asymptotically tends to optimal for regular distributions. The main question that we ask in this paper is whether, for regular distributions, the worst-case guarantee of ERM constructed based on two samples is better, the same, or worse, then the one-half worst-case guarantee of ERM constructed based on one sample.

While it is clear that the optimal method to price an item based on two samples guarantees at least as much revenue as the optimal method to price an item based on a single sample --- indeed, one could always ignore the second sample and only use the first sample as in the single-sample case --- it is far less clear that ERM when based on two samples should have a better worst-case guarantee than ERM when based on a single sample.\footnote{For example, with many samples, ERM is in fact known to \emph{not} be a worst-case-optimal pricing algorithm, and to be inferior in this sense to \emph{guarded ERM} \citep{DRY10,CR14}.} To drive this point home, we present two seemingly-similar problems, for which we show that increasing the number of samples has an undesired effect on the revenue of ERM, and an additional problem where another natural notion of monotonicity fails to hold for ERM. To phrase these three problems, it will be convenient to use the following notation: given a distribution $F$ and a natural number $n$, we denote by $\ERM(F,n)$ the expected revenue over $F$ when pricing an item according to ERM given $n$ samples from the (underlying) distribution $F$.

\paragraph{Problem 1: Is it true that for any fixed underlying regular distribution $F$ and any number $n$, it holds that $\ERM(F,n+1)\ge\ERM(F,n)$?} The answer, even for $n=1$, turns out to be \textbf{No!} In fact, it turns out that for certain distributions $F$, taking more samples ``confuses'' ERM, hurting revenue: $\ERM(F,2)<\ERM(F,1)$.

\begin{sloppypar}
\paragraph{Problem 2: Is it true that for any two fixed regular distributions $F$ and $G$, if $\ERM(F,n)>\ERM(G,n)$ then also $\ERM(F,n+1)>\ERM(G,n+1)$?} The answer, even for $n=1$, turns out to be \textbf{No!} In fact, it turns out that for certain distributions $F$ and $G$, while $\ERM(F,1)>\ERM(G,1)$, it is in fact the case that $\ERM(F,2)<\ERM(G,2)$.
\end{sloppypar}

\paragraph{Problem 3: Is it true that for any two fixed regular distributions $F$ and $G$ such that $F$ first-order stochastically dominates $G$, and for any number $n$, it holds that ${\ERM(F,n)\ge\ERM(G,n)}$?} While for $n=1$ this is known to hold, the answer, already for $n=2$, turns out to be \textbf{No!} In fact, it turns out that for certain such distributions $F$ and $G$, while the revenue from each fixed posted price is higher from $F$ than it is from $G$, the structure of $F$ ``confuses'' ERM when based on two samples, hurting revenue by causing ERM to post lower-revenue~prices.

\vspace{.5em}
The analyses of the above three problems are given in \cref{negative}.
Despite the surprising negative answers surveyed above to all three problems, we do manage to show monotonicity in the sense that the \emph{worst-case guarantee} of the price computed by ERM based on two samples from a regular distribution is strictly higher than one half of the optimal expected revenue obtained by setting a posted price tailored specifically to the underlying distribution.
To formalize this result, which is our main result, it will be convenient to use the following notation: given a distribution~$F$, we denote by $OPT(F)$ the highest expected revenue over~$F$ attained by the optimal truthful mechanism (which, recall, is a posted-price mechanism).

\begin{theorem}[See \cref{more-than-half,monotonicity}]\label{intro-theorem}
There exists $c>\nicefrac{1}{2}$ such that for every regular distribution $F$, we have $\ERM(F,2)\!>\!c\cdot OPT(F)$.
In particular, $\inf_{\text{regular $F$}}\frac{\ERM(F,2)}{OPT(F)}>\frac{1}{2}=\inf_{\text{regular $F$}}\frac{\ERM(F,1)}{OPT(F)}$.
\end{theorem}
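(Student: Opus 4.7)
The plan is to parametrize by quantile, turning regularity into concavity of a single revenue curve, and then to pin down the worst-case shape by a compactness-based extremal reduction. Let $v_F:[0,1]\to[0,\infty]$ be the quantile function (the decreasing inverse of $v\mapsto 1-F(v)$) and set $R_F(q)=q\cdot v_F(q)$. Regularity is equivalent to $R_F$ being concave with $R_F(0)=0$, and $OPT(F)=R_F^*\eqdef\max_q R_F(q)$; rescale so $R_F^*=1$ and fix $q^*\in\arg\max R_F$. The ERM rule for two i.i.d.\ samples $q_1,q_2\sim U[0,1]$ is read off directly: with $q_{\min}\leq q_{\max}$, the empirical revenues at the smaller candidate value $v_F(q_{\max})$ and the larger candidate value $v_F(q_{\min})$ are $v_F(q_{\max})$ (both samples buy) and $v_F(q_{\min})/2$ (one sample buys), so ERM posts the higher price iff $v_F(q_{\min})\geq 2v_F(q_{\max})$. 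Thus
\[
\ERM(F,2)=\mathbb{E}\Bigl[R_F(q_{\min})\,\mathbf{1}[v_F(q_{\min})\geq 2v_F(q_{\max})]+R_F(q_{\max})\,\mathbf{1}[v_F(q_{\min})<2v_F(q_{\max})]\Bigr],
\]
while $\ERM(F,1)=\tfrac{1}{2}\mathbb{E}[R_F(q_{\min})+R_F(q_{\max})]\geq\tfrac{1}{2}$ is the DRY bound applied to a single sample.

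The strategy is to reduce to an extremal shape. The admissible class of revenue curves --- concave $R:[0,1]\to[0,1]$ with $R(0)=0$ and $\sup R=1$ --- is compact under pointwise convergence by Helly selection, and $R\mapsto\ERM(R,2)/R^*$ is upper semi-continuous once flat segments of $v_F$ (which correspond to atoms of $F$) are tracked carefully. So any minimizing sequence of regular distributions has a pointwise-limit revenue curve $R^\infty$ in the closure, and it suffices to show that $\ERM(R^\infty,2)>\tfrac{1}{2}$ for every such limit. The key extremal claim I would pursue is that every limit minimizer must be \emph{triangular}: $R^\infty(q)=\min(q/q^*,\,(1-q)/(1-q^*))$ for some $q^*\in[0,1]$. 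Triangles are the pointwise-minimal concave curves with $R(q^*)=1$, and thus the extremal minimizers of the linear one-sample functional $\ERM(\cdot,1)=\int R$ (which equals exactly $\tfrac{1}{2}$ on every triangle); any smooth strictly concave deformation beats $\tfrac{1}{2}$ at the one-sample level by a curvature-controlled margin, and this margin must be shown to transfer (perhaps with loss) to the two-sample functional. For a triangular $R^\infty$, the underlying $F$ has an atom of mass $q^*$ at $v^*=1/q^*$, so with probability $(q^*)^2$ both samples equal $v^*$ and ERM collects the full $OPT=1$; a direct computation on each triangular shape then yields $\ERM(R^\infty,2)>\tfrac{1}{2}$, with the worst case arising in the limit $q^*\to 0$, for which the resulting one-dimensional integral can be evaluated in closed form (roughly $12\ln 2-23/3\approx 0.65$) and pins down an explicit $c$.

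The main obstacle I anticipate is making the extremal reduction rigorous. The two-sample functional is nonlinear in $R$ because the decision boundary $v(q_{\min})=2v(q_{\max})$ is itself a shape feature of $v_F$, so perturbing $R$ moves both the integrand and the integration region, making shape-preserving deformations over the cone of concave functions delicate to analyze. Moreover, Problem~1 above shows that $\ERM(F,2)<\ERM(F,1)$ can occur for some regular $F$, so one cannot simply inherit the one-sample $\tfrac{1}{2}$-bound together with curvature slack --- the curvature-to-slack transfer has to be proved rather than invoked. An alternative route bypassing the compactness-plus-extremality framework is to express $\ERM(F,2)-\tfrac{1}{2}OPT(F)$ directly as a single explicit integral over $[0,1]^2$ (using the formula above and the identity $\ERM(F,1)=\tfrac{1}{2}\mathbb{E}[R(q_{\min})+R(q_{\max})]$) and to bound this difference below uniformly via a curvature-type functional of $R$ on the admissible class; this would yield a more concrete, if possibly smaller, constant $c>\tfrac{1}{2}$.
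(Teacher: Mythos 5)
Your setup is correct: quantile space, the concave revenue curve, the ERM decision rule $v(q_{\min})\ge 2v(q_{\max})$, and the resulting expression for $\ERM(F,2)$ all match the paper. The fatal gap is the extremal claim that every worst-case limit revenue curve is triangular --- the paper disproves precisely this. \cref{increase-area} constructs a quadrilateral revenue curve that pointwise dominates the triangle $r_G(q)=q$ (so $\ERM(F,1)>\ERM(G,1)$) yet has $\ERM(F,2)<\ERM(G,2)$, and the discussion after \cref{more-than-half} records that adding a similar ``bump'' to the left edge of the worst triangular shape (identified by Huang, Mansour, and Roughgarden) yields a quadrilateral strictly worse than \emph{any} triangle, so no triangle is a minimizer of $\ERM(\cdot,2)/\OPT(\cdot)$. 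The ``curvature-to-slack transfer'' you correctly flag as the anticipated obstacle is therefore not merely delicate to formalize, it is false in the direction you need: a strictly concave bump can \emph{lower} the two-sample functional while raising the one-sample one. Your secondary worry (semicontinuity of the functional across atoms / flat parts of $v_F$) is real but moot once the extremal characterization fails; and your alternative route of bounding $\ERM(F,2)-\tfrac{1}{2}\OPT(F)$ by a curvature-type functional cannot give a uniform strict margin either, since the admissible class contains curves arbitrarily close to triangular with vanishing curvature.

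The paper's actual proof avoids any extremal reduction. It fixes the optimal quantile $q_*$, partitions $[0,1]^2$ into the three regions $R$, $L$, $B$ according to whether both quantiles are above, both below, or one on each side of $q_*$, and proves $q_*$-dependent lower bounds on $\expect{(q_1,q_2)\sim U(\cdot)}{e_2(q_1,q_2)}$ over each region (\cref{l:right,l:left,l:both}). Crucially, the mixed region $B$ yields \emph{less} than $\tfrac{1}{2}\OPT(F)$ for $q_*>0$, so the proof must charge that deficit against the quantified surplus over $\tfrac{1}{2}\OPT(F)$ obtained on $R$ and $L$; the bounds are then combined with weights $(1-q_*)^2$, $q_*^2$, $2q_*(1-q_*)$ to get the uniform constant $0.509$. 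This is a conditioning-and-charging argument valid for an arbitrary regular $F$, which is structurally quite different from --- and necessitated by the failure of --- the extremal-shape program you propose.
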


We note that to the best of our knowledge, \cref{intro-theorem} is in fact the first result to show that \emph{some} deterministic mechanism constructed using two samples can guarantee more than one half of the optimal revenue for every regular distribution.\footnote{We emphasize again, as mentioned above, that \citet{FILS15} have constructed a \emph{randomized} mechanism for \emph{one sample} that guarantees more than one half of the optimal revenue in expectation. Our mechanism is thus the first \emph{deterministic} mechanism constructed using two samples that guarantees more than one half of the optimal revenue. (Incidentally, our lower bound beats one half by orders of magnitude more than the lower bound of \citet{FILS15} --- see \cref{main}.)} Proving \cref{intro-theorem} turns out to be a considerably more technically challenging than may have been expected (or rather, considerably more technically challenging than may have been expected before observing the negative answers to the above three problems, which may be seen as evidence that the proof of \cref{intro-theorem} should be challenging), up to the point that extending our methods even for comparing ERM for two and for three samples (let alone for higher values of $n$) seems intractable. The main problem that we leave open, therefore, is whether the monotonicity of $\ERM$ that is uncovered in \cref{intro-theorem} when going from a single sample to two samples, holds for any number of samples.

\begin{open-problem}\label[open-problem]{all-n}
Is it true that $\inf_{\text{regular $F$}}\frac{\ERM(F,n+1)}{OPT(F)}>\inf_{\text{regular $F$}}\frac{\ERM(F,n)}{OPT(F)}$ for every $n$?
\end{open-problem}

The rest of this paper is structured as follows. In \cref{model} we formally present the model and definitions. In \cref{negative}, we present the analysis of the above-surveyed negative results (to Problem~1, to Problem~2, and to an additional more technical problem). In \cref{main}, we present \cref{intro-theorem}, which is our main result, and give a high-level overview of its proof. The proof itself is given in \cref{sec-r,sec-l,sec-b}, with some calculations relegated to the appendix.

\subsection{Further Related Work}
The literature on ``mechanism design from samples'' is preceded and inspired by the literature on prior-free and prior-independent mechanism design.
Early work in Algorithmic Mechanism Design has mainly focused on \emph{prior-free} mechanism design, aiming for either worst-case welfare approximation \cite[e.g.,][]{LehmannOS02} or worst-case revenue approximation with respect to some instance-specific benchmark (e.g., the best revenue when selling at least 2 items \citep{GoldbergHW01}).
For more results on prior-free mechanisms see, for example, Chapter 7 of \citet{HartlineBook}.

The standard economic model of revenue maximization assumes that the value of each player is drawn from a known prior distribution, and the seller aims to maximize her expected revenue for that prior \citep{Myerson81}. \citet{BK96} have presented a remarkable result, showing that a seller can attain at least his optimal revenue for buyers with values drawn i.i.d.\ from a regular distribution by using the VCG mechanism, as long as she can recruit one additional buyer (whose value is drawn independently from the same distribution).
This mechanism is \emph{prior-independent} in the sense that the mechanism does not depend on the priors, 
yet the approximation is obtained with respect to the optimal revenue for the specific distribution, even though this optimal revenue is unknown when the mechanism is run.
\citet{HartlineR09} have initiated the systematic study of such prior-independent mechanisms. For more results on prior-independent mechanisms see, for example, Chapter 5 of \citet{HartlineBook}.

A slightly less demanding model than the prior-independent model is the model in which the mechanism has access to samples from the unknown underlying distribution, with the benchmark still being the (unknown-to-the-mechanism) optimal revenue for that specific distribution. The current paper uses this model, and prior work in this model is surveyed in the introduction above.

A somewhat similar model where the auction is also chosen based on sampled data, which is studied in the learning literature, is an online-learning model where the mechanism designer can use information from prior auctions to on-line optimize the parameters of the next auction \citep{Cesa-BianchiGM15,WeedPR16}. The goal in this model is to optimize the overall performance.

The literature on ``mechanism design from samples'' restricts the dependence of the auction mechanism on the full details of the buyer's valuation distribution, by having it depend only on sample valuations drawn from this distribution. An alternative approach to restricting the dependence of the auction mechanism on the full details of the valuation distribution is by having it depend only on certain statistical measures of the valuation distribution, such as its mean, its variance, or its median \citep{AM13,ADMW13}.

\section{Preliminaries}\label{model}

\subsection{Model and Notation}

\paragraph{Distributions and Revenues}
We consider one seller and one buyer. The seller has one good for sale, which has no value for the seller if it is left unsold. The buyer has a private value (valuation, i.e., maximum willingness to pay) for the good, which is drawn from some
distribution $F$. For each real price $p$, the \emph{expected revenue} attained from posting price $p$ is thus simply $p\cdot\prob{v\sim F}{v\ge p}$. The highest possible expected revenue attainable from any price is denoted by $\OPT(F)$. The seller does not know the value of the buyer or the distribution $F$.

\paragraph{Empirical Revenue Maximization}
Given $n$ samples from $F$, the \emph{empirical distribution} over these $n$ samples is simply the uniform distribution over the samples, i.e., sample $i$ is drawn with probability~$\nicefrac{1}{n}$. The Empirical Revenue Maximization algorithm is given $n$ independent samples from $F$, computes the price $p$ that maximizes the expected revenue attained from the empirical distribution over the given $n$ samples, and posts this price. We denote by $\ERM(F,n)$ the expected revenue of the price computed by the ERM algorithm over a fresh draw from $F$ (independent from the $n$ samples used to compute the price posted by the ERM algorithm); note that this revenue is in expectation over both the $n$ samples and the fresh draw.

\paragraph{Quantile Space}
For our analysis, it would be convenient to reason about the possible values of the buyer using their \emph{quantiles}. The quantile of a value $v$ with respect to a distribution $F$ is $q(v)=q_F(v)=\prob{v\sim F}{v\ge p}\in[0,1]$.\footnote{Throughout this paper, we use many definitions that depend on the distribution $F$. To avoid clutter, we will omit the respective distribution $F$ from these notations when it is clear from context.} (So the revenue from posting a price $p$ is $p\cdot q(p)$.) Note that lower quantiles $q$ correspond to higher valuations $v$. We also define the inverse map, from quantiles back to values: for a quantile $q\in[0,1]$, the value corresponding to that quantile
with respect to a given atomless
distribution $F$
is denoted by $v(q)=v_F(q)$ (and is well defined since $F$ is atomless). We note that sampling a value $v\sim F$ is therefore equivalent to uniformly sampling a quantile $q\in[0,1]$ and then taking the value corresponding to that quantile $v=v(q)$.

\paragraph{Revenue Curve in Quantile Space}
The \emph{revenue curve (in quantile space)} that corresponds to an atomless value distribution $F$ is the mapping $r:[0,1]\rightarrow\RR$ from a quantile $q$ to the expected revenue $r(q)=v(q)\cdot q$ of posting the value $v(q)$ as the price. We note that the value function $v(\cdot)$ (and hence also the distribution $F$) can be recovered from the revenue curve via $v(q)=\frac{r(q)}{q}$, that is, $v(q)$ is precisely the slope of the line connecting the origin to the point $(q,r(q))$. We will at times write $\OPT(r)$ instead of $\OPT(F)$, write $v_r$ instead of $v_F$, etc. Note that $\OPT(r)=\max_{q\in[0,1]}r(q)$. An atomless distribution~$F$ is called \emph{regular} if its corresponding revenue curve (in quantile space) is concave.\footnote{While the more popular definition of regularity \citep{Myerson81} is phrased using \emph{virtual values}, these two standard definitions are known to be equivalent (in fact, the definition used here is more general as it also applies to nondifferentiable revenue curves). Indeed, it is well known that the derivative of the revenue curve at quantile $q$ is \citeauthor{Myerson81}'s virtual value at $v(q)$, and so the definition of regularity as the virtual-value function being increasing corresponds to the derivative of the revenue curve being decreasing, i.e., to the revenue curve being convex.}

\subsection{Additional Notation}

\begin{definition}[$e_2$]
Given a regular distribution $F$ with revenue curve $r$, we define $e_2:[0,1]^2\rightarrow\RR$, as follows.
\[e_2(q_1,q_2)=e_2^r(q_1,q_2)\eqdef\begin{cases}r\left(\arg\max_{q\in\{q_1,q_2\}}v(q)\right) & \max\bigl\{v(q_1),v(q_2)\bigr\}\geq 2\min\bigl\{v(q_1),v(q_2)\bigr\}, \\
r\left(\arg\min_{q\in\{q_1,q_2\}}v(q)\right) & \mbox{otherwise};\end{cases}\]
note that
\[\ERM(F,2)\eqdef \expect{(q_1,q_2)\sim U([0,1]^2)}{e_2(q_1,q_2)}.\]
\end{definition}
Namely, given a pair of quantiles $(q_1,q_2)$, if the value $\max\bigl\{v(q_1),v(q_2)\bigr\}$ is at least twice as large as the value $\min\bigl\{v(q_1),v(q_2)\bigr\}$ then $e_2(q_1,q_2)$ is the expected revenue of the price $\max\bigl\{v(q_1),v(q_2)\bigr\}$. Otherwise $e_2(q_1,q_2)$ is the expected revenue of the price $\min\bigl\{v(q_1),v(q_2)\bigr\}$.

During our analysis, it will be useful to work with revenue curves $r$ that are normalized so that $OPT(r)=1$. The following simple \lcnamecref{e2-homogeneous}, whose proof is given in \cref{homogeneity-proof} for completeness, justifies that this is without loss of generality.

\begin{lemma}\label[lemma]{e2-homogeneous}
For every regular distribution $F$ with revenue curve $r$, and for every $\alpha>0$, we have that\ \ \emph{(1)}~$v_{\alpha\cdot r}(q)=\alpha\cdot v_r(q)$ for every $q\in[0,1]$,\ \ \emph{(2)}~$e_2^{\alpha\cdot r}(q_1,q_2)=\alpha\cdot e_2^r(q_1,q_2)$ for every $q_1,q_2\in[0,1]$, and\ \ \emph{(3)}~$OPT(\alpha\cdot r)=\alpha\cdot OPT(r)$.
\end{lemma}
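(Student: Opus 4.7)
The plan is to handle the three claims in the order (1), (3), (2), since (2) relies on (1) via the case analysis in the definition of $e_2$. All three should follow essentially from unwinding definitions and using that $\alpha>0$ preserves both multiplication and the sign of inequalities/orderings.

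For (1), I would start from the defining identity $v_r(q)=r(q)/q$ (which the paper extracts from $r(q)=v(q)\cdot q$). Replacing $r$ with $\alpha\cdot r$ in this expression gives $v_{\alpha\cdot r}(q)=(\alpha\cdot r)(q)/q=\alpha\cdot r(q)/q=\alpha\cdot v_r(q)$, which is exactly (1). For (3), by definition $\OPT(r)=\max_{q\in[0,1]}r(q)$, so $\OPT(\alpha\cdot r)=\max_{q\in[0,1]}\alpha\cdot r(q)=\alpha\cdot\max_{q\in[0,1]}r(q)=\alpha\cdot\OPT(r)$, where the second equality uses $\alpha>0$ to pull the positive scalar out of the max.

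For (2), the subtlety is that $e_2$ is defined via a case split on the ratio between $\max\{v(q_1),v(q_2)\}$ and $\min\{v(q_1),v(q_2)\}$, and via the $\arg\max$/$\arg\min$ of these values among $\{q_1,q_2\}$. I would first observe, using (1), that $v_{\alpha\cdot r}(q_i)=\alpha\cdot v_r(q_i)$ for $i=1,2$. Since $\alpha>0$, this scaling preserves both the inequality $\max\{v(q_1),v(q_2)\}\geq 2\min\{v(q_1),v(q_2)\}$ (both sides scale by the same positive $\alpha$) and the $\arg\max$/$\arg\min$ among $\{q_1,q_2\}$ (a positive rescaling does not change which element is the larger or smaller). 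Hence the case that applies for $\alpha\cdot r$ is the same as the case that applies for $r$, and the selected quantile $q^*\in\{q_1,q_2\}$ is the same in both. Then $e_2^{\alpha\cdot r}(q_1,q_2)=(\alpha\cdot r)(q^*)=\alpha\cdot r(q^*)=\alpha\cdot e_2^r(q_1,q_2)$, giving (2).

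None of the steps is a serious obstacle; if anything, the only point worth being explicit about is that $\alpha>0$ is used both to preserve the scalar multiplication through the $\max$ in (3) and to preserve the ordering that determines which of the two cases (and which $\arg\max$/$\arg\min$) is selected in (2). A brief remark noting this suffices, and no calculation beyond the above definition-chasing is required.
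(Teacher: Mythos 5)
Your proof is correct and follows essentially the same route as the paper's: Part~(1) by substituting $\alpha\cdot r$ into $v(q)=r(q)/q$, Part~(3) by pulling the positive scalar $\alpha$ out of the $\max$, and Part~(2) by using Part~(1) to observe that positive scaling preserves both the case condition and the $\arg\max$/$\arg\min$ selection, so the chosen quantile is unchanged. Nothing is missing; the explicit remark that $\alpha>0$ is what makes the ordering and the case split invariant is exactly the point the paper also relies on.
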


\subsection{A Single Sample}

In their paper, \citet{DRY10} show that a celebrated theorem by \citet{BK96} can be reinterpreted to imply that $\ERM$ guarantees one half of the optimal expected revenue for every regular distribution.

\begin{theorem}[\citealp{DRY10}]\label{one-half}
$\!\ERM(F,1)\!\ge\!\frac{1}{2}OPT(F)$ for every regular distribution~$F$. Furthermore, this is tight, i.e., the constant $\frac{1}{2}$ cannot be replaced with any larger constant in this statement.
\end{theorem}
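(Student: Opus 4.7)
The plan is to work entirely in quantile space with the revenue curve $r$. If the single sample has quantile $q\sim U[0,1]$, then ERM posts the price $v(q)$, and an independent fresh buyer accepts this price exactly when its quantile is at most $q$, an event of probability~$q$. Hence, conditional on $q$, the expected revenue equals $v(q)\cdot q = r(q)$, and integrating over $q$ gives the clean identity
\[
  \ERM(F,1) \;=\; \int_0^1 r(q)\,dq,
\]
reducing the theorem to a pure statement about the concave, nonnegative function $r$ on $[0,1]$.

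For the lower bound, let $q^{*}\in\arg\max_{q}r(q)$, so $\OPT(F)=r(q^{*})$. I would use the ``inscribed triangle'' argument: the triangle with vertices $(0,0)$, $(q^{*},r(q^{*}))$, $(1,0)$ lies below the graph of $r$. On $[0,q^{*}]$, concavity together with $r(0)\ge 0$ shows that $r$ dominates the chord from $(0,0)$ to $(q^{*},r(q^{*}))$, so $r(q)\ge \tfrac{q}{q^{*}}\,r(q^{*})$; symmetrically, concavity with $r(1)\ge 0$ gives $r(q)\ge \tfrac{1-q}{1-q^{*}}\,r(q^{*})$ on $[q^{*},1]$. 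Integrating these two triangular lower bounds yields
\[
  \int_0^1 r(q)\,dq \;\ge\; \frac{q^{*}}{2}\,r(q^{*}) + \frac{1-q^{*}}{2}\,r(q^{*}) \;=\; \frac{r(q^{*})}{2} \;=\; \frac{\OPT(F)}{2},
\]
which is exactly the desired inequality.

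For tightness, I would exhibit a sequence of regular atomless distributions whose revenue curves approximate precisely the inscribed triangle above, so that the lower bound becomes asymptotically an equality. A natural family takes revenue curves that ascend nearly linearly from the origin to a peak at some quantile $q^{*}$ and then descend linearly to $(1,0)$. A mild smoothing is needed because the pure triangular curve has $v(q)=r(q)/q$ constant on $[0,q^{*}]$, which corresponds to an atom; bending the ascending piece slightly makes $v(\cdot)$ strictly monotone while keeping the ratio $\ERM(F,1)/\OPT(F)$ arbitrarily close to $\tfrac{1}{2}$, so the constant $\tfrac{1}{2}$ cannot be improved.

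The main obstacle, in my view, is not the lower bound (which is a one-line geometric observation once the integral formula is in hand), but rather the tightness construction: one must verify that the approximating distributions are genuinely regular (i.e., have concave revenue curves) and atomless. This is precisely why the literal triangle does not suffice and a careful perturbation, whose limit still realizes ratio $\tfrac{1}{2}$, is needed.
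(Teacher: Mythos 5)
Your proof is correct and follows essentially the same route as the paper: you reduce $\ERM(F,1)$ to $\int_0^1 r(q)\,dq$, use concavity of $r$ to inscribe a triangle with apex at $\bigl(q^*,\OPT(F)\bigr)$ and base on the $q$-axis, and observe that the triangle's area is exactly half the height; tightness is then argued via regular atomless distributions whose revenue curves approximate a triangle. (The paper states the area-under-a-concave-curve observation more tersely, without explicitly naming the two chords, but the underlying geometric picture and the tightness construction are the same.)
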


\citet{DRY10} also give a direct simple proof for \cref{one-half} that does not use \citeauthor{BK96}'s result: recall that the quantile of a value drawn from $F$ is distributed uniformly in $[0,1]$. Therefore, the expected revenue by using a value drawn from $F$ as a price is precisely the integral of the revenue curve $r$, i.e., the area under the curve $r(\cdot)$. Since this curve is convex, the area under it is at least half of the height of the highest point on this curve. (This bound is tight: fixing any triangular revenue curve, there exists a sequence of regular distributions whose revenue curves uniformly converge to this triangular curve. Thus, revenue curves with areas under them that are arbitrarily close to one half of the height of their highest point can be constructed.) \cref{one-half} therefore follows since this height is precisely $OPT(r)=OPT(F)$.

As noted in the introduction, while the bound of one half from \cref{one-half} cannot be improved upon by any deterministic mechanism, \citet{FILS15} do manage to nonetheless construct, using one sample, a \emph{randomized} pricing mechanism that guarantees a strictly higher revenue of ${(0.5+5\times10^{-9})\cdot OPT(F)}>\frac{1}{2}\cdot OPT(F)$.

\section{Three Negative Results}\label{negative}

In this section we present the analysis of the three negative results surveyed in the introduction.

\begin{proposition}\label[proposition]{more-samples-bad}
There exists a regular distribution $F$ such that $\ERM(F,2)<\ERM(F,1)$.
\end{proposition}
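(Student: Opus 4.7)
The plan is to exhibit a single explicit regular distribution as a witness and verify the inequality by a short direct computation; no delicate argument is required. I would take $F$ to be the Pareto distribution with shape parameter $\alpha=3/2$, i.e., $F(v)=1-v^{-3/2}$ on $[1,\infty)$. Regularity is immediate in quantile space: $v(q)=q^{-2/3}$, hence the revenue curve is $r(q)=v(q)\cdot q=q^{1/3}$, which is concave on $[0,1]$, and $\OPT(F)=\max_{q} r(q)=r(1)=1$.

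The main work is computing the two expected revenues. The single-sample side is immediate from $\ERM(F,1)=\int_0^1 r(q)\,dq$, which gives $3/4$. For two samples I would unpack the definition of $e_2$: writing $q_{\min}\le q_{\max}$ for the two sampled quantiles, the condition $v(q_{\min})\ge 2v(q_{\max})$ becomes $q_{\max}\ge 2^{3/2}q_{\min}$, so $e_2$ equals $r(q_{\min})$ on the ``far apart'' region $\{q_{\max}\ge 2\sqrt{2}\,q_{\min}\}$ and $r(q_{\max})$ on its complement inside $\{q_{\min}<q_{\max}\}$. Splitting the integral of $e_2$ over $[0,1]^2$ accordingly and using the symmetry of $e_2$ reduces $\ERM(F,2)$ to two elementary integrals of powers of $q$; evaluating and summing them yields a closed form, and a one-line numerical estimate shows it is strictly less than $3/4=\ERM(F,1)$.

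The conceptual reason this construction works, which also serves as a sanity check, is that for this Pareto $r$ is strictly increasing in $q$, so higher quantiles (i.e., lower prices) yield more expected revenue; yet $e_2$ picks the lower-quantile (higher-value) sample whenever the two values differ by a factor of at least two, and the resulting negative bias on that region outweighs the positive bias on the complementary ``close together'' region. I do not expect any real obstacle beyond the routine integrals; the one thing worth being careful about when picking $\alpha$ is to avoid a cancellation point (at $\alpha=2$ the two biases happen to balance exactly), so taking $\alpha=3/2$, comfortably on the negative side, is safe.
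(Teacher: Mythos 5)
Your proposal is correct and takes a genuinely different route from the paper's. The paper's witness is the equal-revenue distribution truncated at $v=10$, yielding the piecewise-linear revenue curve $r(q)=\min(10q,1)$; this has an atom at the top of the support, so the paper has to smooth it over an interval $[10,10+\varepsilon]$ and argue the effect is negligible, then carefully split into cases based on where the two sampled quantiles fall relative to $0.1$ and $0.2$, obtaining $\ERM(F,1)=\nicefrac{19}{20}$ and $\ERM(F,2)=\nicefrac{11}{12}$. Your witness is the Pareto with shape $\nicefrac{3}{2}$, which gives the clean revenue curve $r(q)=q^{1/3}$: it is atomless, strictly concave, and requires no smoothing or limiting argument at all. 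Your computation is also sound: the far-apart condition $v(q_{\min})\ge 2v(q_{\max})$ becomes $q_{\max}\ge 2\sqrt{2}\,q_{\min}$, and integrating over the triangle gives
\[
\ERM(F,2)=\frac{6}{7}\left(\frac{3}{16}+1-\frac{1}{2\sqrt{2}}\right)=\frac{6}{7}\left(\frac{19}{16}-\frac{1}{2\sqrt{2}}\right)\approx 0.7148<\frac{3}{4}=\ERM(F,1).
\]
Your side remark about $\alpha=2$ also checks out: with $r(q)=q^{1/2}$ one finds $\ERM(F,1)=\ERM(F,2)=\nicefrac{2}{3}$ exactly, so choosing $\alpha$ strictly below $2$ is indeed the right move. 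What your approach buys is a simpler and fully closed-form witness with no approximation bookkeeping; what the paper's buys is a more vivid demonstration of the ``confusion'' mechanism, since the equal-revenue curve makes it transparent that the only losses come from ERM occasionally committing to an aggressively high price when one sample lands in the short steep segment. Either construction suffices for the proposition.
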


\begin{proof}
The \emph{equal revenue distribution} is the distribution supported on $[1,\infty)$ with revenue $1$ for every posted price, that is, the distribution $G$ satisfying $\prob{v\sim G}{v\ge p}=\nicefrac{1}{p}$ for every $p\ge1$.

We take $F$ to be the equal revenue distribution, truncated at $v=10$ so that all of the mass of the equal revenue distribution $G$ at values $\ge10$ is uniformly respread in $F$ throughout a small interval $[10,10+\varepsilon]$. The corresponding revenue curve (in quantile space) $r(q)=r_F(q)$ climbs up almost linearly (with a very slight convex curvature, which tends to linear as $\varepsilon$ grows small) from $q=0$ (revenue $0$) until $q=0.1$ (revenue $1$), and continues at revenue $1$ from that point (i.e., for all quantiles $q>0.1$). For simplicity, we will perform our calculations by approximating the slightly curved convex climb of $r(q)$ in $[0,0.1]$ be a linear climb (conceptually corresponding to $\varepsilon$ tending to zero), that is:
\[r(q)=\begin{cases}10\cdot q & q \le 0.1, \\ 1 & \text{otherwise}.\end{cases}\]
It is easy (and standard) to see that this approximation will have a negligible effect on our calculations of $\ERM(F,2)$ and $\ERM(F,1)$, as its effect, for any quantile $q$, on either $r(q)$ or $v(q)$ is negligible. For this to indeed hold, in the definition of $e_2$ when defining the quantile chosen by the $\arg\min$ operator, we will henceforth break ties between $q_1,q_2$ that have $v(q_1)=v(q_2)$ (such distinct $q_1,q_2$ can only occur in the initial linear climb of the revenue curve) in favor of larger quantiles (that is, higher revenue), as in the slightly curved initial convex climb of the revenue curve that is approximated by this linear climb, the value of a larger quantile is slightly smaller that that of a smaller quantile.

We start by precisely calculating the expected revenue from posting a price computed by ERM given one sample:
\[\ERM(F,1)=\int_0^1r(q)\,dq=1-\frac{0.1}{2}=\nicefrac{19}{20}=0.95.\]
To calculate the expected revenue from posting a price computed by ERM given two samples, we note that the revenue will be nonoptimal (i.e., less than $1$) in precisely the following two cases:
\begin{itemize}
\item
Both samples are from a quantile $<0.1$. The expected revenue, conditioned on this case, is~$\expect{q_1,q_2\sim U([0,0.1])^2}{\max\{10\cdot q_1,10\cdot q_2\}}=\nicefrac{2}{3}$.
\item
One sample is from a quantile $q_1<0.1$ and the other is from a quantile $q_2>0.2$ In this case, since $v(q_1)\ge10=2\cdot5=2\cdot v(0.2)>2\cdot v(q_2)$, the price calculated by ERM given these two samples is $v(q_1)$, and so the expected revenue, conditioned on this case, is~$\expect{q_1\sim U([0,0.1])}{10\cdot q_1}=\nicefrac{1}{2}$.
\end{itemize}
Therefore, the expected revenue from posting a price computed by ERM given two samples is:
\begin{multline*}
\ERM(F,2)=\\
1-0.1^2\cdot\left(1-\expect{q_1,q_2\sim U([0,0.1])^2}{\max\{10q_1,10q_2\}}\right)-2\cdot0.1\cdot0.8\cdot\left(1-\expect{q_1\sim U([0,0.1])}{10q_1}\right)=\\
=1-0.1^2\cdot\nicefrac{1}{3}-2\cdot0.1\cdot0.8\cdot\nicefrac{1}{2}=\nicefrac{11}{12}=0.91\bar{6}.
\end{multline*}
And so, indeed, $\ERM(F,2)=\nicefrac{11}{12}<\nicefrac{19}{20}=\ERM(F,1)$, as required.
\end{proof}

\begin{proposition}\label[proposition]{switch}
There exist regular distributions $F$ and $G$ such that $\ERM(F,1)>\ERM(G,1)$ and $\ERM(F,2)<\ERM(G,2)$.
\end{proposition}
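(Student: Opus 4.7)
The plan is to take both $F$ and $G$ from the same family of truncated equal-revenue distributions used in the proof of \cref{more-samples-bad}, but with different truncation levels. The guiding insight is that the ``confusion'' phenomenon of ERM-with-two-samples on a truncated equal-revenue distribution $F_M$ (truncated at value $M$) arises only because the top value $M$ exceeds twice some non-negligible sampled values $1/q_2$ in the equal-revenue tail (namely when $q_2\geq 2/M$). If the truncation level satisfies $M\leq 2$, then all values lie in $[1,M]\subseteq[1,2]$, so the max-to-min ratio of any two sampled values is at most $2$ and the confusion case disappears. Meanwhile, a lower truncation level $M$ hurts one-sample ERM, because it shrinks the flat part of the revenue curve on which ERM scores $1$.

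Concretely, I would take $F=F_5$ (equal-revenue truncated at $v=5$) and $G=F_2$ (truncated at $v=2$), each constructed as in \cref{more-samples-bad} by uniformly spreading the equal-revenue mass above $M$ over a small interval $[M,M+\varepsilon]$ and taking $\varepsilon\to 0$. Their (limiting) revenue curves are
\[
r_F(q)=\begin{cases}5q & q\leq\nicefrac{1}{5},\\ 1 & q>\nicefrac{1}{5},\end{cases}\qquad r_G(q)=\begin{cases}2q & q\leq\nicefrac{1}{2},\\ 1 & q>\nicefrac{1}{2}.\end{cases}
\]
Both curves are concave, so $F$ and $G$ are regular, and the $\varepsilon$-smoothing and tiebreaking convention are inherited unchanged from \cref{more-samples-bad}.

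For one sample, integrating the two revenue curves immediately yields $\ERM(F,1)=\nicefrac{9}{10}$ and $\ERM(G,1)=\nicefrac{3}{4}$, so $\ERM(F,1)>\ERM(G,1)$. For two samples, I would reuse the case decomposition of \cref{more-samples-bad} verbatim. For $F$, the analysis is identical with ``$10$'' replaced by ``$5$'': only the ``both samples $\leq\nicefrac{1}{5}$'' case and the confusion case (one sample $\leq\nicefrac{1}{5}$, the other $\geq\nicefrac{2}{5}$) contribute losses, and the bookkeeping yields $\ERM(F,2)=\nicefrac{13}{15}$. For $G$, the max-to-min observation eliminates the confusion case (it would require a quantile $\geq 1$, a measure-zero event), so the only loss comes from ``both samples in $[0,\nicefrac{1}{2}]$'', giving $\ERM(G,2)=1-\tfrac{1}{3}\cdot(\nicefrac{1}{2})^2=\nicefrac{11}{12}$. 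Since $\nicefrac{13}{15}<\nicefrac{11}{12}$, we obtain $\ERM(F,2)<\ERM(G,2)$ as required.

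The main conceptual point (rather than a technical obstacle) is to recognize that the threshold $M=2$ is precisely where two-sample ERM stops being ``confused'', so choosing $G$ at or below this threshold preserves a high $\ERM(G,2)$ while simultaneously pushing $\ERM(G,1)$ down by enlarging the climbing region $[0,\nicefrac{1}{M}]$. No new technical machinery beyond that of \cref{more-samples-bad} is needed; the routine verification is just the arithmetic comparison $156<165$ between $\nicefrac{13}{15}$ and $\nicefrac{11}{12}$.
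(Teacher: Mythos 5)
Your proposal is correct and takes a genuinely different route from the paper. The paper sketches a perturbation-of-triangular-curves strategy for \cref{switch} but then defers entirely to \cref{increase-area}, whose ``bump'' construction (a quadrilateral revenue curve versus a triangular one) is more delicate. You instead take both $F$ and $G$ from the truncated equal-revenue family already analyzed in \cref{more-samples-bad}, with truncation levels $5$ and $2$ respectively, and observe that truncating at $M\le 2$ collapses the max-to-min value ratio to at most $2$, eliminating the two-sample ``confusion'' case outright while simultaneously dragging $\ERM(\cdot,1)$ down by lengthening the climbing portion of the revenue curve. I verified the arithmetic: $\ERM(F,1)=\nicefrac{9}{10}>\nicefrac{3}{4}=\ERM(G,1)$; for $F$, the losses come from ``both quantiles $<\nicefrac{1}{5}$'' (probability $\nicefrac{1}{25}$, conditional revenue $\nicefrac{2}{3}$) and the confusion region ``one quantile $<\nicefrac{1}{5}$, the other $>\nicefrac{2}{5}$'' (probability $\nicefrac{6}{25}$, conditional revenue $\nicefrac{1}{2}$), giving $\ERM(F,2)=1-\nicefrac{1}{75}-\nicefrac{3}{25}=\nicefrac{13}{15}$; for $G$ only ``both quantiles $<\nicefrac{1}{2}$'' loses, giving $\ERM(G,2)=1-\nicefrac{1}{12}=\nicefrac{11}{12}$; and indeed $\nicefrac{13}{15}<\nicefrac{11}{12}$. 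What your route buys is a fully self-contained, elementary proof of \cref{switch} reusing exactly the machinery of \cref{more-samples-bad}. What the paper's route buys is economy of exposition: since $r_F$ dominates $r_G$ pointwise is not quite true for your pair (the two curves coincide on $[\nicefrac{1}{2},1]$), your construction does not also yield \cref{increase-area}, whereas the paper's single construction covers both propositions at once.
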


\begin{proof}
An intuitive strategy for proving \cref{switch} is as follows: recall from the proof of \citet{DRY10} of \cref{one-half} that for distributions with ``triangular'' revenue curves, the expected revenue from posting a price computed by ERM given one sample, is precisely one half of optimal, since the area beneath the revenue curve is precisely one half of the maximum value of this curve. Nonetheless, when posting a price computed by ERM given two samples, it is not hard to observe that different triangular revenue curves result in different expected revenues. So, we will take $F$ and $G$ to be two distributions corresponding to triangular revenue curves with $OPT(F)=OPT(G)$, such that $\ERM(F,2)<\ERM(G,2)$. Since both revenue curves are triangular, we have that $\ERM(F,1)=\ERM(G,1)$. By slightly perturbing the revenue curve of $F$ in a way that increases the area under this curve (causing $\ERM(F,1)$ to increase) while only slightly changing $\ERM(F,2)$, we obtain that for this perturbed $F$ it still holds that $\ERM(F,2)<\ERM(G,2)$, but at the same time it also holds that $\ERM(F,1)>\ERM(G,1)$, as required. We omit the full details as \cref{switch} also follows from \cref{increase-area}, whose more subtle proof we give below.
\end{proof}

Recall from the proof of \citet{DRY10} for the single-sample case, that for any distribution $F$, the expected revenue $\ERM(F,1)$ is the integral of the revenue curve $r_F(\cdot)$ with respect to the uniform measure over quantiles. In the case of two samples, $\ERM(F,2)$ can still be expressed as an appropriate integral of the revenue curse $r_F(\cdot)$, with two main caveats: first, since two samples are involved, this integral is no longer with respect to the uniform measure on quantiles, and second, since the probability of using the price that corresponds to a certain quantile, i.e., the probability that ERM given two samples chooses this price, in fact depends in quite a delicate manner on the distribution $F$ through the value function $v_F(\cdot)$, the measure with respect to which this integral is defined is itself intricately dependent on the distribution $F$. To drive this point home, we present the following surprising observation:

\begin{proposition}\label[proposition]{increase-area}
There exist two regular distributions $F$ and $G$ with $r_F(q)>r_G(q)$ for all\footnote{Equivalently, $F$ first-order stochastically dominates $G$.} $q\in(0,1)$ (so $\int r_F(q)>\int r_G(q)$ with respect to any measure), such that $\ERM(F,2)<\ERM(G,2)$.
\end{proposition}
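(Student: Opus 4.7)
The plan is to generalize the truncated equal-revenue construction from \cref{more-samples-bad} to a two-parameter family, indexed by a ``plateau quantile'' $q^* \in (0,\,1/2]$ and a ``plateau height'' $c > 0$, with revenue curve $r_{q^*,c}(q) = c \cdot \min(q/q^*,\,1)$. Running the same four-way case analysis as in the proof of \cref{more-samples-bad} --- with cases indexed by whether each of the two samples falls below or above $q^*$, and, in the mixed case, whether the larger-quantile sample exceeds $2q^*$ (the factor-of-two threshold in the definition of $e_2$) --- yields the closed-form expression
\[\ERM(F_{q^*,c},\,2) = c \cdot \bigl(1 - q^* + \tfrac{5}{3}(q^*)^2\bigr).\]
At $c = 1$, this is a convex quadratic in $q^*$ with unique minimum at $q^* = 3/10$, of value $17/20$; at $q^* = 2/5$ the value is the larger number $13/15$.

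This nonmonotonicity in $q^*$ is the crux of the proof. I exploit it by placing $G$ to the right of the minimum and $F$ to the left, where $\ERM$ is lower, and then bumping $F$ up just enough to pointwise dominate $G$. Concretely, take $G = F_{2/5,\,1}$ and $F = F_{3/10,\,1+\epsilon}$ for a small $\epsilon > 0$, so that $\ERM(G,2) = 13/15$ and $\ERM(F,2) = (1+\epsilon) \cdot 17/20$; this gives $\ERM(F,2) < \ERM(G,2)$ whenever $(1+\epsilon)\cdot\tfrac{17}{20} < \tfrac{13}{15}$, i.e., whenever $\epsilon < 1/51$ (e.g., $\epsilon = 1/100$ suffices).

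Pointwise dominance $r_F(q) > r_G(q)$ on $(0,1)$ is then an elementary three-interval check: on $(0,\,3/10]$ both curves are linear with $r_F/r_G = (1+\epsilon)\cdot\tfrac{4}{3} > 1$; on $[3/10,\,2/5]$ one has $r_F \equiv 1+\epsilon > q/(2/5) = r_G$; and on $[2/5,\,1)$ one has $r_F \equiv 1+\epsilon > 1 = r_G$. As in \cref{more-samples-bad}, these truncated equal-revenue distributions are not atomless (they carry a point mass at the top value); the actual regular distributions are recovered by spreading this mass uniformly over a short value interval, i.e., smoothing the kink at $q^*$ into a short convex climb. Both the revenue curve and $\ERM(\cdot,2)$ are continuous in the smoothing parameter, so the strict inequalities survive for small enough smoothing.

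The main obstacle is deriving the $\ERM$ formula in the two-parameter family; it requires a careful case split with the same tie-breaking convention used in \cref{more-samples-bad} --- breaking ties on the equal-value plateau in favor of the larger quantile, to match the behavior of the smoothed distribution. Once the formula is in hand, the rest reduces to arithmetic plus the standard smoothing argument.
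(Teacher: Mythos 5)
Your proof is correct, and it takes a genuinely different route from the paper. The paper takes $G$ to be the triangular revenue curve $r_G(q)=q$ and obtains $F$ from $G$ by adding a small ``bump'' at $q=0.1$ (a quadrilateral curve), keeping $\OPT(F)=\OPT(G)=1$; it then shows by direct computation that the extra confusion caused by the bump outweighs the pointwise gain. You instead work entirely within the family of truncated equal-revenue curves $r_{q^*,c}(q)=c\cdot\min(q/q^*,1)$ (the same family as in \cref{more-samples-bad}), derive the closed form $\ERM(F_{q^*,c},2)=c\,(1-q^*+\tfrac{5}{3}(q^*)^2)$ (which I verified, and which correctly reproduces the paper's $\nicefrac{11}{12}$ at $q^*=0.1$, $c=1$), and exploit that this quadratic in $q^*$ is \emph{nonmonotone}, with an interior minimum at $q^*=\nicefrac{3}{10}$; placing $G$ to the right of the minimum and $F$ to the left with a small multiplicative boost $(1+\epsilon)$ then gives both pointwise dominance and $\ERM(F,2)<\ERM(G,2)$. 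Your approach buys a transparent, single-formula explanation of the pathology and a whole one-parameter family exhibiting it; the small cost is that $\OPT(F)=1+\epsilon\ne 1=\OPT(G)$, which the proposition does not require but which the paper's bump construction preserves (and which feeds the later discussion of why no triangular curve can be worst-case for $\ERM(\cdot,2)$). The tie-breaking convention and smoothing argument you invoke are the same ones the paper already uses for \cref{more-samples-bad}, so that part is on the same footing of rigor as the original. Your numbers all check out: $f(\nicefrac{3}{10})=\nicefrac{17}{20}$, $f(\nicefrac{2}{5})=\nicefrac{13}{15}$, and the threshold $\epsilon<\nicefrac{1}{51}$ is right.
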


\begin{proof}[Proof of \cref{increase-area}]
We will choose distributions $F$ and $G$ for which $OPT(F)=OPT(G)=1$.
We take $G$ to be the distribution function corresponding to the ``triangular'' revenue curve $r_G(q)=q$
and take $F$ to be the distribution function corresponding to the following ``quadrilateral'' revenue curve, which is obtained from $r_G$ by adding a slight ``bump'' (while maintaining convexity) at $q=0.1$ (see \cref{fig-quadrilateral}):
\begin{figure}[t]
\centering
\begin{tikzpicture}[font=\allfigfont]
\begin{axis}[ymin=0,enlargelimits=false,axis x line*=bottom,axis y line*=left,xtick={0,0.1,0.571429,1}, xticklabels={$0$,$0.1$,$t\approx0.571$,$1$},ytick={0,.22,1},yticklabels={$0$,$0.22$,$1$},clip=false,scale=\allfigscale]
\addplot[domain=0.0001:1, white] {x};
\draw[green, ultra thick] (axis cs: 0,0) -- (axis cs: 1,1);
\draw[blue, ultra thick] (axis cs: 0,0) -- (axis cs: 0.1,0.22) -- (axis cs: 1,1);
\draw[dashed] (axis cs: 1,1) -- (axis cs: 0,1);
\draw[dashed] (axis cs: 1,1) -- (axis cs: 1,0);
\draw[dashed] (axis cs: 0.1,0.22) -- (axis cs: 0,0.22);
\coordinate (q) at (axis cs:0.1,0.22);
\path let \p1 = (q) in coordinate (qq) at (\x1, 0);
\draw[dashed] (q) -- (qq);
\draw[dashed] (q) -- (axis cs: 0,0);
\draw[red,thick] ($(q)!.25!(qq) + (axis cs:-0.025,0.01)$) -- ++(axis cs:0.05,0) ++(axis cs:0,-.02) -- ++(axis cs:-0.05,0);
\draw[red,thick] ($(q)!.75!(qq) + (axis cs:-0.025,0.01)$) -- ++(axis cs:0.05,0) ++(axis cs:0,-.02) -- ++(axis cs:-0.05,0);
\draw[dashed] (axis cs: 0.571429,0.628571) -- (axis cs: 0,0);
\draw[dashed] (axis cs: 0.571429,0.628571) -- (axis cs: 0.571429,0);
\end{axis}
\end{tikzpicture}
\caption{The revenue curves $r_F$ (blue) and $r_G$ (green) corresponding to the distributions $F$ and $G$ from \cref{increase-area}. The quantile $t$ is the one that satisfies $2\cdot v_F(t)=v_F(0.22)$; that is, $t$ is the $q$-coordinate of the intersection point of $r_F$ and the ray from $(0,0)$ with slope exactly half that of the ray from $(0,0)$ to $(0.1,0.22)$.}\label{fig-quadrilateral}
\end{figure}
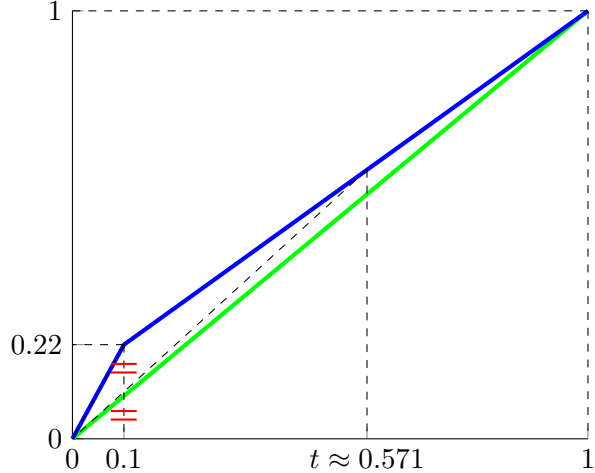%
\[
r_F(q)=\begin{cases} 2.2\cdot q & q\le0.1 \\ 1-0.78\cdot\frac{1-q}{0.9} & \text{otherwise}.\end{cases}
\]
By construction $\ERM(F,1)=\int r_F(q)\,dq>\int r_G(q)\,dq = \ERM(G,1)$.
It remains to show that $\ERM(F,2)<\ERM(G,2)$.\footnote{As in our proof of \cref{more-samples-bad}, in order to avoid atoms in $F$ and in $G$, we uniformly spread the mass of the atom at the highest value~$v$ in the support of each distribution throughout the interval $[v,v+\varepsilon]$, corresponds to replacing the first linear climb in each of their revenue curves with a slightly curved convex climb.}

We start with $G$ (which corresponds to the uniform distribution over $[1,1+\varepsilon]$ for negligible epsilon). By a simple calculation we have for one sample that $\ERM(G,1)=\nicefrac{1}{2}$ and for two samples that $\ERM(G,2)=\expect{q_1,q_2\in U([0,1])}{\max\bigl\{r(q_1),r(q_2)\bigr\}}=\nicefrac{2}{3}$. We now move on to $F$; the intuition behind our construction is that by adding the ``bump'' at $0.1$, while we do increase revenue by slightly raising $\max\bigl\{r(q_1),r(q_2)\bigr\}$ above $\nicefrac{2}{3}$, we in fact decrease the revenue by ``confusing'' the ERM algorithm and causing it to choose the higher price of the two samples, which corresponds to the lower revenue $\min\bigl\{r(q_1),r(q_2)\bigr\}$, in some cases, such as whenever the quantile of one of the samples is below $0.1$ and the quantile of the other is above $t\approx0.571$ (since in this case $v\bigl(\min\{q_1,q_2\}\bigr)>2\cdot v\bigl(\max\{q_1,q_2\}\bigr)$). As it turns out, the latter effect dominates the former one, causing an overall decrease in expected revenue compared to that of $F$. In \cref{s:calc-increase-area}, we calculate and show that indeed $\ERM(F,2)<0.651<\ERM(G,2)$, thereby completing the proof.
\end{proof}

\section{Main Result}\label{main}

In this section, we phrase and prove our main result.

\begin{theorem}\label{more-than-half}
For every regular distribution $F$, we have that $\ERM(F,2)>0.509\cdot\OPT(F)$.
\end{theorem}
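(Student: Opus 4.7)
By \cref{e2-homogeneous} we may assume without loss of generality that $\OPT(F) = 1$, and fix a quantile $q_* \in (0,1]$ with $r(q_*) = 1$, where $r = r_F$ is the (concave) revenue curve. Regularity then gives $r(0) = 0$, that $v(q) = r(q)/q$ is strictly decreasing on $(0,1]$, and that $r$ is non-decreasing on $[0,q_*]$ and non-increasing on $[q_*,1]$; these structural facts are the main tools throughout.

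\textbf{Three-region decomposition.} The plan is to partition $[0,1]^2$ into $R = [q_*, 1]^2$ (both samples right of the peak), $L = [0, q_*]^2$ (both left), and $B = [0,1]^2 \setminus (R \cup L)$ (one on each side), and to lower-bound each of $\iint_R e_2$, $\iint_L e_2$, $\iint_B e_2$ separately; these three bounds are presumably the content of \cref{sec-r,sec-l,sec-b}. In region $R$, where $r$ is non-increasing with $r(q_*)=1$ and $r(1)=v(1)$, values lie in $[v(1), 1/q_*]$, the factor-of-$2$ clause of $e_2$ triggers only when this ratio exceeds $2$, and the contribution can be lower-bounded by comparing $r$ on $[q_*, 1]$ to concave interpolants pinned to its endpoints. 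In region $L$, $v(q)$ is unbounded above as $q \to 0$, so the factor-of-$2$ clause is the dominant consideration; the natural reparameterization is via the half-value quantile $\psi(q)$ defined by $v(\psi(q)) = v(q)/2$, splitting the integral into $\{q_{\max} \geq \psi(q_{\min})\}$ (where $e_2$ equals the smaller-quantile revenue $r(q_{\min})$) and its complement (where $e_2 = r(q_{\max})$). In region $B$ the sample left of $q_*$ always has the higher value, and one again splits on whether that value exceeds twice the value of the sample on the right.

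\textbf{Combining and the main obstacle.} Summing the three regional lower bounds produces a functional lower bound on $\ERM(F,2)$ depending on the shape of $r$; taking the infimum over regular $r$ with $\OPT(r) = 1$ is then a variational problem that should, by standard convex-analysis arguments, reduce to curves that are piecewise linear with only a few break points (parameterized essentially by $q_*$, $v(1)$, and $\psi^{-1}(q_*)$), so that the final infimum can be certified to exceed $0.509$ by a finite-dimensional optimization. The main obstacle is region $L$: this is precisely where the \emph{confusion} phenomenon of \cref{increase-area} lives, since the factor-of-$2$ threshold can cause ERM to overprice in a way that is \emph{not} dominated by the triangular worst case. Because \cref{more-samples-bad,switch,increase-area} preclude a clean reduction to triangular curves via smoothing or monotonicity, one cannot simply absorb $L$ into a global convexity argument; instead the bound on $L$ must be obtained by a direct integration that tracks $r$ on $[0, q_*]$ carefully near the quantile $\psi^{-1}(q_*)$ where the two branches of $e_2$ meet. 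Extracting any strictly positive slack over $1/2$ --- let alone the $0.009$ needed for $0.509$ --- requires a sharp treatment of $L$ matched by appropriate contributions from $R$ and $B$, and the delicacy of this coupling is plausibly why the authors remark that extending the method even to $n = 3$ samples appears intractable.
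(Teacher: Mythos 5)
You have the right organizing idea: normalize $\OPT(F)=1$, split $[0,1]^2$ into $R$, $L$, $B$ by the peak quantile $\qo$, and lower-bound each regional contribution; this matches the paper's decomposition, and your reading of where the factor-of-$2$ threshold bites in each region is correct. However, there is a genuine gap at the two steps that actually do the work. First, the combining strategy you propose --- treating the infimum over regular $r$ as a variational problem that ``standard convex-analysis arguments'' reduce to piecewise-linear curves with break points at $\qo$, $v(1)$, and $\psi^{-1}(\qo)$ --- is precisely the route the paper discusses and rejects: no triangular revenue curve is a worst case for two samples (a quadrilateral bump does strictly worse, as your own \cref{increase-area} shows), and the authors explicitly state they do not know the worst-case distributions. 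The failure of monotonicity exhibited by \cref{more-samples-bad,switch,increase-area} is exactly what blocks the usual extremal-curve argument; you acknowledge this tension in the next sentence but then do not resolve it, and without a concrete reason the extremizer should be piecewise linear with those particular break points, the finite-dimensional optimization you invoke is unjustified.

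Second, and more fundamentally, the proposal supplies no concrete lower bounds for the three regional contributions, which is where the real difficulty lies. In the paper these are \emph{not} obtained by an extremal-curve reduction: each regional conditional expectation is bounded below by an explicit function of $\qo$ alone, uniformly over all regular $r$ with that peak, by conditioning on the extreme sample's quantile $q$ and analyzing a threshold $t(q)$ on the other sample; one then minimizes $(1-\qo)^2\expect{U(R)}{e_2}+\qo^2\expect{U(L)}{e_2}+2\qo(1-\qo)\expect{U(B)}{e_2}$ over $\qo\in[0,1]$, a one-variable calculus problem. The key device your sketch lacks --- particularly for $L$, which you correctly flag as the obstacle --- is the paper's win-win dichotomy: it couples $E_2^L$ to the benchmark $M_2^L=\expect{(q_1,q_2)\sim U(L)}{\max\{r(q_1),r(q_2)\}}$ via $E_2^L\ge\frac{3}{4}M_2^L$, then splits on whether $r(\nicefrac{1}{2})\ge(1+\delta)\cdot\nicefrac{1}{2}$ (so $r$ is ``far from linear'' and $M_2^L\ge\nicefrac{2}{3}+\nicefrac{\delta}{4}$ already beats $\nicefrac{2}{3}$) or not (so $r$ is ``far from constant'' and the confusion threshold is bounded by $\gamma=\delta/(1+\delta)$, forcing $E_2^L$ close to $M_2^L$). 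A ``direct integration near $\psi^{-1}(\qo)$'' with no such case split gives no handle on the worst-case shape of $r$ on $[0,\qo]$, and as written the $0.509$ bound is not established.
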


\cref{more-than-half} is the first part of \cref{intro-theorem} from the introduction. Combining \cref{more-than-half} with \cref{one-half}, we obtain our main monotonicity result (the second part of \cref{intro-theorem} from the introduction).

\begin{corollary}\label[corollary]{monotonicity}
$\inf_{\text{regular $F$}}\frac{\ERM(F,2)}{OPT(F)}>\inf_{\text{regular $F$}}\frac{\ERM(F,1)}{OPT(F)}$.
\end{corollary}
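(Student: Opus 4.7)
The plan is to normalize, via \cref{e2-homogeneous}, so that $\OPT(r)=1$, and to let $\qo\in(0,1]$ be a quantile at which $r(\qo)=1$. Since $r$ is concave with $r(0)=0$, the value function $v(q)=r(q)/q$ is non-increasing, so for any $q_1<q_2$ we have $v(q_1)\ge v(q_2)$, and the rule defining $e_2$ returns $r(q_1)$ when $v(q_1)\ge 2v(q_2)$ and $r(q_2)$ otherwise. Writing
\[\ERM(F,2)=2\int_0^1\int_0^{q_2}\!e_2(q_1,q_2)\,dq_1\,dq_2,\]
I would partition the domain $\{q_1<q_2\}\subset[0,1]^2$ into a left piece $L=\{q_2\le\qo\}$, a right piece $R=\{q_1\ge\qo\}$, and a both-sides piece $B=\{q_1<\qo<q_2\}$, matching the three analysis sections \cref{sec-l,sec-r,sec-b}.

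On $R$, $r$ is decreasing past $\qo$, so $r(q_1)>r(q_2)$ and the $2\times$ rule is \emph{beneficial}: whenever $v(q_1)\ge 2v(q_2)$, ERM collects the larger $r(q_1)$ rather than the smaller $r(q_2)$. This yields a lower bound on $\int_R e_2$ that strictly exceeds $\int_R r(q_2)\,dq_1\,dq_2$ by an amount controlled by $\qo$ and the slope of $r$ past $\qo$. On $L$, $r$ is increasing, so $r(q_2)>r(q_1)$; the $2\times$ rule is mildly \emph{harmful} when it fires, but firing requires $v(q_1)\ge 2v(q_2)$ with both samples left of $\qo$, which forces $r$ to deviate substantially from the chord through $(0,0)$ and $(\qo,1)$ on $[0,\qo]$ --- a geometric constraint that bounds the loss. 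A careful pointwise analysis should then show that the combined contribution from $L\cup R$ strictly exceeds the corresponding one-sample area $\int_0^{\qo}\!r+\int_{\qo}^{1}\!r$ by a quantifiable margin.

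The delicate region is $B$. Here $r(q_1)$ and $r(q_2)$ lie on opposite sides of the peak and can be ordered either way, so when the $2\times$ rule fires ERM may collect $r(q_1)\ll r(q_2)$ --- this is precisely the confusion mechanism exploited by the negative examples in \cref{more-samples-bad,switch,increase-area}. I would lower-bound $\int_B e_2$ in terms of $\qo$ and the one-sided chord slopes of $r$ at $\qo$, accounting for this potential loss. Combining all three bounds expresses a lower bound on $\ERM(F,2)/\OPT(r)$ as a function of a few shape parameters of $r$. A standard convexification / extremality argument should then reduce the outer minimization over all concave $r$ to a finite family of piecewise-linear curves (triangles and ``kites''), on which the optimization becomes an explicit calculation yielding an absolute constant strictly greater than $\nicefrac{1}{2}$, with the paper's numerical target being $0.509$.

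The main obstacle is $B$. Because $e_2$ is discontinuous at the threshold $v(q_1)=2v(q_2)$ and non-monotone in the shape of $r$ (cf.\ \cref{increase-area}, where raising $r$ pointwise strictly \emph{decreases} $\ERM(F,2)$), one cannot reduce to a single triangular worst case by a pointwise comparison as in the proof of \cref{one-half}. The technical heart of the proof is to show, uniformly over \emph{all} regular $r$, that the loss suffered on $B$ is strictly dominated by the gains accumulated on $L\cup R$, and to do so with a quantitative gap. This uniform balancing is what drives the detailed case analysis in \cref{sec-r,sec-l,sec-b}, and presumably also explains why, as the introduction notes, extending the method to three or more samples appears intractable.
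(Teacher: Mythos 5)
Your decomposition into $L$, $R$, and $B$ and your reading of when the ``$2\times$ rule'' helps ($R$) or hurts ($L$, $B$) match the paper's strategy for \cref{more-than-half} exactly, and you correctly identify $B$ as the region where the bound dips below $\nicefrac{1}{2}$ and must be charged against gains elsewhere. (For the corollary itself, note that you also need the \emph{tightness} of the one-sample bound --- the second sentence of \cref{one-half} --- since a strict inequality between the two infima requires knowing the right-hand side equals exactly $\nicefrac{1}{2}$, not merely that it is at least $\nicefrac{1}{2}$.)

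There is, however, a genuine gap in your proposed endgame. You close with ``a standard convexification / extremality argument should then reduce the outer minimization over all concave $r$ to a finite family of piecewise-linear curves (triangles and `kites'), on which the optimization becomes an explicit calculation.'' No such reduction is available, and the paper explicitly disclaims it: \cref{increase-area} shows that $\ERM(\cdot,2)$ is not monotone under pointwise domination of revenue curves, the worst triangular curve (identified by Huang, Mansour, and Roughgarden) is provably \emph{not} a global worst case because adding a small bump strictly decreases the ratio, and the authors state that they do not know how to identify the worst-case distributions even for two samples. You cannot invoke extremality of piecewise-linear curves because $e_2$ depends on $r$ through the value function $v(q)=r(q)/q$ in a way that is neither convex nor monotone in $r$; this is the very obstruction you diagnose in your final paragraph, and it applies equally to your own proposed reduction. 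What the paper does instead --- and what is missing from your sketch --- is a direct lower bound valid for \emph{every} regular $r$: in each region one introduces the threshold quantile $t(q)$ at which the $2\times$ rule switches, exploits the win-win tradeoff (either $t(q)$ is close to $q$, so the better sample is usually chosen, or $r(t(q))$ is large by concavity, so the worse sample is not much worse), and in region $B$ bounds the threshold revenue away from $0$ by $m=\frac{1}{1+\qo}$ using concavity of $r$ on $[\qo,1]$. The resulting bounds are explicit functions of $\qo$ that are then balanced in the final calculation. Without carrying out these quantitative steps (or something equivalent), the plan does not yield the constant $>\nicefrac{1}{2}$ that the corollary requires.
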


As noted in the introduction, proving \cref{more-than-half} turns out to be considerably more technically challenging than may have been expected. One hint as to why was already given in \cref{increase-area}. To prove \cref{more-than-half}, we lower-bound the integral that defines $\ERM(F,2)$ by estimating it over three domains: first, we estimate this integral conditioned upon the two samples corresponding to prices lower than the ideal posted price; second, we estimate this integral condition upon the two samples corresponding to prices higher than the ideal posted price; and finally, we estimate this integral conditioned upon the two samples falling on opposite sides of the ideal posted price. Estimating each of these integrals is quite involved. For the first two domains, we manage to show that ERM guarantees quantifiably more than one half of the optimal revenue, while unfortunately for the third domain, we do not manage to show that ERM guarantees even half of the optimal revenue, forcing us to estimate the integral for the first two domains tightly enough to enable us to argue that the losses in this third domain could be charged to the gains in the first two domains. To balance the charging argument, we must utilize quite a few observations in each domain, and furthermore estimate the integral in the first and last domains functions of the quantile of the ideal price. Putting all of these together, we manage to show an overall guarantee of at least $0.509$ of $OPT(F)$ --- quite close to one half, but nonetheless strictly bounded away from one half (and still greater by orders of magnitude than the guarantee of $(0.5+5\times10^{-9})\cdot OPT(F)$ that \citet{FILS15} show for their randomized mechanism for one sample).

One possible way to approach \cref{more-than-half} (and more generally, \cref{all-n}) could have been to try and identify the worst-case distributions for $2$ samples (and more generally, for $n$ samples), i.e., distributions $F$ for which $\frac{\ERM(F,2)}{\OPT(F)}$ (and more generally, $\frac{\ERM(F,n)}{\OPT(F)}$) is smallest, and then to calculate this fraction for such distributions $F$. Indeed, recall that this is how \citet{DRY10} have proved \cref{one-half}: they have identified the distributions with triangular revenue curves as the worst-case distributions for one sample, and showed that for these distributions this fraction equals one half. Unfortunately, following this path, even for two samples, turns out quite elusive. In this vain, hoping that some distribution with a triangular revenue curve continues to be a worst-case distribution, Huang, Mansour, and Roughgarden (personal communication, 2015) have identified the single distribution with triangular revenue curve for which this fraction is smallest among all distributions with triangular revenue curves, and have calculated this fraction for this distribution (incidentally, it turns out to be considerably higher than our lower bound from \cref{more-than-half}). Unfortunately, in light of our proof of \cref{increase-area}, it seems that one can show that for some distribution with a quadrilateral revenue curve (created, similarly to the proof of \cref{increase-area}, by adding a small ``bump'' to the ``left edge'' of the triangular revenue curve identified by Huang, Mansour, and Roughgarden), this fraction turns out to be smaller than for the distribution identified by Huang, Mansour, and Roughgarden, and hence we conclude that no distribution with a triangular revenue curve is a worst-case distribution for ERM for two samples. As we do not know how to identify the worst-case distributions for ERM, even for two samples, our analysis must bound the fraction $\frac{\ERM(F,2)}{\OPT(F)}$ for arbitrary regular distributions.

In the remainder of this section, we survey the high-level ideas behind the proof of \cref{more-than-half}, whose full proof is given in \cref{sec-l,sec-r,sec-b}, with some calculations relegated to \cref{s:calc}. Formally, we partition the set of pairs of quantiles as follows:

\begin{definition}[$L$; $R$; $B$]
For a regular distribution $F$ with revenue curve $r$, we define\footnote{In case of multiple revenue-maximizing quantiles, we may pick $\qo$ arbitrarily among them.} $\qo\eqdef\arg\max_{q\in[0,1]}r(q)$ and partition $[0,1]^2$ into three sets:
\begin{enumerate}
\item
$R \eqdef\bigl\{(q_1,q_2)\in[0,1]^2~\big|~\min\{q_1,q_2\}\ge \qo\bigr\}$,
\item
$L\eqdef\bigl\{(q_1,q_2)\in[0,1]^2~\big|~\max\{q_1,q_2\}< \qo\bigr\}$, and
\item
$B \eqdef [0,1]^2\setminus(L\cup R)$.
\end{enumerate}
\end{definition}

To prove \cref{more-than-half}, we will lower-bound the expected revenue of a price chosen by the ERM algorithm given two samples, conditioned upon these two samples belonging to each of the sets $R$, $L$, and $B$.

\subsection{Both Samples Below the Ideal Price (\texorpdfstring{\hspace{-.2ex}$\mathbfit{R}$}{R})}
In \cref{sec-r}, we lower-bound the expected revenue, conditioned upon both samples being lower than the ideal price $\qo$:

\begin{lemma}\label[lemma]{l:right}
For every regular $F$,
\[
\expect{(q_1,q_2)\sim U(R)}{e_2(q_1,q_2)} \ge
\begin{cases}
\!\left(\dfrac{1}{3}-\dfrac{1}{4(1-\qo)}-\dfrac{1}{2(1-\qo)^2}-\dfrac{\log\qo}{2(1-\qo)^3}\right)\!\cdot\OPT(F)
& \qo\!\geq\!\nicefrac{2}{3},\\
\!\left(\dfrac{2}{3}-\dfrac{1}{(1-\qo)^3}\left(\dfrac{2}{9}-\left(\dfrac{\qo}{4}\right)^2+\dfrac{1}{3}\left(\dfrac{\qo}{2}\right)^3+\dfrac{\log\nicefrac{2}{3}}{2}\right)\!\right)\!\cdot\OPT(F)
\!\!& \qo\!<\!\nicefrac{2}{3}.
\end{cases}
\]
\end{lemma}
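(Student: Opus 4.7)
The plan is to normalize $\OPT(F)=1$ via \cref{e2-homogeneous}, so that $r(\qo)=1$ and $r\le 1$ throughout. On $[\qo,1]$, concavity together with the maximality of $r$ at $\qo$ implies that $r$ is non-increasing; and, using $r(1)\ge 0$, the chord inequality gives $r(q)\ge (1-q)/(1-\qo)$. The value function $v(q)=r(q)/q$ is non-increasing on $[0,1]$, so for $(q_1,q_2)\in R$ with $u:=\min\{q_1,q_2\}$ and $w:=\max\{q_1,q_2\}$ we have $v(u)\ge v(w)$; hence $e_2(u,w)=r(u)$ when $v(u)\ge 2v(w)$ (the \emph{trigger} case) and $e_2(u,w)=r(w)$ otherwise. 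By symmetry,
\[\expect{(q_1,q_2)\sim U(R)}{e_2(q_1,q_2)}=\tfrac{2}{(1-\qo)^2}\int_{\qo}^{1}\!\int_{\qo}^{w} e_2(u,w)\,du\,dw.\]

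I would then write $e_2(u,w)\ge r(w)+\bigl(r(u)-r(w)\bigr)\cdot\mathbf{1}[v(u)\ge 2v(w)]$ and focus on lower-bounding the bonus. Two observations drive the analysis: (a) in the trigger case, $v(u)\ge 2v(w)$ yields $r(u)\ge 2u\,r(w)/w$; (b) whenever $w\ge 2u$, the trigger holds unconditionally, since $v(u)/v(w)=(r(u)/r(w))(w/u)\ge w/u\ge 2$ using monotonicity of $r$ on $[\qo,1]$. Combined with the chord lower bound, one can extract a further sub-region of $R$ on which the trigger is forced purely by the structure of any regular $r$: namely $w\ge 2u(1-\qo)/(1-u)$, obtained by substituting $r(u)\ge (1-u)/(1-\qo)$ and $r(w)\le 1$ into the trigger inequality. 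On each sub-region I would lower-bound $e_2$ using only the chord bound on $r$, producing a pointwise estimate depending on the curve only through that chord.

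Integrating this pointwise lower bound over $R$ then produces the two case-specific expressions in the lemma. The case split at $\qo=\nicefrac{2}{3}$ arises from a boundary analysis: the shapes of the trigger-guaranteed sub-regions relative to the integration domain $[\qo,1]^2$ change according to whether $\qo<\nicefrac{2}{3}$ or $\qo\ge\nicefrac{2}{3}$, and the resulting formulas merge continuously at the boundary. The main obstacle I anticipate is that the naive bound $e_2\ge r(w)$ alone yields only $\expect{}{e_2}\ge\nicefrac{1}{3}$, so the trigger bonus must be exploited sharply in every sub-region in order to lift the bound well above $\nicefrac{1}{2}$; in particular, one cannot avoid tracking the curve's chord lower bound uniformly. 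Once the pointwise lower estimate is in hand, the rest is elementary integration: the $\log\qo$ in the first formula arises from integrating $1/u$ along a trigger boundary, while the $\log(\nicefrac{2}{3})$ in the second formula arises from a $\qo$-independent integration endpoint determined by the trigger condition at $u=\nicefrac{1}{2}$.
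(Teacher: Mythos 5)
The key move in the paper's proof is missing from your outline, and without it the plan cannot reach the stated bound. You want to identify sub-regions of $R$ where the trigger $v(u)\ge 2v(w)$ is forced purely by the chord bound $r(u)\ge\frac{1-u}{1-\qo}$ (together with $r(w)\le 1$ or with monotonicity). But for $u\ge\qo$ your chord-derived threshold $\frac{2u(1-\qo)}{1-u}$ is at least $2u$ (it equals $2u$ at $u=\qo$ and is increasing in $u$), so your forced-trigger set is contained in $\{w\ge 2u\}$, which meets $R=[\qo,1]^2$ only when $\qo<\nicefrac{1}{2}$. For every $\qo\ge\nicefrac{1}{2}$ your decomposition therefore collapses to the naive $e_2\ge r(w)$, which --- as you correctly note --- yields only $\tfrac{1}{3}\cdot\OPT(F)$. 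Yet the lemma's target already at $\qo=\nicefrac{2}{3}$ is about $0.557\cdot\OPT(F)$, and the bottleneck of the whole theorem sits near $\qo\approx 0.71$, squarely in the regime where your approach gives nothing beyond $\nicefrac{1}{3}$. A pointwise lower bound on $e_2$ that depends on $r$ only through its chord cannot detect where the trigger actually fires, and this is fatal.

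What the paper does instead is condition on $q=\min\{q_1,q_2\}$, let $t(q)$ be the \emph{actual}, curve-dependent trigger threshold for $\max\{q_1,q_2\}$, and exploit the boundary identity $r\bigl(t(q)\bigr)=\frac{t(q)}{2q}\,r(q)$ from \crefpart{l:r-propt}{v}. Combined with concavity of $r$ on $[q,t(q)]$, this yields $E_2^R(r\vert q)\ge\frac{r(q)}{1-q}\,P_q\bigl(t(q)\bigr)$ with $P_q(x)=\frac{x^2}{4q}-\frac{3}{4}x+1-\frac{q}{2}$, a bound that quantitatively encodes the win-win tradeoff you allude to (small $t(q)$ gives a large trigger probability; large $t(q)$ gives large $r(t(q))$). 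Only \emph{then} does one minimize $P_q$ over $t(q)\in[q,1]$ --- which is exactly where the case split arises, since the unconstrained minimizer $\frac{3}{2}q$ of $P_q$ lies in $[q,1]$ precisely when $q\le\nicefrac{2}{3}$ --- and finally apply the chord bound on $r(q)$ and integrate against the density of $\min\{q_1,q_2\}$. Your observation (a) is a step in this direction, but you stop short of the essential reduction to a per-$q$ quadratic in the position of the (unknown) trigger threshold, followed by a worst-case over that position.
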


As noted above, it would not have sufficed to simply bound the expectation on the left-hand side of the above inequality away from one half of $\OPT$, as we will have to charge the losses below one-half-of-$\OPT$ of our lower bound for the case where the quantiles are in $B$ (given in \cref{l:both} below), which depend on $\qo$, to the gains above one-half-of-$\OPT$ from \cref{l:right} (and from \cref{l:left} below).
To prove \cref{l:right}, we first lower-bound, for each possible quantile $q$, the expected revenue of the price computed by ERM conditioned upon $\min\{q_1,q_2\}=q$. Recall that since both quantiles are in $R$, we have that $q$ is the quantile of the ``better'' sample, with higher expected revenue. Denoting by $t(q)$ the threshold value of $\max\{q_1,q_2\}$ (see \cref{fig-r-t})
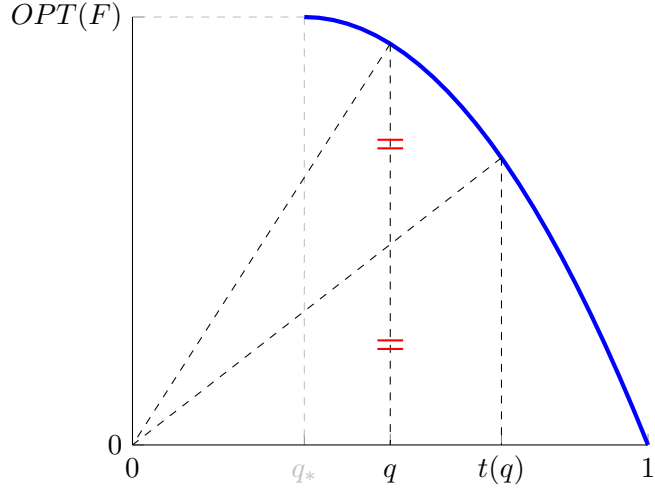
\begin{figure}[t]
\centering
\begin{tikzpicture}[font=\allfigfont]
\begin{axis}[xmin=0,enlargelimits=false,axis x line*=bottom,axis y line*=left,xtick={0,0.3333,0.5,.7157,1}, xticklabels={$0\vphantom{(}$,$\textcolor{lightgray}{\qo\vphantom{(}}$,$q\vphantom{(}$,$t(q)$,$1\vphantom{(}$}, ytick={0,1}, yticklabels={$0$, $OPT(F)$}, clip=false,scale=\allfigscale]
\draw[dashed,color=lightgray] (axis cs: 1/3,1) -- (axis cs: 1/3,0);
\draw[dashed,color=lightgray] (axis cs: 1/3,1) -- (axis cs: 0,1);
\addplot[domain=0.3333:1, blue, ultra thick] {(4/3*(1-x) - (1-x)^2)*9/4};
\coordinate (q) at (axis cs:0.5,.9375);
\path let \p1 = (q) in coordinate (qq) at (\x1, 0);
\draw[dashed] (q) -- (qq);
\draw[dashed] (q) -- (axis cs: 0,0);
\draw[red,thick] ($(q)!.25!(qq) + (axis cs:-0.025,0.01)$) -- ++(axis cs:0.05,0) ++(axis cs:0,-.02) -- ++(axis cs:-0.05,0);
\draw[red,thick] ($(q)!.75!(qq) + (axis cs:-0.025,0.01)$) -- ++(axis cs:0.05,0) ++(axis cs:0,-.02) -- ++(axis cs:-0.05,0);
\coordinate (t) at (axis cs:{(0.5625+sqrt(0.5625^2+27/4))/4.5}, {0.9375*(0.5625+sqrt(0.5625^2+27/4))/4.5});
\draw[dashed] (t) -- (axis cs:0,0);
\path let \p1 = (t) in coordinate (tt) at (\x1, 0);
\draw[dashed] (t) -- (tt);
\end{axis}
\end{tikzpicture}
\caption{Pictorial definition of $t(q)$
from the proof of \cref{l:right}. When $r$ is continuous with $r(1)=0$, then $t(q)$ is the $q$-coordinate of the intersection point of $r$ and the ray from $(0,0)$ with slope exactly half that of the ray from $(0,0)$ to $\bigl(q,r(q)\bigr)$.}\label{fig-r-t}
\end{figure}%
so that the ERM algorithm chooses the price that corresponds to quantile $q$ (the price that attains better expected revenue among the two samples), it turns out that we are in a win-win situation: if $t(q)$ is ``close'' to $q$, then conditioned upon $\min\{q_1,q_2\}=q$, there is a high probability that the higher quantile lies above $t(q)$, causing the ``better'' sample (with quantile $q$) to be chosen; conversely, the farther $t(q)$ is from $q$, the larger $r\bigl(t(q)\bigr)$ is, i.e., the revenue curve $r$ decreases quite moderately between $\qo$ and $t(q)$, and so even when the ``worse'' sample is chosen, the revenue is still reasonably high. The full proof of \cref{l:right} is given in \cref{sec-r}, with some calculations relegated to \cref{s:calc-right}.

\subsection{Both Samples Above the Ideal Price (\texorpdfstring{\hspace{-.2ex}$\mathbfit{L}$}{L})}

In \cref{sec-l}, we lower-bound the expected revenue, conditioned upon both samples being higher than the ideal price $\qo$:

\begin{lemma}\label[lemma]{l:left}
For every regular $F$,
\[\expect{(q_1,q_2)\sim U(L)}{e_2(q_1,q_2)} \ge 0.528\cdot\OPT(F).\]
\end{lemma}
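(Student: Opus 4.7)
The plan is to normalize so that $\OPT(F)=1$, which is without loss of generality by \cref{e2-homogeneous}, and then to work with the revenue curve $r$ restricted to $[0,\qo]$, on which $r$ is concave and non-decreasing with $r(\qo)=1$ and $r(0)\geq 0$. Concavity together with $r(0)\geq 0$ yields the chord inequality $r(q)\geq q/\qo$ for every $q\in[0,\qo]$.

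For an ordered pair $a=\min\{q_1,q_2\}\leq b=\max\{q_1,q_2\}$ in $L$, the ERM rule returns revenue $r(a)$ in \emph{case A} (when $v(a)\geq 2v(b)$) and revenue $r(b)\geq r(a)$ in \emph{case B}. First I would extract two consequences of the case-A condition: that $b\geq 2a$ (since the case-A bound $r(a)/r(b)\geq 2a/b$ together with $r(a)\leq r(b)$ forces $2a/b\leq 1$) and that $r(a)\geq (2a/b)r(b)$. Together these give the uniform pointwise bound
\[
e_2(q_1,q_2)\ \geq\ \min\!\left(1,\tfrac{2a}{b}\right)\cdot r(b),
\]
valid in both cases. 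Integrating over $L$ using the chord bound $r(b)\geq b/\qo$ then produces the elementary half guarantee
\[
\expect{U(L)}{e_2}\ \geq\ \tfrac{3}{4}\,\expect{U(L)}{r(b)}\ \geq\ \tfrac{3}{4}\cdot\tfrac{2}{3}\ =\ \tfrac{1}{2}.
\]

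To close the gap to $0.528$, I would split into two regimes according to the value $r(0)$. When $r(0)$ exceeds a small absolute constant $c_0$, the strengthened chord $r(q)\geq r(0)+(1-r(0))q/\qo$ lifts $\expect{U(L)}{r(b)}$ to $r(0)/3+2/3$, which bootstraps the preceding argument to $\expect{U(L)}{e_2}\geq r(0)/4+1/2$; this already exceeds $0.528$ once $r(0)\geq 0.112$. In the complementary regime $r(0)<c_0$, the curve must lie close to the linear chord $q/\qo$ near $0$, which restricts where case A can have positive measure. In that regime the argument should exploit the fact that whenever case A is triggered, $r$ must be strictly concave on the relevant interval, which in turn forces $\expect{U(L)}{r(b)}$ to sit strictly above $\tfrac{2}{3}$ by an amount that dominates the corresponding loss $\expect{U(L)}{(r(b)-r(a))\,\mathbf{1}_{\text{case A}}}$.

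The main obstacle I anticipate is making this last ``strict-concavity-beats-case-A-loss'' tradeoff quantitative. A natural route is to introduce a secondary parameter such as $r(\qo/2)$ and sub-divide: when $r(\qo/2)$ is close to its chord value $(1+r(0))/2$ the curve is near-linear and case A contributes negligibly to the loss; when $r(\qo/2)$ sits noticeably above its chord, the additional concavity gives a quantitative lift to $\expect{U(L)}{r(b)}$ that more than compensates for the loss. The constant $0.528$ should then emerge as the worst-case balance of these refined estimates across all admissible choices of $r(0)$, $r(\qo/2)$, and $\qo$, rather than as an intrinsic sharp threshold of the problem.
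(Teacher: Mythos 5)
Your opening moves are sound and essentially coincide with the paper's: the pointwise bound $e_2\geq\min(1,2a/b)\cdot r(b)$ integrates (over $a$ uniform on $[0,b]$) to exactly $\tfrac{3}{4}\,r(b)$, which is the content of the paper's \cref{c:coup}, and combined with $\expect{}{r(\max\{q_1,q_2\})}\geq\tfrac{2}{3}$ this correctly recovers the baseline of $\tfrac{1}{2}$. The gap is in the refinement to $0.528$. Your first regime ($r(0)\geq 0.112$) is computed correctly but is nearly vacuous: the hard instances (e.g.\ near-triangular revenue curves) all have $r(0)=0$, so essentially the entire difficulty lands in regime two. There, the claim that $r(0)<c_0$ forces the curve to ``lie close to the linear chord near $0$'' and hence restricts where case A has positive measure is false: $r(q)=\sqrt{q/\qo}$ has $r(0)=0$, yet $v(q)\to\infty$ as $q\to 0$ and case A fires whenever $a\leq b/4$. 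Small $r(0)$ constrains neither the concavity of $r$ nor the measure of case A.

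The quantitative tradeoff you defer to your last paragraph is precisely the content of the paper's proof, not a finishing touch. After normalizing $\qo=1$, the paper splits on whether $r(\tfrac{1}{2})\geq(1+\delta)\cdot\tfrac{1}{2}$: if so, $\expect{}{r(\max\{q_1,q_2\})}\geq\tfrac{2}{3}+\tfrac{\delta}{4}$ and the flat $\tfrac{3}{4}$ factor suffices; if not, concavity traps $r$ on $[0,\tfrac{1}{2}]$ below the line through $(\tfrac{1}{2},(1+\delta)\tfrac{1}{2})$ and $(1,1)$, which bounds the case-A threshold $t(q)$ by $\gamma=\delta/(1+\delta)$ uniformly in $q$, and one then redoes the conditional integration with the factor $(t/q)^2-t/q+1$ in place of $\tfrac{3}{4}$. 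Note that your uniform pointwise bound $\min(1,2a/b)\,r(b)$ discards exactly the information needed here: it applies the $2a/b$ discount even when case B holds, so it integrates to $\tfrac{3}{4}\expect{}{r(\max\{q_1,q_2\})}$ no matter how rare case A is, and can never certify a factor better than $\tfrac{3}{4}$. Optimizing $\delta$ ($\delta\approx 0.151$) then yields $0.528$. So your closing sketch points in the right direction, but the step you flag as ``the main obstacle'' is the theorem; as written the proposal does not establish the $0.528$ bound.
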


To survey the proof of \cref{l:left}, we define:
\begin{itemize}
\item $E_2^L(r)=\expect{(q_1,q_2)\sim U(L)}{e_2(q_1,q_2)}$ --- the expected revenue of ERM,
\item $M_2^L(r)=\expect{(q_1,q_2)\sim U(L)}{\max\bigl\{r(q_1),r(q_2)\bigr\}}$ --- the expected revenue had we always picked the ``better'' sample (the sample with higher expected revenue).
\end{itemize}
The proof of \cref{l:left} is based on a coupling of $E_2^L$ and $M_2^L$, giving that
\begin{itemize}
\item
$E_2^L(r)\geq \frac{3}{4}\cdot M_2^L(r)$, with equality if and only if $r$ is constant\footnote{That is, $\forall q:~ r(q)= \OPT(r)$.}.
\end{itemize}
The proof of this inequality is similar to the proof of \cref{l:right}, and is achieved by lower-bounding, for each possible quantile $q$, the expected revenue of the price computed by ERM conditioned upon $\max\{q_1,q_2\}=q$. Recall that since both quantiles are in $L$, we have that $q$ is the quantile of the ``better'' sample, with higher expected revenue. Denoting by $t(q)$ the threshold value of $\min\{q_1,q_2\}$ (see \cref{fig-l-t})
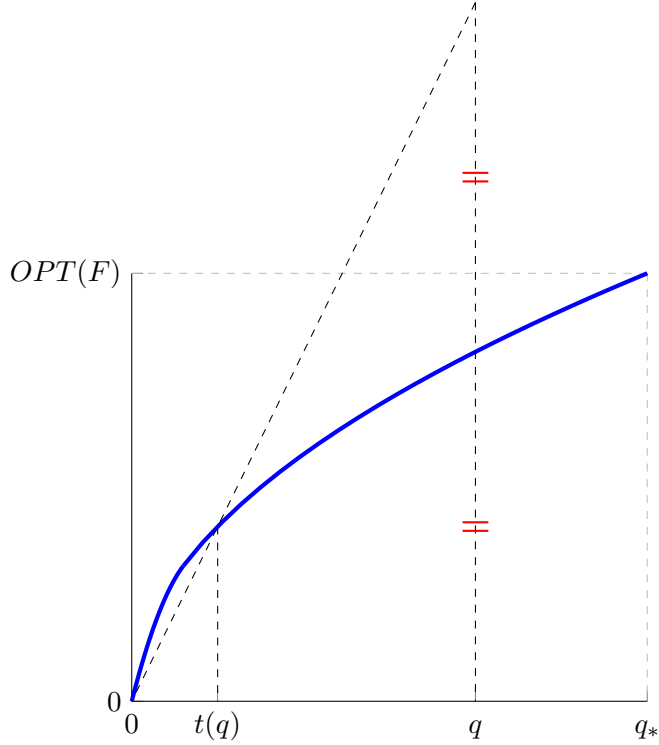
\begin{figure}[t]
\centering
\begin{tikzpicture}[font=\allfigfont]
\begin{axis}[enlargelimits=false,axis x line*=bottom,axis y line*=left,xtick={0,0.1666,.6666,1}, xticklabels={$0\vphantom{(}$,$t(q)$,$q\vphantom{(}$,$\qo\vphantom{(}$}, ytick={0,1}, yticklabels={$0$, $OPT(F)$}, clip=false,scale=\allfigscale]
\draw[dashed,color=lightgray] (axis cs: 1,1) -- (axis cs: 1,0);
\draw[dashed,color=lightgray] (axis cs: 1,1) -- (axis cs: 0,1);
\addplot[domain=0:0.1, blue, ultra thick] {-15.8114*x^2+4.74342*x};
\addplot[domain=0.1:1, blue, ultra thick] {sqrt(x)};
\coordinate (twiceq) at (axis cs: 2/3,{2*sqrt(2/3)});
\path let \p1 = (twiceq) in coordinate (qq) at (\x1, 0);
\draw[dashed] (twiceq) -- (qq);
\draw[dashed] (twiceq) -- (axis cs: 0,0);
\draw[red,thick] ($(twiceq)!.25!(qq) + (axis cs:-0.025,0.01)$) -- ++(axis cs:0.05,0) ++(axis cs:0,-.02) -- ++(axis cs:-0.05,0);
\draw[red,thick] ($(twiceq)!.75!(qq) + (axis cs:-0.025,0.01)$) -- ++(axis cs:0.05,0) ++(axis cs:0,-.02) -- ++(axis cs:-0.05,0);
\coordinate (t) at (axis cs:1/6, {sqrt(1/6)});
\path let \p1 = (t) in coordinate (tt) at (\x1, 0);
\draw[dashed] (t) -- (tt);
\end{axis}
\end{tikzpicture}
\caption{Pictorial definition of $t$
from the proof of \cref{l:left}. $t(q)$ is the $q$-coordinate of the rightmost intersection point of $r$ and the ray from $(0,0)$ with slope exactly twice that of the ray from $(0,0)$ to $\bigl(q,r(q)\bigr)$.}\label{fig-l-t}
\end{figure}%
so that the ERM algorithm chooses the price that corresponds to quantile $q$ (the price that attains better expected revenue among the two samples), we show that we are once again in a win-win situation: if $t(q)$ is ``close'' to $q$, then conditioned upon $\max\{q_1,q_2\}=q$, there is a high probability that the lower quantile lies below $t(q)$, causing the ``better'' sample (with quantile~$q$) to be chosen; conversely, the farther $t(q)$ is from $q$, the larger $r\bigl(t(q)\bigr)$ is, i.e., the revenue curve $r$ decreases quite moderately between $\qo$ and $t(q)$, and so even when the ``worse'' sample is chosen, the revenue is still reasonably close to that of the ``better'' sample.

Unlike in the proof of \cref{l:right}, the proof of \cref{l:left} requires some additional case-analysis beyond this point. A simple calculation shows that
\begin{itemize}
\item
$M_2^L(r)\geq \frac{2}{3}\cdot \OPT(r)$, with equality if and only if $r$ is linear\footnote{That is, $\exists m > 0$, so that  $\forall q:~ r(q)=m\cdot q$.}.
\end{itemize}
The observations in the two ``bullets'' above lead to the following win-win situation:
\begin{itemize}
\item Either $r$ is ``far from linear'', and then $M_2^L(r) >\bigl(\frac{2}{3}+\varepsilon\bigr)\cdot OPT(r)$, giving
\[E_2^L(r)\geq \tfrac{3}{4}\cdot M_2^L(r) > \tfrac{3}{4}\cdot\bigl(\tfrac{2}{3}+\varepsilon\bigr)\cdot OPT(r)> \tfrac{1}{2}\cdot OPT(r),\]
\item  or $r$ is ``far from constant'', and then $E_2^L(r)\geq\bigl(\frac{3}{4}+\varepsilon\bigr)\cdot M_2^L(r)$, giving
\[E_2^L(r)\geq\bigl(\tfrac{3}{4}+\varepsilon\bigr)\cdot M_2^L(r) > \bigl(\tfrac{3}{4}+\varepsilon\bigr)\cdot\tfrac{2}{3}\cdot OPT(r)>\tfrac{1}{2}\cdot OPT(r).\]
\end{itemize}

The full proof of \cref{l:left} is given in \cref{sec-l}, with some calculations relegated to \cref{s:calc-left}.

\subsection{One Sample on Each Side of the Ideal Price (\texorpdfstring{\hspace{-.2ex}$\mathbfit{B}$}{B})}

In \cref{sec-b}, we lower-bound the expected revenue, conditioned upon one of the two samples being lower than the ideal price $\qo$ and the other being higher than the ideal price $\qo$:

\begin{lemma}\label[lemma]{l:both}
For every regular $F$,
\[\expect{(q_1,q_2)\sim U(B)}{e_2(q_1,q_2)} \ge \frac{1}{2}\cdot\left(1 - \left(\frac{\qo}{2\cdot(1+\qo)}\right)^2\right)\cdot\OPT(F).\]
\end{lemma}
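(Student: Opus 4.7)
The plan is to normalize $\OPT(F)=1$ via \cref{e2-homogeneous} and, using the symmetry of $e_2$ and of $B$ in its two coordinates, reduce the expectation to
\[
\expect{(q_1,q_2)\sim U(B)}{e_2(q_1,q_2)} = \frac{1}{\qo(1-\qo)}\int_0^{\qo}\!\int_{\qo}^1 e_2(q_1,q_2)\,dq_2\,dq_1,
\]
where throughout we take $q_1<\qo<q_2$ (so that $v(q_1)>v(q_2)$). In this regime the definition of $e_2$ gives $e_2(q_1,q_2)=r(q_1)$ when $v(q_1)\ge 2v(q_2)$ and $e_2(q_1,q_2)=r(q_2)$ otherwise.

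The first step is a case-free pointwise lower bound on $e_2$. In the ``$r(q_2)$'' case we have $r(q_2)=q_2 v(q_2)>q_2 v(q_1)/2=q_2 r(q_1)/(2q_1)$, while in the ``$r(q_1)$'' case we have the symmetric relation $r(q_1)=q_1 v(q_1)\ge 2 q_1 v(q_2)=2 q_1 r(q_2)/q_2$. These combine into the two uniform inequalities
\[
e_2(q_1,q_2)\ \ge\ r(q_1)\cdot\min\!\Bigl(1,\tfrac{q_2}{2q_1}\Bigr)
\qquad\text{and}\qquad
e_2(q_1,q_2)\ \ge\ r(q_2)\cdot\min\!\Bigl(1,\tfrac{2q_1}{q_2}\Bigr).
\]
Combining these with the two concavity-induced linear envelopes $r(q_1)\ge q_1/\qo$ on $[0,\qo]$ and $r(q_2)\ge(1-q_2)/(1-\qo)$ on $[\qo,1]$ (both of which follow from $r$ being concave, $r(\qo)=1$, and $r\ge 0$) yields the explicit geometric lower bound
\[
e_2(q_1,q_2)\ \ge\ \max\!\Bigl(\min\!\bigl(\tfrac{q_1}{\qo},\tfrac{q_2}{2\qo}\bigr),\ \min\!\bigl(\tfrac{1-q_2}{1-\qo},\tfrac{2q_1(1-q_2)}{q_2(1-\qo)}\bigr)\Bigr),
\]
which depends only on $q_1,q_2,\qo$.

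The remaining work is to integrate this lower bound over the rectangle $[0,\qo]\times[\qo,1]$. The plan is to split the rectangle along the diagonal $q_2=2q_1$ (on which each of the two ``$\min$'' inside the ``$\max$'' collapses to a single linear function), and further along the boundary case $2\qo\lessgtr 1$ (which determines whether this diagonal exits the rectangle through its top or through its right edge). On each of the resulting sub-regions the integrand is a simple rational function of $q_1,q_2$, and the integrals are all elementary. The plan is then to sum the contributions, divide by the normalizer $\qo(1-\qo)$, and verify by direct algebra that the result simplifies to the target $\tfrac{1}{2}(1-(\qo/(2(1+\qo)))^2)$.

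The main obstacle I anticipate is exactly this final simplification. Each of the two pointwise bounds above is, in isolation, strictly weaker than the lemma's target (a back-of-the-envelope check at $\qo=\tfrac12$ using only the bound $r(q_1)\ge q_1/\qo$ gives roughly $\tfrac{11}{24}\approx 0.458$, versus the target $\tfrac{35}{72}\approx 0.486$), so retaining the ``$\max$'' of both bounds—equivalently, splitting the rectangle and using the stronger bound on each sub-region—is essential, and tracking the bookkeeping so that the case-wise polynomial integrals collapse into the clean closed form $\tfrac{1}{2}(1-(\qo/(2(1+\qo)))^2)\cdot\OPT(F)$ is the principal technical calculation.
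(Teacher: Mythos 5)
Your approach is sound and genuinely different from the paper's. The paper reduces \cref{l:both} to an abstract statement (\cref{trr}) about a threshold function $T$ composed with two monotone concave curves, and the entire gain over the naive $\nicefrac{3}{8}$ bound comes from one structural observation: concavity forces $T\ge m=\frac{1}{1+\qo}$, after which the proof proceeds by coarse convexity/monotonicity estimates ($\hat M\ge\frac{1+m}{2}$, $p_1\le\hat M$, etc.) that never touch the actual shape of $r$ again. You instead convert the ERM tie-breaking rule into two \emph{pointwise} inequalities, $e_2\ge r(q_1)\min\bigl(1,\frac{q_2}{2q_1}\bigr)$ and $e_2\ge r(q_2)\min\bigl(1,\frac{2q_1}{q_2}\bigr)$ (both of which I verified in each branch of the definition of $e_2$), substitute the linear concavity envelopes, and integrate an explicit function of $(q_1,q_2,\qo)$. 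This is more computational but strictly sharper: the max of your two bounds switches branches along the single vertical line $q_2=\frac{2\qo}{1+\qo}$ (uniformly in $q_1$ --- the ratio of the two branches is $\frac{2\qo(1-q_2)}{q_2(1-\qo)}$ regardless of which side of $q_2=2q_1$ you are on; note this is exactly the abscissa of the intersection point that produces the paper's $m$), and the resulting piecewise-polynomial integrals are elementary.

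The one step of your plan that would fail as written is the last one: the integral does \emph{not} ``simplify to'' $\frac{1}{2}\bigl(1-(\frac{\qo}{2(1+\qo)})^2\bigr)$. Carrying out the computation, at $\qo=\nicefrac{1}{2}$ your bound evaluates to $\frac{14}{27}\approx 0.5185$ whereas the target is $\frac{35}{72}\approx 0.4861$; the two expressions agree only in the limits $\qo\to 0$ (both $\to\nicefrac{1}{2}$) and $\qo\to 1$ (both $\to\nicefrac{15}{32}$), and your bound strictly dominates in between. So the correct conclusion of your argument is a \emph{stronger} (but messier, piecewise in $\qo\lessgtr\nicefrac{1}{2}$) lower bound, and the final step should be to verify domination of the stated right-hand side rather than equality --- or simply to feed your stronger bound directly into the charging argument of \cref{more-than-half}, where it can only help. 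Two smaller bookkeeping points: your case decomposition should include the switch of the outer $\max$ at $q_2=\frac{2\qo}{1+\qo}$ (not only the diagonal $q_2=2q_1$ and the $2\qo\lessgtr 1$ split), and the heavy lifting --- actually performing and checking the integrations for general $\qo$ --- remains to be done.
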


As already mentioned above, whenever $\qo>0$, the right-hand side of the above inequality (which deteriorates as $\qo$ grows) is strictly less than the one-half-of-$\OPT(F)$ guarantee of ERM given one sample, so when proving \cref{more-than-half} we will have to charge this loss to the gains over one-half-of-$\OPT(F)$ that we obtained in the lower bounds of \cref{l:right,l:left}. The correctness of \cref{l:both} follows from the following \lcnamecref{trr}.

\begin{sublemma}\label[sublemma]{trr}
Let $m\in[0,1]$, let $T:[0,1]\rightarrow[m,1]$  be a monotone nonincreasing function, and for every $i\in\{1,2\}$ let $r_i:[0,1]\rightarrow[0,1]$ be a monotone nondecreasing and concave function s.t.\ $r_i(x)\ge x$ for every $x\in[0,1]$. For every $(x,y)\in [0,1]^2$, define $G(x,y)\in[0,1]$ as follows:
\[
G(x,y)\eqdef\begin{cases}r_2(y) & r_2(y)\ge T(x), \\ r_1(x) & \mbox{otherwise.}\end{cases}\]
Then $\expect{(x,y)\sim U([0,1]^2)}{G(x,y)}\ge \frac{1+m}{2} - \nicefrac{1}{2}\cdot(\frac{1+m}{2})^2$.
\end{sublemma}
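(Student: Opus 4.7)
The plan is to reduce the double integral to an integral in $x$ alone, lower-bound the integrand pointwise using the structural assumptions on $r_1$, $r_2$, and $T$, and then integrate. Since $r_2$ is monotone nondecreasing, the set $\{y\in[0,1]:r_2(y)\ge T(x)\}$ is an interval $[y^*(x),1]$, where $y^*(x)\eqdef\inf\{y:r_2(y)\ge T(x)\}$; by continuity of $r_2$ (inherited from concavity on $(0,1)$), $r_2(y^*(x))\ge T(x)\ge m$. Integrating over $y$ first yields
\[\int_0^1 G(x,y)\,dy \;=\; r_1(x)\cdot y^*(x) \;+\; \int_{y^*(x)}^1 r_2(y)\,dy.\]

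I would bound the first summand by $x\cdot y^*(x)$ using $r_1(x)\ge x$, and bound the second summand by the maximum of two complementary estimates: (i)~$r_2(y)\ge y$ yields $\int_{y^*(x)}^1 r_2 \ge \tfrac{1}{2}(1-y^*(x)^2)$; (ii)~since $r_2$ is concave with $r_2(y^*(x))\ge m$ and $r_2(1)=1$ (the latter forced by $r_2(1)\le 1$ and $r_2(1)\ge 1$), the graph of $r_2$ on $[y^*(x),1]$ lies above the chord from $\bigl(y^*(x),m\bigr)$ to $(1,1)$, giving $\int_{y^*(x)}^1 r_2 \ge \tfrac{1}{2}(1-y^*(x))(1+m)$. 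Combining, $\int_0^1 G(x,y)\,dy \ge h_x\bigl(y^*(x)\bigr)$, where $h_x(s)\eqdef xs + \tfrac{1}{2}(1-s)\bigl(1+\max(s,m)\bigr)$.

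Next, I would minimize $h_x$ pointwise over $s\in[0,1]$. The function $h_x$ is affine on $[0,m]$ with slope $x-\tfrac{1+m}{2}$ and strictly concave on $[m,1]$, so its minimum on $[0,1]$ is attained at one of $s\in\{0,m,1\}$, whose values are $\tfrac{1+m}{2}$, $xm+\tfrac{1-m^2}{2}$, and $x$ respectively. A short case analysis on the sign of $x-\tfrac{1+m}{2}$ (using the identities $h_x(0)-h_x(m)=m\bigl(\tfrac{1+m}{2}-x\bigr)$ and $h_x(m)-h_x(1)=(1-m)\bigl(\tfrac{1+m}{2}-x\bigr)$) yields $\min_{s\in[0,1]}h_x(s)=\min\bigl(x,\tfrac{1+m}{2}\bigr)$. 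Integrating over $x\in[0,1]$,
\[\expect{(x,y)\sim U([0,1]^2)}{G(x,y)} \;\ge\; \int_0^1\min\!\left(x,\tfrac{1+m}{2}\right)\!dx \;=\; \tfrac{1+m}{2} - \tfrac{1}{2}\!\left(\tfrac{1+m}{2}\right)^{\!2}.\]

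The main subtlety is combining the two bounds on $\int_{y^*(x)}^1 r_2$: bound~(i) alone loses all $m$-dependence (and integrates to zero at $m=0$), while bound~(ii) alone is too weak when $y^*(x)\ge m$; capturing both regimes simultaneously via the $\max(y^*,m)$ factor is what makes the pointwise-minimum step match the asserted bound uniformly in $m\in[0,1]$. As a sanity check, the pointwise minimum is actually realized (hence the bound is tight) by taking $r_1(x)=x$, $r_2(y)=m+(1-m)y$, and $T$ equal to $1$ for $x<\tfrac{1+m}{2}$ and to $m$ otherwise.
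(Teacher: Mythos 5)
Your proposal is correct, and it takes a genuinely different route from the paper's. The paper integrates in the other order: it fixes a global quantity $\hat{M}\eqdef\expect{y}{r_2(y)\mid r_2(y)\ge m}$, bounds $\expect{y}{G(x,y)}$ below by either $r_1(x)$ or a convex combination $p_2 r_1(x)+(1-p_2)\hat{M}$ according to whether $r_1(x)\le\hat{M}$, and then uses $\prob{x}{r_1(x)\le\hat{M}}\le\hat{M}$ and $\hat{M}\ge\frac{1+m}{2}$ to land on $\hat{M}-\nicefrac{\hat{M}^2}{2}$. You instead integrate out $y$ for each fixed $x$, obtaining the clean deterministic pointwise bound $\int_0^1 G(x,y)\,dy\ge\min\bigl(x,\frac{1+m}{2}\bigr)$ via the two complementary chord estimates on $\int_{y^*}^1 r_2$ and a three-point minimization of $h_x$; the final integral $\int_0^1\min\bigl(x,\frac{1+m}{2}\bigr)dx$ then gives the stated constant directly. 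All steps check out: the set $\{y:r_2(y)\ge T(x)\}$ is an up-interval by monotonicity (the possible failure of $r_2(y^*(x))\ge T(x)$ at a discontinuity of $r_2$ at $0$ is measure-zero and does not affect either chord bound), $r_2(1)=1$ is forced as you say, the identity $\max\bigl\{\tfrac12(1-s)(1+s),\tfrac12(1-s)(1+m)\bigr\}=\tfrac12(1-s)(1+\max(s,m))$ is right, and the affine-plus-concave structure of $h_x$ correctly reduces the minimization to $s\in\{0,m,1\}$. Your approach buys two things the paper's does not: it avoids the conditional-expectation bookkeeping ($\hat{M}$, $p_1$, $p_2$) entirely, and your explicit example with $r_1(x)=x$, $r_2(y)=m+(1-m)y$, and the step function $T$ shows the sublemma's bound is tight as stated, which the paper does not address.
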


The proof of \cref{trr} is given in \cref{sec-b}. We will now show how \cref{l:both} indeed follows from this \lcnamecref{trr}.

\begin{proof}[Proof of \cref{l:both}]
By \cref{e2-homogeneous}, we may assume without loss of generality that $\OPT(r)=1$.
By symmetry of $e_2$, it is enough to prove the claim w.r.t.\ $(q_1,q_2)\sim B_1$, where $B_1 \eqdef \{(q_1,q_2)\in B \mid q_1<q_2\}$.
We define:
\begin{itemize}
\item
$r_1:[0,1]\rightarrow[0,1]$ by $x\mapsto r(x\cdot\qo)$,
\item
$r_2:[0,1]\rightarrow[0,1]$ by $y\mapsto r\bigl(1-y\cdot(1-\qo)\bigr)$,
\item
$T:[0,1]\rightarrow[0,1]$ by $x\mapsto r\left(\inf\bigl\{q\ge\qo ~\middle|~ v(x\cdot\qo) \ge 2v(q)\bigr\}\right)$. (See \cref{fig-b-t}.)
\begin{figure}[t]
\centering
\subfigure[Pictorial definition of $T$ from the proof of \cref{l:both}. When $r$ is continuous with $r(1)=0$, then $T(x)$ is the $y$-coordinate of the intersection point of $r$ and the ray from~$(0,0)$ with slope exactly half that of the ray from $(0,0)$ to $\bigl(x\cdot\qo,r(x\cdot\qo)\bigr)$.]{\label{fig-b-t}
\begin{tikzpicture}[font=\doublefigfont]
\begin{axis}[xmin=0,enlargelimits=false,axis x line*=bottom,axis y line*=left,xtick={0,.5,.6666,1}, xticklabels={$0$,$x\cdot\qo\vphantom{0}$,$\textcolor{lightgray}{\qo\vphantom{0}}$,$1$}, ytick={0,1}, yticklabels={$0$, $OPT(F)$},clip=false,xscale=\doublefigxscale,yscale=\allfigscale]
\draw[dashed,color=lightgray] (axis cs: 2/3,1) -- (axis cs: 2/3,0);
\draw[dashed,color=lightgray] (axis cs: 2/3,1) -- (axis cs: 0,1);
\addplot[domain=0:0.6666, blue, ultra thick] {(4/3*x - x^2)*9/4};
\addplot[domain=0.6666:1, blue, ultra thick] {(4/3*(2*x-2/3) - (2*x-2/3)^2)*9/4};
\coordinate (q) at (axis cs:0.5,.9375);
\path let \p1 = (q) in coordinate (qq) at (\x1, 0);
\draw[dashed] (q) -- (qq);
\draw[dashed] (q) -- (axis cs: 0,0);
\draw[red,thick] ($(q)!.25!(qq) + (axis cs:-0.025,0.01)$) -- ++(axis cs:0.05,0) ++(axis cs:0,-.02) -- ++(axis cs:-0.05,0);
\draw[red,thick] ($(q)!.75!(qq) + (axis cs:-0.025,0.01)$) -- ++(axis cs:0.05,0) ++(axis cs:0,-.02) -- ++(axis cs:-0.05,0);
\coordinate (t) at (axis cs:{(11.0625+sqrt(14.37890625))/18}, {0.9375*(11.0625+sqrt(14.37890625))/18});
\draw[dashed] (axis cs:0,0) -- (t);
\path let \p1 = (t) in coordinate (tt) at (\x1, 0);
\draw[dashed] (tt) -- (t);
\draw[decorate,decoration={brace,amplitude=10pt},yshift=0pt] (tt) -- (t) node [midway,anchor=east,inner sep=0,outer sep=0,xshift=-11pt] {$T(x)$};
\end{axis}
\end{tikzpicture}
}\quad
\subfigure[Pictorial definition of $m$ from the proof of \cref{l:both}. Concavity of~$r$ implies that the graph of the decreasing part of~$r$ must be completely contained in the shaded region. $m$ is the smallest possible value of $T(1)$ (which is attained when~$r$ is the smallest possible within the shaded region).]{\label{fig-b-m}
\begin{tikzpicture}[font=\doublefigfont]
\begin{axis}[xmin=0,enlargelimits=false,axis x line*=bottom,axis y line*=left,xtick={0,.6666,1}, xticklabels={$0$,$\qo\vphantom{0}$,$1$}, ytick={0,1}, yticklabels={$0$, $OPT(F)$},clip=false,xscale=\doublefigxscale,yscale=\allfigscale]
\addplot[domain=0:0.6666, blue, ultra thick] {(4/3*x - x^2)*9/4};
\addplot[domain=0.6666:1, blue, ultra thick] {(4/3*(2*x-2/3) - (2*x-2/3)^2)*9/4};
\fill[blue,opacity=0.25] (axis cs: 2/3,1) -- (axis cs: 1,1) -- (axis cs: 1,0) -- cycle;
\coordinate (q) at (axis cs:2/3,1);
\path let \p1 = (q) in coordinate (qq) at (\x1, 0);
\draw[dashed] (q) -- (qq);
\draw[dashed] (q) -- (axis cs: 0,0);
\draw[dashed,color=lightgray] (q) -- (axis cs: 0,1);
\draw[red,thick] ($(q)!.25!(qq) + (axis cs:-0.025,0.01)$) -- ++(axis cs:0.05,0) ++(axis cs:0,-.02) -- ++(axis cs:-0.05,0);
\draw[red,thick] ($(q)!.75!(qq) + (axis cs:-0.025,0.01)$) -- ++(axis cs:0.05,0) ++(axis cs:0,-.02) -- ++(axis cs:-0.05,0);
\coordinate (t) at (axis cs:{(11.25+sqrt(18.5625))/18}, {0.75*(11.25+sqrt(18.5625))/18});
\draw[dashed] (axis cs:0,0) -- (t);
\path let \p1 = (t) in coordinate (tt) at (\x1, 0);
\draw[dashed] (tt) -- (t);
\draw[decorate,decoration={brace,amplitude=17pt,mirror},yshift=0pt] (tt) -- (t) node [midway,anchor=west,inner sep=0,outer sep=0,xshift=19pt,overlay] {$T(1)$};
\coordinate (m) at (axis cs:0.8, 0.75*0.8);
\path let \p1 = (m) in coordinate (mm) at (\x1, 0);
\draw[dashed] (mm) -- (m);
\draw[decorate,decoration={brace,amplitude=10pt},yshift=0pt] (mm) -- (m) node [midway,anchor=east,inner sep=0,outer sep=0,xshift=-11pt] {$m$};
\end{axis}
\end{tikzpicture}
}
\caption{Reduction from \cref{l:both} to \cref{trr}.}\label{fig-b}
\end{figure}%
\end{itemize}
We notice that under these definitions, we have (where $G(x,y)$ is defined as in \cref{trr}) that
\[\expect{(q_1,q_2)\sim U(B_1)}{e_2(x,y)} =
\expect{(x,y)\sim U([0,1]^2)}{G(x,y)}.\]

We note that the lower-bound that we obtain by applying \cref{trr}, as is, to the above definitions of $r_1,r_2,T$, is $\nicefrac{3}{8}$, which for $\qo<1$ is worse than the guarantee that we are attempting to prove in \cref{l:both}. The main idea that drives our improved revenue guarantee in \cref{l:both} is to bound $m$, the minimum value attained by $T$, away from $0$.
Since the intersection of $y=\frac{x}{2\qo}$ (the line defined by the points $(0,0)$ and $(\qo,\nicefrac{1}{2})$) and
$y=1-\frac{x-\qo}{1-\qo}$ (the line defined by the points $(\qo,1)$ and $(1,0)$) is $(\frac{2\qo}{1+\qo},\frac{1}{1+\qo})$, then defining $m\eqdef\frac{1}{1+\qo}$,
we note that by monotonicity and convexity of $r|_{[\qo,1]}$ and since $r(\qo)=1$, the range of $T$ is in fact $[m,1]$. (See \cref{fig-b-m}.)

Applying \cref{trr} with this value of $m$, we obtain that
\begin{multline*}
\expect{(q_1,q_2)\sim U(B_1)}{e_2(x,y)} =
\expect{(x,y)\sim U([0,1]^2)}{G(x,y)} \ge\\
\frac{1+m}{2} - \frac{1}{2}\cdot\left(\frac{1+m}{2}\right)^2 =
\frac{1}{2}\left(1-\left(\frac{\qo}{2(1+\qo)}\right)^2\right),
\end{multline*}
as required. The calculation justifying the last equality is detailed in \cref{s:calc-both}.
\end{proof}

\subsection{Completing the Proof of Theorem~\ref{more-than-half}}

\cref{more-than-half} follows from combining \cref{l:right,l:left,l:both}.

\begin{proof}[Proof of \cref{more-than-half}]
By definition,
\begin{multline*}
\ERM(F,2)=(1-\qo)^2\cdot\expect{(q_1,q_2)\sim U(R)}{e_2(q_1,q_2)}+\qo^2\cdot\expect{(q_1,q_2)\sim U(L)}{e_2(q_1,q_2)}+\\*
2\qo(1-\qo)\cdot\expect{(q_1,q_2)\sim U(B)}{e_2(q_1,q_2)}.
\end{multline*}
In \cref{s:calc-combine}, we substitute the respective lower bounds from \cref{l:right,l:left,l:both} for each of the above summands, and calculate and show that indeed $\ERM(F,2)>0.509\cdot\OPT(F)$, thereby completing the proof of \cref{more-than-half}.
\end{proof}

\section{Both Samples Below the Ideal Price (\texorpdfstring{\hspace{-.2ex}$\mathbfit{R}$}{R}): Proof of Lemma~\ref{l:right}}\label{sec-r}

\begin{proof}[Proof of \cref{l:right}]

By \cref{e2-homogeneous}, we may assume without loss of generality that $\OPT(r)=1$. We define:
\[E_2^R(r)=\expect{(q_1,q_2)\sim U(R)}{e_2(q_1,q_2)}.\]

Let $(q_1,q_2)\sim U\bigl([\qo,1]^2\bigr)$.
The random variable $\min\{q_1,q_2\}$ has density function $\mu(q)=2\frac{1-q}{(1-\qo)^2}$ (see
\cref{l:min}
in \cref{s:minmax}).
For $q\in [\qo,1]$, define
\begin{align*}
E_2^R(r \vert q) &\eqdef \expect{}{e_2(q_1,q_2) ~\middle\vert~ \min\{q_1,q_2\}\!=\!q }.
\end{align*}
Note that
\begin{align*}
E_2^R(r) &= \expect{q\sim\mu}{E_2^R(r \vert q)}.
\end{align*}

Let $q\in[0,1]$. Conditioned\footnote{This event has $0$ probability. However, since the pair $\bigl(\min\{q_1,q_2\},\max\{q_1,q_2\}\bigr)$ has an appropriate joint density function, this conditioning is meaningful.}  on $\min\{q_1,q_2\}=q$, we have that $e_2(q_1,q_2)$ is equal to $r(q)=
\max\bigl\{r(q_1),r(q_2)\bigr\}$ exactly when
$2v\bigl(\max\{q_1,q_2\}\bigr) \leq v(q)$. Thus, we define $t(q)$ as the threshold value that determines when $e_2(q_1,q_2)=r(q)=
\max\bigl\{r(q_1),r(q_2)\bigr\}$.
\begin{definition}[$t(x)$; see \cref{fig-r-t} on \cpageref{fig-r-t}]\label[definition]{r-t}
For $q\in [\qo,1]$, we define
\[t(q) \eqdef \sup\left\{x \geq q ~\middle|~ 2v(x) > v(q)\right\}.\]
For $q=1$, define $t(1)\eqdef1$, and when $\qo=0$, define $t(0)\eqdef0$.
\end{definition}

It is not hard to verify the following properties of the function $t(q)$.
\begin{sublemma}[Properties of $t$]\label[sublemma]{l:r-propt}\leavevmode
\begin{enumerate}
\item $e_2(q_1,q_2)=
\max\bigl\{r(q_1),r(q_2)\bigr\}$
if and only if $\max\{q_1,q_2\}\geq t\bigl(\min\{q_1,q_2\}\bigr)$.
\item For all $q$: $t(q)\in [q,2q]$ (due to monotonicity of $r$).
\item $t$ is monotone nondecreasing (due to monotonicity of $v$).
\item $t$ is continuous (due to continuity of $r$).
\item\label[property]{l:r-propt-v} $v(t(q)) = \frac{v(q)}{2}$, whenever $t(q)\ne0$.
\end{enumerate}
\end{sublemma}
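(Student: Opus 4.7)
The plan is to verify the five properties in turn; all of them reduce to elementary consequences of concavity of~$r$, continuity of~$r$, and monotonicity of~$v$ on $[\qo,1]$.

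For property~1, the approach is simply to unpack $e_2$ on pairs in $R$. Because $r$ is concave with maximum at $\qo$, it is nonincreasing on $[\qo,1]$, and $v(q)=r(q)/q$ is even strictly decreasing there (since for nonincreasing positive $r$, one checks $v'(x) = (r'(x)x - r(x))/x^2 < 0$). Thus among a pair in $R$, the smaller quantile realizes both $\max v$ and $\max r$. Reading off the definition of $e_2$, we obtain $e_2(q_1,q_2) = \max\{r(q_1),r(q_2)\}$ exactly when $v(\min\{q_1,q_2\}) \geq 2v(\max\{q_1,q_2\})$, and by the definition of $t(q)$ as the supremum of $x\geq q$ satisfying $2v(x) > v(q)$, this condition is equivalent to $\max\{q_1,q_2\}\geq t(\min\{q_1,q_2\})$.

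Properties 2 and 3 are short. For property~2, $t(q)\geq q$ is immediate since $q$ itself lies in the defining set ($2v(q)>v(q)$). For $t(q)\leq 2q$: if $2q>1$ the bound $t(q)\leq 1\leq 2q$ is trivial, and otherwise $r$ nonincreasing on $[\qo,1]$ gives $r(2q)\leq r(q)$, hence $v(2q) = r(2q)/(2q) \leq r(q)/(2q) = v(q)/2$, so $x=2q$ (and every $x\geq 2q$, by monotonicity of $v$) is outside the defining set of $t(q)$. For property~3, I would argue by contradiction: if $q_1<q_2$ but $t(q_1)>t(q_2)$, pick $x$ strictly between $t(q_2)$ and $t(q_1)$; approaching $t(q_1)$ from the left supplies $x'\in(x,t(q_1))$ with $2v(x')>v(q_1)\geq v(q_2)$ and $x'\geq q_2$, placing $x'$ inside the defining set of $t(q_2)$ and contradicting $x'>t(q_2)$.

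For property~5, I would use continuity of $v$ on $[\qo,1]$ (inherited from continuity of the concave $r$). Approaching $t(q)$ from the left through the defining set gives $v(t(q))\geq v(q)/2$, and approaching $t(q)$ from the right through its complement gives $v(t(q))\leq v(q)/2$, yielding equality. Property~4 then follows: since $v$ is continuous and strictly decreasing on $[\qo,1]$, $v^{-1}$ is continuous on the relevant range of values, and property~5 expresses $t$ as the composition $q\mapsto v(q)/2 \mapsto v^{-1}(v(q)/2)$ of continuous maps. The only subtlety I anticipate is in the ``approaching from the right'' half of property~5, which technically needs $t(q)<1$ so that one can take points $x>t(q)$; this is the generic regime (e.g., whenever $r(1)=0$, so that $v(q)/2$ always lies in the range of $v|_{[\qo,1]}$), while the degenerate case $v(q)/2<v(1)$ forces $t(q)=1$ via the explicit convention, and the endpoints $q=1$, $\qo=0$ are handled by the definition as well. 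Once this care at the endpoints is taken, everything else is routine bookkeeping from the definitions.
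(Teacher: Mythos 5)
Your proposal is correct in substance, and it actually supplies more than the paper does: the paper offers no proof of \cref{l:r-propt} beyond the assertion that it ``is not hard to verify,'' together with the parenthetical hints, and your arguments for properties 1--4 are exactly the intended elaborations of those hints (property 2 from monotonicity of $r$ on $[\qo,1]$ via $v(2q)\le v(q)/2$, property 3 from monotonicity of $v$, property 4 from continuity of $r$ via $v$ being continuous and strictly decreasing where $r>0$). The one point worth sharpening is the edge case you flag in property 5: when $v(1)>v(q)/2$ the defining set is all of $[q,1]$, so $t(q)=1$ arises from the supremum itself (not from the explicit convention, which only covers $q=1$ and $\qo=0$), and there the equality $v(t(q))=v(q)/2$ genuinely fails --- one only gets $v(t(q))\ge v(q)/2$ from the left-approach half of your argument. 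This is an imprecision in the statement rather than in your proof, and it is harmless downstream: in \cref{l:coupInc} the property is used only to replace $r\bigl(t(q)\bigr)$ by $\frac{t(q)}{2q}r(q)$, and the inequality $v\bigl(t(q)\bigr)\ge v(q)/2$ gives $r\bigl(t(q)\bigr)\ge\frac{t(q)}{2q}r(q)$, which preserves the lower bound. Your continuity argument for property 4 should likewise be read with the clamping at $1$ made explicit (i.e., $t(q)=\min\bigl\{v^{-1}\bigl(\max\{v(q)/2,\,v(1)\}\bigr),\,1\bigr\}$), which remains a composition of continuous maps.
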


The following lemma relates $E_2^R(r\vert q)$ with $r(q)$,
which is the expected revenue obtained by always choosing the ``better'' sample when $\min\{q_1,q_2\}=q$.
\begin{sublemma}\label[sublemma]{l:coupInc}
For all $q$:
\[E_2^R(r\vert q) \geq \frac{r(q)}{1-q}\left(\frac{t(q)^2}{4q} - \frac{3}{4}t(q) + 1-\frac{q}{2}\right).\]
\end{sublemma}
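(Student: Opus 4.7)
The plan is to condition on $\min\{q_1,q_2\}=q$ and then integrate out the max. Since $(q_1,q_2)\sim U([\qo,1]^2)$, a standard order-statistics computation shows that conditionally the maximum $s\eqdef\max\{q_1,q_2\}$ is distributed uniformly on $[q,1]$, so
\[E_2^R(r\vert q)=\frac{1}{1-q}\int_q^1 e_2(q,s)\,ds.\]
By the first and last items of \cref{l:r-propt}, the integrand equals $r(q)$ precisely when $s\ge t(q)$ (the high-value sample $q$ wins the ERM tie-break) and equals $r(s)$ otherwise, so the integral splits cleanly into $\int_q^{t(q)}r(s)\,ds+\bigl(1-t(q)\bigr)\,r(q)$.

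The key idea for the remaining integral of $r$ over $[q,t(q)]$ is to exploit concavity of $r$: the graph lies (weakly) above the chord joining $\bigl(q,r(q)\bigr)$ and $\bigl(t(q),r(t(q))\bigr)$, so the integral is bounded below by the trapezoidal area $\tfrac{t(q)-q}{2}\bigl(r(q)+r(t(q))\bigr)$. The defining property $v\bigl(t(q)\bigr)=v(q)/2$ from \cref{l:r-propt} then converts $r(t(q))=t(q)\,v(t(q))$ into $\tfrac{t(q)}{2q}\,r(q)$, after which everything factors neatly as $r(q)$ times an expression in $q$ and $t(q)$ alone. Concretely, the bound becomes
\[E_2^R(r\vert q)\ge\frac{r(q)}{1-q}\left[\frac{(t(q)-q)\bigl(2q+t(q)\bigr)}{4q}+1-t(q)\right].\]

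All that remains is algebraic clean-up: expanding $(t(q)-q)\bigl(2q+t(q)\bigr)=t(q)^2+q\,t(q)-2q^2$, dividing by $4q$, and adding $1-t(q)$ collapses the bracket to precisely $\tfrac{t(q)^2}{4q}-\tfrac{3}{4}t(q)+1-\tfrac{q}{2}$, which is the stated inequality. The only real obstacle is bookkeeping in this last simplification step; the degenerate endpoint $q=1$ is a measure-zero event that can safely be ignored, and the corner $\qo=0,\;q=0$ is handled by the convention $t(0)=0$, under which the right-hand side reduces to the trivially valid bound involving $r(0)=0$.
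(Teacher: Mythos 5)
Your proposal is correct and follows essentially the same argument as the paper's proof: you condition on $\min\{q_1,q_2\}=q$, use that the max is then uniform on $[q,1]$ to split the contribution at the threshold $t(q)$, apply concavity of $r$ to lower-bound the integral of $r$ over $[q,t(q)]$ by the trapezoidal/chord bound, and use $v(t(q))=v(q)/2$ to rewrite $r(t(q))$ as $\tfrac{t(q)}{2q}r(q)$; the algebra matches. The only cosmetic difference is that the paper phrases the split as a sum of conditional expectations rather than as an explicit integral, and uses ``midpoint of the chord'' language for the same concavity step.
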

\begin{proof}
Conditioned
on $\min\{q_1,q_2\}=q$, the random variable $x\eqdef\max\{q_1,q_2\}$ is distributed uniformly in $[q,1]$.
Thus,
\begin{align*}
E_2^R(r\vert q) &= \prob{}{x \geq t(q)}\cdot r(q) + \prob{}{x < t(q)}\cdot\expect{x}{r(x) ~\Big\vert~ q\leq x < t(q)}\\
                     &\geq \prob{}{x \geq t(q)}\cdot r(q) + \prob{}{x < t(q)}\cdot\left(\frac{r\bigl(t(q)\bigr)+r(q)}{2}\right) \tag{by concavity of $r$}\\
                     &=\prob{}{x \geq t(q)}\cdot r(q) + \prob{}{x < t(q)}\cdot\left(\frac{\frac{t(q)}{2q}r(q)+r(q)}{2}\right) \tag{by \crefpart{l:r-propt}{v}: $\frac{r\bigl(t(q)\bigr)}{t(q)}=v\bigl(t(q)\bigr)=\frac{v(q)}{2}=\frac{r(q)}{2q}$}\\
                     &= \left(\frac{1-t(q)}{1-q}\right)\cdot r(q) + \left(\frac{t(q)-q}{1-q}\right)\cdot\left(\frac{\frac{t(q)}{2q}r(q)+r(q)}{2}\right)\\
                     &=\frac{r(q)}{1-q}\left(\frac{t(q)^2}{4q} - \frac{3}{4}t(q) + 1-\frac{q}{2}\right).\tag*{\qedhere}
\end{align*}
\end{proof}

\begin{sublemma}\label[sublemma]{c:dec}
For every $q\in[0,1]$:
\[
E_2^R(r\vert q)\geq
\begin{cases}
r(q)\left(1-\frac{1}{16}\cdot\frac{q}{1-q}\right)  &q\leq\nicefrac{2}{3},\\
r(q)\left(\frac{1}{2}+\frac{1}{4q}\right) &q>\nicefrac{2}{3}.
\end{cases}
\]
\end{sublemma}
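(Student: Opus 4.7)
The plan is to lower-bound the right-hand side of Sublemma~\ref{l:coupInc} by minimizing over all admissible values of $t(q)$. Since the statement for $q=0$ reduces to $E_2^R(r\vert 0)\ge r(0)=0$ (trivially true since $r(0)=0$), assume $q>0$. By Sublemma~\ref{l:r-propt} (Property~2) together with the fact that $t$ takes values in $[0,1]$, we have $t(q)\in[q,\min\{2q,1\}]$, and Sublemma~\ref{l:coupInc} yields no additional information about $t(q)$. Hence it suffices to lower-bound the quantity
\[f(t)\eqdef\frac{t^{2}}{4q}-\frac{3}{4}t+1-\frac{q}{2}\]
over $t\in[q,\min\{2q,1\}]$, since the factor $\tfrac{r(q)}{1-q}$ in front is positive.

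The function $f$ is a convex quadratic with unconstrained minimizer $t^{*}=3q/2$ (from $f'(t^{*})=0$). The proof splits into two cases depending on whether $t^{*}$ lies in the feasible interval.

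\textbf{Case 1 ($q\le\nicefrac{2}{3}$).} Here $q\le 3q/2\le 2q$ and $3q/2\le 1$, so $t^{*}$ is feasible. Direct substitution gives
\[f(3q/2)=\frac{9q}{16}-\frac{9q}{8}+1-\frac{q}{2}=1-\frac{17q}{16}.\]
Dividing by $1-q$ and using the identity $\tfrac{1-17q/16}{1-q}=\tfrac{16-17q}{16(1-q)}=1-\tfrac{q}{16(1-q)}$ turns the bound from Sublemma~\ref{l:coupInc} into exactly $E_2^R(r\vert q)\ge r(q)\bigl(1-\tfrac{1}{16}\cdot\tfrac{q}{1-q}\bigr)$.

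\textbf{Case 2 ($q>\nicefrac{2}{3}$).} Here $3q/2>1$, so the feasible interval $[q,1]$ lies entirely to the left of $t^{*}$; since $f$ is decreasing on $(-\infty,t^{*}]$, its minimum on $[q,1]$ is attained at $t=1$, giving $f(1)=\tfrac{1}{4q}+\tfrac{1}{4}-\tfrac{q}{2}=\tfrac{(1-q)(1+2q)}{4q}$. Dividing by $1-q$ yields exactly $E_2^R(r\vert q)\ge r(q)\bigl(\tfrac{1}{2}+\tfrac{1}{4q}\bigr)$.

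There is essentially no technical obstacle beyond the elementary quadratic minimization; the only thing worth highlighting is that the cutoff $q=\nicefrac{2}{3}$ appearing in the \lcnamecref{c:dec} statement is precisely the value of $q$ at which the unconstrained minimizer $3q/2$ of $f$ hits the upper boundary $1$ of the feasible range $[q,\min\{2q,1\}]$, which is why this is the natural split point and why the two pieces of the piecewise bound agree at $q=\nicefrac{2}{3}$ (both evaluate to $\tfrac{5}{8}\cdot r(\nicefrac{2}{3})$).
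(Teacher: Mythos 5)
Your proof is correct and follows essentially the same route as the paper: both bound $E_2^R(r\vert q)$ from below via the quadratic $P_q(x)=\frac{x^2}{4q}-\frac{3}{4}x+1-\frac{q}{2}$ from Sublemma~\ref{l:coupInc}, split on whether the unconstrained minimizer $\frac{3}{2}q$ satisfies $\frac{3}{2}q\le 1$, and substitute $x=\frac{3}{2}q$ or $x=1$ accordingly. One tiny arithmetic slip in your closing remark: at $q=\nicefrac{2}{3}$ the two pieces both evaluate to $\frac{7}{8}\cdot r(\nicefrac{2}{3})$, not $\frac{5}{8}\cdot r(\nicefrac{2}{3})$; this is only a sanity-check aside and does not affect the proof.
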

\begin{proof}
By \cref{l:coupInc},
\[E_2^R(r\vert q) \geq \frac{r(q)}{1-q}P_q\bigl(t(q)\bigr).\]
Where $P_q(x)\eqdef\frac{x^2}{4q} - \frac{3}{4}x + 1-\frac{q}{2}$ is a quadratic polynomial
with minimum at $x_{\min}\eqdef\frac{3}{2}q$.
Since $t(q)\leq 1$, it follows that
\[
P_q\bigl(t(q)\bigr)\geq
\begin{cases}
P_q(\frac{3}{2}q)  &\frac{3}{2}q\leq 1,\\
P_q(1) &\frac{3}{2}q>1.
\end{cases}
\]
Now, $\frac{3}{2}q \leq 1$ if and only if $q\leq\frac{2}{3}$.
The conclusion follows since
\begin{align*}
\frac{r(q)}{1-q}P_q\left(\tfrac{3}{2}q\right) &= r(q)\left(1-\frac{1}{16}\cdot\frac{q}{1-q}\right), \mbox{\quad and}\\
\frac{r(q)}{1-q}P_q\left(1\right) &=r(q)\left(\frac{1}{2}+\frac{1}{4q}\right).\qedhere
\end{align*}
\end{proof}

With \cref{c:dec} in hand, we are ready to finish the proof of \cref{l:right}.
We will use the following simple fact that follows from the concavity of $r$:
\begin{equation}\label{r-concav}
\mbox{$\forall q\in[\qo,1]$:}\quad r(q)\geq\frac{1-q}{1-\qo};
\end{equation}
indeed, $\ell(q)=\frac{1-q}{1-\qo}$
is an affine function satisfying $\ell(\qo)=1=r(\qo)$, and $\ell(1)=0\leq r(1)$.

In \cref{s:calc-right}, we use \cref{c:dec,r-concav} to calculate and show that indeed $E_2^R(r)$ is lower-bounded as in the statement of \cref{l:right}, thereby completing the proof of this \lcnamecref{l:right}.
\end{proof}

\section{Both Samples Above the Ideal Price (\texorpdfstring{\hspace{-.2ex}$\mathbfit{L}$}{L}): Proof of Lemma~\ref{l:left}}\label{sec-l}

\paragraph{Proof of \cref{l:left}.}
When proving \cref{l:right}, we had no choice but to carry on the dependence on $\qo$ throughout the proof. Luckily, in the case where both samples are above the ideal price, we may normalize $\qo$ to equal $1$ without loss of generality. Indeed, the resulting distribution has all values multiplied by $\qo$ (similarly to the proof of \cref{e2-homogeneous}), and so the choice of ERM between the values corresponding to any two quantiles is unchanged. Furthermore, as before, by \cref{e2-homogeneous} we may assume without loss of generality that $\OPT(r)=1$.

We begin by coupling $M_2^L$ and $E_2^L$.
Let $(q_1,q_2)\sim U\bigl([0,1]^2\bigr)$.
The random variable $\max\{q_1,q_2\}$ has density function $\mu(q)=2q$ (see \crefpart{l:max}{density} in \cref{s:minmax}).
For $q\in [0,1]$, define
\begin{align*}
E_2^L(r \vert q) &\eqdef \expect{}{e_2(q_1,q_2) ~\vert~ \max\{q_1,q_2\}\!=\!q },\\
M_2^L(r \vert q) &\eqdef \expect{}{
\max\bigl\{r(q_1),r(q_2)\bigr\}~\vert~ \max\{q_1,q_2\}\!=\!q } = r(q).
\end{align*}
Note that
\begin{align*}
E_2^L(r) &= \expect{q\sim\mu}{E_2^L(r \vert q)},\\
M_2^L(r) &= \expect{q\sim\mu}{M_2^L(r \vert q)}.
\end{align*}

Next, we relate $E_2^L(r \vert q)$ with $M_2^L(r \vert q)$.
Let $q\in[0,1]$. Conditioned on $\max\{q_1,q_2\}=q$,\footnote{Similarly to \cref{sec-r}, this event has $0$ probability. However, since the pair $\bigl(\min\{q_1,q_2\},\max\{q_1,q_2\}\bigr)$ has an appropriate joint density function, this conditioning is meaningful.} we have that $e_2(q_1,q_2)$ is equal to $r(q)=
\max\bigl\{r(q_1),r(q_2)\bigr\}$ exactly when
$v\bigl(\min\{q_1,q_2\}\bigr) < 2v(q)$. Thus, we define $t(q)$ as the threshold value that determines when $e_2(q_1,q_2)=r(q)=
\max\bigl\{r(q_1),r(q_2)\bigr\}$.

\begin{definition}[$t(x)$; see \cref{fig-l-t} on \cpageref{fig-l-t}]\label[definition]{l-t}
For $q\in [0,1]$, we define
\[t(q) \eqdef \inf\bigl\{x\le q ~\big|~ v(x) \geq 2v(q)\bigr\}.\]
If $q=0$ or $\bigl\{x ~\big|~ v(x) \geq 2v(q)\bigr\}=\emptyset$, then we define $t(q)\eqdef 0$.
\end{definition}

It is not hard to verify the following properties of the function $t(q)$.
\begin{sublemma}[Properties of $t$]\label[sublemma]{l:l-propt}\leavevmode
\begin{enumerate}
\item $e_2(q_1,q_2)=
\max\bigl\{r(q_1),r(q_2)\bigr\}$ if and only if $\min\{q_1,q_2\} \geq t\bigl(\max\{q_1,q_2\}\bigr)$.
\item\label[property]{l:l-propt-half} For all $q$: $t(q)\leq \frac{q}{2}$ (due to monotonicity of $r$).
\item\label[property]{l:l-propt-inc} $t$ is monotone nondecreasing (due to monotonicity of $v$).
\item $t$ is continuous (due to continuity of $r$).
\item\label[property]{l:l-propt-v} $v(t(q)) = 2v(q)$, whenever $t(q)\neq 0$.
\end{enumerate}
\end{sublemma}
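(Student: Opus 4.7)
The plan is to derive all five properties from two basic facts holding throughout $[0,\qo]$: that $r$ is nondecreasing and continuous there, and that the value function $v(q)=r(q)/q$, being the slope of the chord from the origin to $\bigl(q,r(q)\bigr)$, is nonincreasing and continuous on $(0,1]$ (the latter a standard consequence of concavity of $r$ with $r(0)=0$). Throughout I would use the normalization $\qo=1$ already in force for this \lcnamecref{l:left}.

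I would prove property (5) first. Continuity of $v$ implies that the set $S(q)\eqdef\{x\le q: v(x)\ge 2v(q)\}$ is closed in $[0,q]$, so whenever $S(q)$ is nonempty the extremum defining $t(q)$ is attained. If $v\bigl(t(q)\bigr)$ were strictly larger than $2v(q)$, continuity of $v$ would allow one to push $t(q)$ slightly past its defining extremum while remaining inside $S(q)$, contradicting extremality; hence $v\bigl(t(q)\bigr)=2v(q)$ whenever $t(q)\neq 0$. Property (1) then follows by unpacking the definition of $e_2$: setting $q_{\min}=\min\{q_1,q_2\}$ and $q_{\max}=\max\{q_1,q_2\}$, monotonicity of $r$ on $[0,\qo]$ gives $r(q_{\max})=\max\{r(q_1),r(q_2)\}$ and monotonicity of $v$ gives $v(q_{\min})=\max\{v(q_1),v(q_2)\}$. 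The case split in the definition of $e_2$ thus reduces the claim to the equivalence $v(q_{\min})<2v(q_{\max}) \Leftrightarrow q_{\min}\ge t(q_{\max})$, which is immediate from monotonicity of $v$ together with property (5).

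Properties (2), (3), (4) should then follow mechanically. For (2), monotonicity of $r$ yields $r(q/2)\le r(q)$, hence $v(q/2)=2r(q/2)/q\le 2v(q)$, so by monotonicity of $v$ the entire interval $[q/2,q]$ lies outside $S(q)$, forcing $t(q)\le q/2$. For (3), if $q_1\le q_2$ then monotonicity of $v$ gives $2v(q_1)\ge 2v(q_2)$, so the defining constraint for $S(q_2)$ is weaker while its domain is larger, whence $S(q_1)\subseteq S(q_2)$ and $t(q_1)\le t(q_2)$. For (4), continuity of $r$ yields continuity of $v$, and property (5) pins down $t$ by the equation $v\bigl(t(q)\bigr)=2v(q)$ when $t(q)\neq 0$; a standard continuous-implicit-function argument then gives continuity of $t$, with the degenerate case $t(q)=0$ preserved in a neighborhood thanks to continuity of $v$.

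The part I expect to require the most care is the possibility of flat regions of $v$, which occur precisely when $r$ is linear through the origin on an initial subinterval of $[0,1]$; in that case the level set $\{x : v(x)=2v(q)\}$ is an interval rather than a single point, and one must take care that the threshold $t(q)$ is chosen consistently. I anticipate that picking $t(q)$ as the largest $x\le q$ with $v(x)\ge 2v(q)$ resolves this uniformly in all five properties; moreover, for property (1) it additionally suffices to observe that the boundary event $q_{\min}=t(q_{\max})$ has Lebesgue measure zero in $[0,1]^2$, so it is irrelevant for any of the conditional expectations $E_2^L(r\vert q)$ that feed into the remainder of the proof of \cref{l:left}.
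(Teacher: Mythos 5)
The paper supplies no proof of this sublemma---it only remarks that the properties are ``not hard to verify''---so your task was to fill in the argument from scratch, and your overall plan (derive everything from monotonicity and continuity of $r$ and of the chord-slope $v(q)=r(q)/q$, proving (5) first and using it to unpack (1)) is the natural one. A few remarks, one substantive.

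You correctly treat $t(q)$ as $\sup\bigl\{x\le q : v(x)\ge 2v(q)\bigr\}$ even though the paper's \cref{l-t} writes $\inf$. Since $v$ is nonincreasing, that set is a down-interval in $[0,q]$, so its infimum is trivially $0$ whenever it is nonempty; the $\sup$ is what makes properties (2) and (5) nontrivial and matches the figure caption's ``rightmost intersection point.'' This appears to be a typo in the paper, and you resolve it the right way. Your derivations of (1), (2), (3), and (5) from the $\sup$-interpretation are all sound, modulo one cosmetic slip: in your argument for (2), the assertion that ``the entire interval $[q/2,q]$ lies outside $S(q)$'' can fail at the endpoint $q/2$ (exactly when $v(q/2)=2v(q)$), but the conclusion $t(q)\le q/2$ is unaffected.

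The real gap is in property (4), and your anticipated fix does not work. If $v$ has an initial flat region---i.e., $r$ is exactly linear through the origin on $[0,b]$ for some $b>0$, say $r(q)=2q$ on $[0,\nicefrac{1}{2}]$ and $r(q)=1$ on $[\nicefrac{1}{2},1]$ (with $\qo$ normalized to $1$ as you do)---then $t$ is genuinely discontinuous. In that example $v\equiv 2$ on $(0,\nicefrac{1}{2}]$ and $v(q)=1/q$ on $(\nicefrac{1}{2},1]$, so $2v(q)>\sup v = 2$ for every $q<1$, giving $t(q)=0$, while $2v(1)=2$, giving $t(1)=\nicefrac{1}{2}$: a jump of height $\nicefrac{1}{2}$ at $q=1$. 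Choosing $t$ as a supremum rather than an infimum removes the ambiguity in where to place $t$ within the flat level set, but it does not remove this jump---the jump occurs because the level set $\{x:v(x)\ge 2v(q)\}$ changes abruptly from empty to $[0,b]$ as $2v(q)$ crosses $\sup v$ from above. So your expectation that the $\sup$ convention ``resolves this uniformly in all five properties'' is too optimistic for property (4). That said, two things mitigate this: it is a gap in the paper's claim as stated, not only in your reconstruction, and the paper's own device of regularizing distributions by spreading out the top atom makes the initial climb of $r$ strictly concave, hence $v$ strictly decreasing, which restores continuity; moreover, property (4) is never actually invoked in the downstream proof of \cref{l:left}, so the gap does not propagate.

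Finally, your measure-zero observation for the boundary case $q_{\min}=t(q_{\max})$ in property (1) is exactly the right thing to say: the paper's $\ge$ in that property should really be a $>$, but the discrepancy is a null set and cannot affect any of the expectations computed later.
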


The following \lcnamecref{e2-parab} relates $M_2^L(r\vert q)$ and $E_2^L(r\vert q)$.
\begin{sublemma}\label[sublemma]{e2-parab}
For all $q$:
\[E_2^L(r\vert q) \geq r(q)\cdot\left(\left(\frac{t}{q}\right)^2 - \frac{t}{q} + 1\right).\]
\end{sublemma}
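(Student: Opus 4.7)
The plan is to fix $q\in[0,1]$ and condition on the event $\max\{q_1,q_2\}=q$, under which $m\eqdef\min\{q_1,q_2\}$ is uniformly distributed on $[0,q]$. Writing $t=t(q)$ for brevity, the first property of \cref{l:l-propt} tells us that $e_2(q_1,q_2)=\max\{r(q_1),r(q_2)\}=r(q)$ precisely when $m\ge t$, and otherwise $e_2(q_1,q_2)=r(m)$. This splits the conditional expectation as
\[E_2^L(r\vert q)=\frac{1}{q}\int_0^{t}r(m)\,dm+\frac{q-t}{q}\cdot r(q).\]

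Next I would lower-bound $\int_0^{t}r(m)\,dm$ by a ``chord under the concave curve'' argument. Since $v(x)=r(x)/x$ is nonincreasing (which follows from concavity of $r$ together with $r(0)\ge 0$), for every $m\in[0,t]$ one has $r(m)/m\ge r(t)/t$, i.e., $r(m)\ge (m/t)\cdot r(t)$; integrating then gives
\[\int_0^{t}r(m)\,dm\ge \int_0^{t}\frac{m}{t}\cdot r(t)\,dm=\frac{t\cdot r(t)}{2}.\]

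To express $r(t)$ in terms of $r(q)$, I would invoke \crefpart{l:l-propt}{v}: the relation $v(t)=2v(q)$ is exactly $r(t)/t=2\cdot r(q)/q$, so $r(t)=(2t/q)\cdot r(q)$ and hence $\int_0^{t}r(m)\,dm\ge (t^2/q)\cdot r(q)$. Substituting this into the display above yields
\[E_2^L(r\vert q)\ge \frac{1}{q}\cdot\frac{t^2}{q}\cdot r(q)+\frac{q-t}{q}\cdot r(q)=r(q)\left(\left(\frac{t}{q}\right)^2-\frac{t}{q}+1\right),\]
which is the claimed inequality.

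I do not anticipate any real obstacle here: the entire argument is a short chord-integration estimate, combined with the defining identity $v\bigl(t(q)\bigr)=2v(q)$ of \crefpart{l:l-propt}{v}. The only cases to separate out are the degenerate ones in which $t(q)=0$ (either $q=0$, or no $x\le q$ satisfies $v(x)\ge 2v(q)$), where the condition $v(m)\ge 2v(q)$ fails almost surely, so $e_2=r(q)$ almost surely and the claimed lower bound holds with equality since its right-hand side also collapses to $r(q)$.
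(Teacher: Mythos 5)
Your argument is correct and is essentially identical to the paper's: both condition on $\max\{q_1,q_2\}=q$, split on whether $\min\{q_1,q_2\}<t(q)$, use concavity to bound $r(m)\ge (m/t)\,r(t)$ on $[0,t]$ (the chord-under-the-curve step), and then convert $r(t)$ to $(2t/q)\,r(q)$ via $v(t(q))=2v(q)$. The only cosmetic difference is that you write the integral explicitly and flag the degenerate case $t(q)=0$, which the paper leaves implicit.
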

\begin{proof}
Conditioned on $\max\{q_1,q_2\}=q$,
the random variable $x\eqdef\min\{q_1,q_2\}$ is distributed uniformly in $[0,q]$.
Thus,
\begin{align*}
E_2^L(r\vert q) &= \prob{}{x \ge t}\cdot r(q) + \prob{}{x < t}\cdot\expect{x\sim U([0,1])}{r(x) ~\middle\vert~ x < t}\\
                     &\geq \prob{}{x \ge t}\cdot r(q) + \prob{}{x<t}\cdot\expect{x\sim U([0,1])}{\frac{r(t)}{t}x ~\middle\vert~ x < t}\tag{by concavity, $\forall x\leq t:~\frac{r(x)}{x}\ge\frac{r(t)}{t}$}\\
                     &=\prob{}{x \ge t}\cdot r(q) + \prob{}{x<t}\cdot\frac{r(t)}{2}\\
                     &=\prob{}{x \ge t}\cdot r(q) + \prob{}{x<t}\cdot\frac{t}{q}r(q)\tag{by \crefpart{l:l-propt}{v}, $\forall q:~\frac{r(t)}{t}=v(t)=2v(q)=2\frac{r(q)}{q}$}\\
                     &=\frac{q-t}{q}\cdot r(q) + \frac{t}{q}\cdot\frac{t}{q}r(q)\\
                     &=r(q)\cdot\left(\left(\frac{t}{q}\right)^2 - \frac{t}{q} + 1\right).\qedhere
\end{align*}
\end{proof}

Since for every $y$ the value of the polynomial $y^2-y+1$ is at least $\frac{3}{4}$, we obtain the following \lcnamecref{c:coup}:
\begin{sublemma}\label[sublemma]{c:coup}
For every $q\in[0,1]$: $E_2^L(r\vert q)\geq \frac{3}{4} r(q)=\frac{3}{4}M_2^L(r\vert q)$.
\end{sublemma}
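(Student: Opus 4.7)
The plan is to derive this as an immediate corollary of Sublemma~\ref{e2-parab}, which already gives $E_2^L(r\vert q) \geq r(q)\cdot\bigl((t(q)/q)^2 - (t(q)/q) + 1\bigr)$. The only remaining work is to lower-bound the scalar expression $y^2 - y + 1$ uniformly in $y$, independent of the particular value of the ratio $t(q)/q$, and to read off the second equality from the definition of $M_2^L(r\vert q)$.

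First, I would complete the square to obtain $y^2 - y + 1 = (y - \tfrac{1}{2})^2 + \tfrac{3}{4}$, which makes it manifest that this quadratic is globally minimized at $y = \tfrac{1}{2}$ with minimum value $\tfrac{3}{4}$; in particular, $y^2 - y + 1 \geq \tfrac{3}{4}$ for every real $y$. Instantiating with $y = t(q)/q$ (which by \crefpart{l:l-propt}{half} even lies in $[0,\tfrac{1}{2}]$, though this refinement is not needed for the stated bound) and plugging into Sublemma~\ref{e2-parab} yields
\[E_2^L(r\vert q) \;\geq\; r(q)\cdot\left(\left(\tfrac{t(q)}{q}\right)^2 - \tfrac{t(q)}{q} + 1\right) \;\geq\; \tfrac{3}{4}\,r(q).\]

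For the second equality in the claim, I would simply invoke the definition $M_2^L(r\vert q) \eqdef \expect{}{\max\{r(q_1),r(q_2)\} \mid \max\{q_1,q_2\}=q} = r(q)$ stated at the start of \cref{sec-l} (the equality $M_2^L(r\vert q)=r(q)$ holds because conditioning on $\max\{q_1,q_2\}=q$ forces the larger of the two revenue values to be $r(q)$, as $r$ is monotone nondecreasing on $[0,\qo]=[0,1]$ under our normalization). This gives $\tfrac{3}{4}r(q) = \tfrac{3}{4}M_2^L(r\vert q)$, completing the chain.

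There is no substantive obstacle in this argument; the sublemma is essentially a packaging of the elementary inequality $y^2 - y + 1 \geq \tfrac{3}{4}$ fed into the content of Sublemma~\ref{e2-parab}. Its value lies downstream: the constant $\tfrac{3}{4}$ will serve as one horn of the win-win dichotomy sketched in the overview of the proof of \cref{l:left}, where either $r$ is ``far from constant'' (so that this inequality is strict by a quantifiable amount, forcing $E_2^L$ strictly above $\tfrac{3}{4}M_2^L$) or $r$ is ``far from linear'' (so that $M_2^L$ already exceeds $\tfrac{2}{3}\OPT$ by a quantifiable amount).
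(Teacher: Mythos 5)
Your proof is correct and matches the paper's own argument: the paper derives \cref{c:coup} from \cref{e2-parab} by observing, exactly as you do, that the quadratic $y^2 - y + 1$ is bounded below by $\tfrac{3}{4}$, and the second equality is just the definition of $M_2^L(r\vert q)$. No discrepancies.
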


We next quantify the distinction between revenue curves~$r$ that are ``close'' to constant,
and revenue curves~$r$ that are ``close'' to linear.
To this end, fix some $\delta\in[0,1]$ that we will later pick in a way that maximizes the revenue. We distinguish between two types of revenue curves~$r$:\footnote{It is possible also to optimize over ``$\nicefrac{1}{2}$'' in this partition of revenue curves, that is, for some carefully chosen $q_m,\delta\in[0,1]$, to distinguish between revenue curves $r$ satisfying $r(q_m)\geq (1+\delta)\cdot q_m$ and revenue curves $r$ satisfying $r(q_m)<(1+\delta)\cdot q_m$. While this does give a slightly improved lower bound, we avoid doing as it would add even more details to the analysis.}
\begin{enumerate}
\item $r$ satisfies $r(\nicefrac{1}{2})\geq (1+\delta)\cdot\nicefrac{1}{2}$ ($r$ is ``far'' from linear).
\item $r$ satisfies $r(\nicefrac{1}{2})< (1+\delta)\cdot\nicefrac{1}{2}$ ($r$ is ``far'' from constant).
\end{enumerate}

\begin{sublemma}[The Case in which  $r$ is ``Far from Linear'']\label[sublemma]{l:inc_case1-m}
If $r$ satisfies $r(\nicefrac{1}{2})\geq (1+\delta)\cdot\nicefrac{1}{2}$, then
\[M_2^L(r)\geq \frac{2}{3} + \frac{\delta}{4}.\]
\end{sublemma}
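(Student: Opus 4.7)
The plan is to express $M_2^L(r)$ as a one-dimensional integral against the density of $\max\{q_1,q_2\}$, and then use concavity of the revenue curve to lower-bound the integrand pointwise by a piecewise-linear minorant anchored at $q=0$, $q=\nicefrac{1}{2}$, and $q=1$.

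First, as already invoked in the proof sketch of \cref{l:left}, the random variable $\max\{q_1,q_2\}$ (for $(q_1,q_2)\sim U([0,1]^2)$) has density $\mu(q)=2q$, and $M_2^L(r\vert q)=r(q)$, so that
\[
M_2^L(r) \;=\; \int_0^1 2q\cdot r(q)\,dq.
\]
Since we have normalized $\qo=1$ and $\OPT(r)=1$, the curve $r$ is concave and monotone nondecreasing on $[0,1]$ with $r(0)\ge0$ and $r(1)=1$, and the hypothesis of the \lcnamecref{l:inc_case1-m} adds $r(\nicefrac{1}{2})\ge(1+\delta)/2$.

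Next, I will derive a pointwise lower bound on $r$. On $[0,\nicefrac{1}{2}]$, concavity together with $r(0)\ge0$ and $r(\nicefrac{1}{2})\ge(1+\delta)/2$ gives
\[
r(q) \;\ge\; 2q\cdot r(\tfrac{1}{2}) + (1-2q)\cdot r(0) \;\ge\; (1+\delta)\,q.
\]
On $[\nicefrac{1}{2},1]$, concavity together with $r(\nicefrac{1}{2})\ge(1+\delta)/2$ and $r(1)=1$ gives
\[
r(q) \;\ge\; (2-2q)\cdot r(\tfrac{1}{2}) + (2q-1)\cdot r(1) \;\ge\; \delta + (1-\delta)\,q.
\]
Since $2q\ge0$ throughout $[0,1]$, plugging these two bounds into the integral representation of $M_2^L(r)$ yields
\[
M_2^L(r) \;\ge\; \int_0^{1/2} 2q\cdot(1+\delta)q\,dq \;+\; \int_{1/2}^{1} 2q\cdot\bigl(\delta + (1-\delta)q\bigr)\,dq.
\]

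The final step is a direct evaluation: the first integral equals $\frac{1+\delta}{12}$ and the second equals $\frac{3\delta}{4}+\frac{7(1-\delta)}{12}$, whose sum is $\frac{8+3\delta}{12}=\frac{2}{3}+\frac{\delta}{4}$, as required. I do not foresee any real technical obstacle here: the heart of the argument is just the chord-lies-below-concave-curve observation applied separately on $[0,\nicefrac{1}{2}]$ and on $[\nicefrac{1}{2},1]$. The only point needing a moment of care is that on $[\nicefrac{1}{2},1]$ one uses the inequality $r(\nicefrac{1}{2})\ge(1+\delta)/2$ rather than an equality, but since the coefficients $(2-2q)$ and $(2q-1)$ in the concavity inequality are both nonnegative on that interval, substituting the lower bound for $r(\nicefrac{1}{2})$ only weakens the resulting lower bound on $r(q)$ in a controlled way, which is exactly what the argument needs.
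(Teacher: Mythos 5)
Your proposal is correct and follows essentially the same route as the paper: both lower-bound $r$ on $[0,\nicefrac{1}{2}]$ and $[\nicefrac{1}{2},1]$ by the piecewise-affine minorant through $(0,0)$, $(\nicefrac{1}{2},(1+\delta)/2)$, and $(1,1)$ using concavity, and then integrate against the density $\mu(q)=2q$ of $\max\{q_1,q_2\}$ (the paper packages the final computation via conditional expectations and linearity rather than evaluating the two integrals directly, but the content is identical). Your explicit formulas $(1+\delta)q$ and $\delta+(1-\delta)q$ and the resulting value $\frac{2}{3}+\frac{\delta}{4}$ all check out.
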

\begin{proof}
Let $\ell_1(q)$ be the affine function that satisfies $\ell_1(0)=0$ and $\ell_1(\nicefrac{1}{2})=(1+\delta)\cdot\nicefrac{1}{2}$,
and let $\ell_2(q)$ be the affine function that satisfies $\ell_2(\nicefrac{1}{2})=(1+\delta)\cdot\nicefrac{1}{2}$ and $\ell_2(1)=1$.
By concavity of $r$ and since $r(1)=1$,
\[
r(q) \ge \begin{cases} \ell_1(q) &q\in[0,\frac{1}{2}],\\ \ell_2(q) &q\in[\frac{1}{2},1]. \end{cases}
\]
Therefore,
\begin{align*}
M_2^L(r) &= \expect{q\sim\mu}{M_2^L(r \vert q)}\\
            &= \expect{q\sim\mu}{r(q)}\\
            &= \prob{}{q\leq\nicefrac{1}{2}}\cdot\expect{q}{r(q) ~\middle|~ q\leq\nicefrac{1}{2}}+
            \prob{}{q>\nicefrac{1}{2}}\cdot\expect{q}{r(q) ~\middle|~ q>\nicefrac{1}{2}}\\
            &\ge \prob{}{q\leq\nicefrac{1}{2}}\cdot\expect{q}{\ell_1(q) ~\middle|~ q\leq\nicefrac{1}{2}}+
            \prob{}{q>\nicefrac{1}{2}}\cdot\expect{q}{\ell_2(q) ~\middle|~ q>\nicefrac{1}{2}}\\
            &=\prob{}{q\leq\nicefrac{1}{2}}\cdot\ell_1\bigl(\expect{q}{q ~\middle|~ q\leq\nicefrac{1}{2}}\bigr)+
            \prob{}{q>\nicefrac{1}{2}}\cdot\ell_2\bigl(\expect{q}{q ~\middle|~ q>\nicefrac{1}{2}}\bigr) \tag{by linearity of expectation}\\
            &= \nicefrac{1}{4}\cdot\ell_1(\nicefrac{1}{3}) + \nicefrac{3}{4}\cdot\ell_2(\nicefrac{7}{9})\tag{by \cref{l:max}(\labelcref{l:max-le},\labelcref{l:max-ge}) in \cref{s:minmax}}\\
            &= \frac{1+\delta}{12} + \frac{21+6\delta}{36}\\
            &= \frac{2}{3} + \frac{\delta}{4}.\qedhere
\end{align*}
\end{proof}

\noindent Combining \cref{l:inc_case1-m,c:coup} yields:
\begin{sublemma}\label[sublemma]{l:inc_case1}
If $r$ satisfies $r(\nicefrac{1}{2})\geq (1+\delta)\cdot\nicefrac{1}{2}$, then
\[E_2^L(r)\geq\frac{1}{2} +\frac{3\delta}{16}.\]
\end{sublemma}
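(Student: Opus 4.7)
The proof should be essentially immediate from the two preceding sublemmas: it is just a matter of chaining the bounds together.

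My plan is to first take expectations in \cref{c:coup} over $q \sim \mu$ to lift the pointwise coupling bound into an integrated bound relating the two global quantities. Specifically, since $E_2^L(r\vert q) \geq \frac{3}{4} M_2^L(r\vert q)$ holds for every $q \in [0,1]$, integrating both sides against the density $\mu(q) = 2q$ and using the definitions $E_2^L(r) = \expect{q\sim\mu}{E_2^L(r\vert q)}$ and $M_2^L(r) = \expect{q\sim\mu}{M_2^L(r\vert q)}$ yields $E_2^L(r) \geq \frac{3}{4}\cdot M_2^L(r)$.

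Then I would simply invoke the assumption $r(\nicefrac{1}{2}) \geq (1+\delta)\cdot\nicefrac{1}{2}$ to apply \cref{l:inc_case1-m}, which gives $M_2^L(r) \geq \frac{2}{3} + \frac{\delta}{4}$. Combining,
\[
E_2^L(r) \;\geq\; \tfrac{3}{4}\cdot M_2^L(r) \;\geq\; \tfrac{3}{4}\left(\tfrac{2}{3} + \tfrac{\delta}{4}\right) \;=\; \tfrac{1}{2} + \tfrac{3\delta}{16},
\]
as claimed.

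There is no real obstacle here: all the hard work was done in establishing \cref{e2-parab} (and the $y^2 - y + 1 \geq \nicefrac{3}{4}$ observation that upgrades it into \cref{c:coup}) and in the piecewise-linear concavity argument of \cref{l:inc_case1-m}. The only ``step'' is to observe that a pointwise lower bound of the form $E_2^L(r\vert q) \geq \frac{3}{4} M_2^L(r\vert q)$ passes to an expectation and then multiplies cleanly against a constant lower bound on $M_2^L(r)$. No further case analysis is required, since the hypothesis $r(\nicefrac{1}{2}) \geq (1+\delta)\cdot\nicefrac{1}{2}$ is precisely the condition under which \cref{l:inc_case1-m} applies.
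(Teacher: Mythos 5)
Your proof is correct and is exactly the paper's argument: the paper simply states that combining \cref{c:coup} (integrated over $q\sim\mu$ to give $E_2^L(r)\geq\frac{3}{4}M_2^L(r)$) with \cref{l:inc_case1-m} yields the result, and your chaining $\frac{3}{4}\bigl(\frac{2}{3}+\frac{\delta}{4}\bigr)=\frac{1}{2}+\frac{3\delta}{16}$ is precisely that.
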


\begin{sublemma}[The Case in which $r$ is ``Far from Constant'']\label[sublemma]{l:inc_case2}
If $r$ satisfies $r(\nicefrac{1}{2})< (1+\delta)\cdot\nicefrac{1}{2}$, then
\begin{align*}
E_2^L(r) &\geq \frac{64}{3}\gamma^5-8\gamma^4+\frac{8}{3}\gamma^3-\frac{2}{3}\gamma^2-\gamma+\frac{2}{3},
\end{align*}
where $\gamma\eqdef\frac{\delta}{1+\delta}$. (See \cref{fig-l-g}.)
\begin{figure}[t]
\centering
\begin{tikzpicture}[font=\allfigfont]
\pgfmathsetmacro{\dd}{0.3};
\pgfmathsetmacro{\gg}{\dd/(1+\dd)};
\begin{axis}[ymin=0,enlargelimits=false,axis x line*=bottom,axis y line*=left,xtick={0,0.0992,\gg,0.5,1}, xticklabels={$0\vphantom{(}$,$t(1)$,$\gamma\vphantom{(}$,$\nicefrac{1}{2}\vphantom{(}$,$\qo=1\vphantom{(}$},ytick={0,.5,0.65,1},yticklabels={$0$,$\nicefrac{1}{2}$,$\nicefrac{1}{2}\cdot(1+\delta)$,$OPT(F)=1$},clip=false,scale=\allfigscale]
\addplot[domain=0.0001:1, blue, ultra thick] {x^.7};
\draw[dashed] (axis cs: 1,1) -- (axis cs: 0,\dd);
\fill[blue,opacity=0.25] (axis cs: 0,0) -- (axis cs: 0,\dd) -- (axis cs: .5,{.5*(1+\dd)}) -- (axis cs: .5,.5) -- cycle;
\draw[dashed] (axis cs: 1,2) -- (axis cs: 1,0);
\draw[red,thick] (axis cs: 1-0.025,1.5) -- ++(axis cs:0.05,0) ++(axis cs:0,-.02) -- ++(axis cs:-0.05,0);
\draw[red,thick] (axis cs: 1-0.025,.5) -- ++(axis cs:0.05,0) ++(axis cs:0,-.02) -- ++(axis cs:-0.05,0);
\draw[dashed] (axis cs: 1,2) -- (axis cs: 0,0);
\draw[dashed] (axis cs: 1,1) -- (axis cs: 0,1);
\draw[dashed] (axis cs: .5,.5) -- (axis cs: 0,.5);
\draw[dashed] (axis cs: .5,{.5*(1+\dd)}) -- (axis cs: 0,{.5*(1+\dd)});
\draw[dashed] (axis cs: .5,0) -- (axis cs: .5,{.5*(1+\dd)});
\draw[dashed] (axis cs: \gg,2*\gg) -- (axis cs: \gg,0);
\coordinate (t) at (axis cs:{2^(-1/.3)},{2*2^(-1/.3)});
\path let \p1 = (t) in coordinate (tt) at (\x1, 0);
\draw[dashed] (t) -- (tt);
\end{axis}
\end{tikzpicture}
\caption{Pictorial definition of $\gamma$ from \cref{l:inc_case2}. The assumption that $r(\nicefrac{1}{2})<(1+\delta)\cdot\nicefrac{1}{2}$ implies that the graph of $r$ restricted to $[0,\nicefrac{1}{2}]$ must be completely contained in the shaded region. $\gamma$ is the largest possible value of $t(1)$ (which is attained when $r$ is the largest possible within the shaded region).}\label{fig-l-g}
\end{figure}
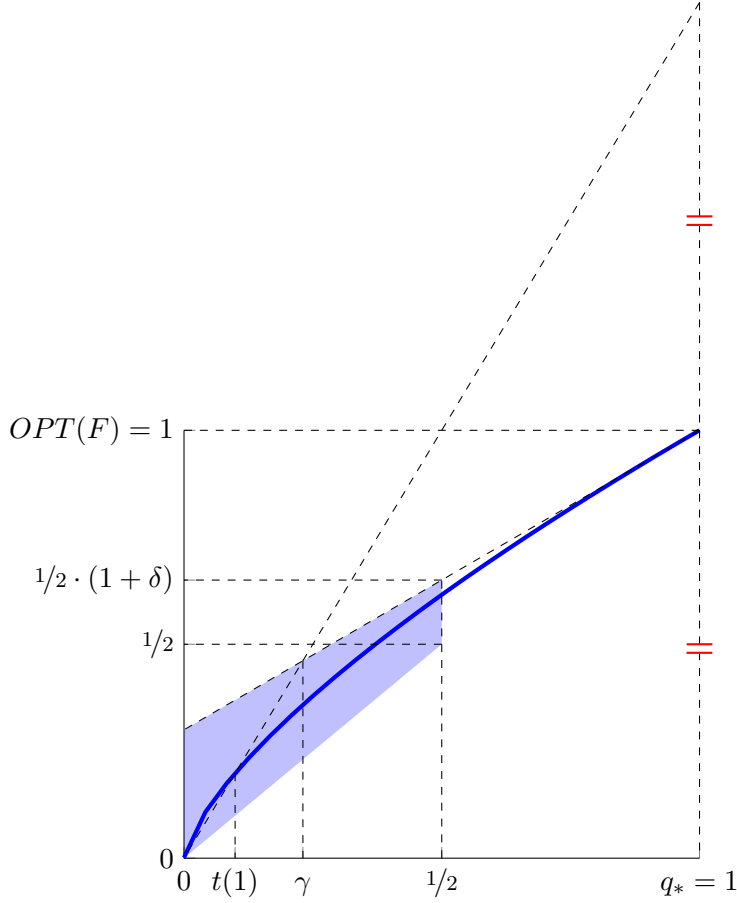%
\end{sublemma}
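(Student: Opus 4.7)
My plan is to first establish a uniform pointwise bound $t(q) \leq \gamma$ for all $q \in [0,1]$, and then to apply \cref{e2-parab} and integrate against the density $\mu(q) = 2q$. The key step will be to observe that the assumption $r(\nicefrac{1}{2}) < (1+\delta)\cdot\nicefrac{1}{2}$, together with concavity of $r$ and the boundary values $r(0) \geq 0$ and $r(1) = 1$, implies (by extrapolating leftward the chord through $\bigl(\nicefrac{1}{2}, r(\nicefrac{1}{2})\bigr)$ and $(1,1)$) that $r(x) \leq (1-\delta)x + \delta$ for all $x \in [0, \nicefrac{1}{2}]$, hence $v(x) = r(x)/x \leq (1-\delta) + \delta/x$ there. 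Since concavity also gives $r(q) \geq q$, i.e., $v(q) \geq 1$, the defining condition $v\bigl(t(q)\bigr) = 2v(q) \geq 2$ (whenever $t(q) > 0$) will force $t(q) \leq \delta/(1+\delta) = \gamma$. Monotonicity of $t$ from \crefpart{l:l-propt}{inc} then upgrades this to $t(q) \leq \gamma$ uniformly, which is precisely the geometric content of \cref{fig-l-g}.

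With this uniform bound in hand, \cref{e2-parab} gives $E_2^L(r \mid q) \geq r(q) \cdot f\bigl(t(q)/q\bigr)$ where $f(u) = u^2 - u + 1$ is strictly decreasing on $[0, \nicefrac{1}{2}]$ with minimum value $\nicefrac{3}{4}$ at $u = \nicefrac{1}{2}$. Since $t(q)/q \leq \min\bigl(\nicefrac{1}{2}, \gamma/q\bigr)$, I would split the integration at $q = 2\gamma$: on $[0, 2\gamma]$ we have $f\bigl(t(q)/q\bigr) \geq \nicefrac{3}{4}$, while on $[2\gamma, 1]$ we have $f\bigl(t(q)/q\bigr) \geq f(\gamma/q) = (\gamma/q)^2 - \gamma/q + 1$. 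Using $r(q) \geq q$ to eliminate the remaining dependence on $r$ then yields
\[E_2^L(r) \geq \tfrac{3}{2}\int_0^{2\gamma} q^2 \, dq + \int_{2\gamma}^1 \bigl(2\gamma^2 - 2\gamma q + 2q^2\bigr)\, dq = \tfrac{2}{3} - \gamma + 2\gamma^2 - \tfrac{4}{3}\gamma^3.\]

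To conclude, I would verify algebraically that this cubic lower bound dominates the asserted degree-$5$ target throughout $\gamma \in [0, \nicefrac{1}{2}]$. Their difference factors as $\tfrac{8}{3}\gamma^2 \bigl(1 - \tfrac{3}{2}\gamma + 3\gamma^2 - 8\gamma^3\bigr)$; the cubic factor is strictly decreasing in $\gamma$ (its derivative $-\tfrac{3}{2} + 6\gamma - 24\gamma^2$ has negative discriminant and is already negative at zero) and vanishes at $\gamma = \nicefrac{1}{2}$, so the difference stays nonnegative on the whole interval, with equality at $\gamma = 0$ and $\gamma = \nicefrac{1}{2}$. I would relegate this short polynomial verification to \cref{s:calc-left}.

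The hard part will be the first step: translating the functional assumption $r(\nicefrac{1}{2}) < (1+\delta)\cdot \nicefrac{1}{2}$ into the uniform pointwise bound $t(q) \leq \gamma$. The crucial insight is that combining the concavity-based upper bound $r(x) \leq (1-\delta)x + \delta$ on $[0,\nicefrac{1}{2}]$ with the universal lower bound $v(q) \geq 1$ pins down $t(q)$ via a single scalar inequality $v\bigl(t(q)\bigr) \geq 2$. The remainder is just routine integration followed by the cubic-versus-quintic polynomial comparison sketched above.
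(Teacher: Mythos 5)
Your proof follows the same high-level plan as the paper: establish the pointwise bound $t(q)\leq\gamma$, feed it through \cref{e2-parab} together with $r(q)\geq q$, and integrate against the density $2q$ after splitting at $q=2\gamma$. Your derivation of $t(q)\leq\gamma$ via the chord bound $r(x)\leq(1-\delta)x+\delta$ on $[0,\nicefrac{1}{2}]$ (hence $v(x)\leq(1-\delta)+\delta/x$) combined with $v(q)\geq 1$ is slightly more direct than the paper's route, which first bounds $t(1)$ and then invokes \crefpart{l:l-propt}{inc}, but it is the same geometric content; note also that once you bound $t(q)$ at every $q$ simultaneously, the appeal to monotonicity of $t$ is redundant. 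The genuine divergence comes downstream: after the identical split and integration you arrive at the cubic $\frac{2}{3}-\gamma+2\gamma^2-\frac{4}{3}\gamma^3$, which is strictly tighter than the stated quintic, so you need (and correctly supply) the extra polynomial-dominance check. This is not a detour forced by a weakness in your argument --- it is a fix of a bookkeeping slip in the paper's \cref{s:calc-left}: there the unnormalized integral $\int_{2\gamma}^1 g(q)\cdot 2q\,dq$, which already equals $\prob{}{q>2\gamma}\cdot\expect{q}{g(q)\mid q>2\gamma}$, is labeled as the conditional expectation and then re-multiplied by $\prob{}{q>2\gamma}=1-4\gamma^2$, double-counting the probability and weakening the bound. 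Your version avoids this, and your factoring of the difference as $\frac{8}{3}\gamma^2\bigl(1-\frac{3}{2}\gamma+3\gamma^2-8\gamma^3\bigr)$ with the cubic factor decreasing on $[0,\nicefrac{1}{2}]$ and vanishing at $\gamma=\nicefrac{1}{2}$ is correct, so your cubic does dominate the claimed bound. (Carrying your tighter cubic into the balancing with \cref{l:inc_case1} would in fact permit a slightly larger $\delta$ and hence a marginally better constant in \cref{l:left}.)
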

\begin{proof}
We first show that
\[\forall q:~t(q)\leq\min\{\nicefrac{q}{2},\gamma\}.\]
That $t(q)\leq\frac{q}{2}$ follows from \crefpart{l:l-propt}{half}.
To see that $t(q)\leq\gamma$ for all $q$, it suffices to show that $t(1)\leq\gamma$
(since $t(q)$ is monotone nondecreasing by \crefpart{l:l-propt}{inc}).
Since $r(1)=1$, we have $v(1)=1$, and so by \crefpart{l:l-propt}{v}, $t(1)$ is the $q$-coordinate of the intersection point of the line $y=2q$ and $r(q)$.
Let $\ell(q)$ be the affine function that satisfies $\ell(\nicefrac{1}{2})=(1+\delta)\cdot\nicefrac{1}{2}$ and $\ell(1)=1$.
By concavity of $r$, for all $q\in[0,\nicefrac{1}{2}]$, $r(q)\leq\ell(q)$. Therefore, since by \crefpart{l:l-propt}{half}, $t(1)\leq\nicefrac{1}{2}$,
it follows that $t(1)$ is at most the $q$-coordinate of the intersection point of the lines $y=2q$,
and $\ell(q)$. A simple calculation shows that the latter $q$-coordinate is $\frac{\delta}{1+\delta}=\gamma$,
and so $t(1)\leq\gamma$. (See \cref{fig-l-g}.) Therefore, we have shown that
\[\forall q:~t(q)\leq\min\{\nicefrac{q}{2},\gamma\},\]
as required.

Therefore, by \cref{e2-parab},
\begin{align}\label{l-case2-ege}
E_2^L(r)
	&\geq \expect{q}{\left(\left(\frac{t}{q}\right)^2 - \frac{t}{q} + 1\right)\cdot r(q)}\\
	&\geq \expect{q}{\left(\left(\frac{t}{q}\right)^2 - \frac{t}{q} + 1\right)\cdot q}\tag{by concavity and since $r(1)\!=\!1$, $\forall q:~r(q)\geq q$}\\
	&\geq\prob{}{q\leq 2\gamma}\cdot\expect{q}{\tfrac{3}{4}q ~\middle|~ q\leq2\gamma}+
           \prob{}{q > 2\gamma}\cdot\expect{q}{\left(\left(\frac{\gamma}{q}\right)^2-\frac{\gamma}{q}+1\right)\cdot q ~\middle|~  q > 2\gamma}.\tag{since when $q>2\gamma$, we have $\frac{t}{q}\le\frac{\gamma}{q}<\frac{1}{2}$, and since $y^2-y+1$ is decreasing in $[0,\frac{1}{2}]$}
\end{align}
In \cref{s:calc-left}, we calculate and show that the left-hand side of \cref{l-case2-ege} is lower-bounded by the right-hand side of the inequality in the statement of \cref{l:inc_case2}, thereby completing the proof of this \lcnamecref{l:inc_case2}.
\end{proof}

With \cref{l:inc_case1,l:inc_case2} in hand, the proof of \cref{l:left} follows;
indeed, taking $\delta\eqdef0.15117$ yields $\min\left\{\frac{1}{2} +\frac{3\delta}{16},\frac{64}{3}\gamma^5-8\gamma^4+\frac{8}{3}\gamma^3-\frac{2}{3}\gamma^2-\gamma+\frac{2}{3}\right\}\approx0.528344$ and therefore
\begin{center}
For every nondecreasing regular $r$, ~$E_2^L(r)\geq 0.528\cdot\OPT(r)$.
\end{center}
which concludes the proof of \cref{l:left}.
\qed

\section{One Sample on Each Side of the Ideal Price (\texorpdfstring{\hspace{-.2ex}$\mathbfit{B}$}{B}): Proof of Lemma~\ref{l:both}}\label{sec-b}

We complete the proof of \cref{l:both} by proving \cref{trr}, from which we have shown above that \cref{l:both} follows.

\begin{proof}[Proof of \cref{trr}]
Let $\hat{M}\eqdef\expect{y\sim U([0,1])}{r_2(y) \mid r_2(y)\!\ge\!m}$ and $p_2 \eqdef \prob{y\sim U([0,1])}{r_2(y)\!<\!m}$. We first claim that for every $x\in[0,1]$,
\begin{equation}\label{gy-ge}
\expect{y\sim U([0,1])}{G(x,y)}\ge \begin{cases} r_1(x) & r_1(x) \le \hat{M}, \\ p_2\cdot r_1(x) + (1-p_2)\cdot \hat{M} & \mbox{otherwise.}\end{cases}
\end{equation}

To prove \cref{gy-ge}, for every $x\in[0,1]$ we first put $M(x)\eqdef\expect{y\sim U([0,1])}{r_2(y) \mid r_2(y)\!\ge\!T(x)}$.
We note that since $M(x)$ is increasing in $T(x)$, we have that $M(x)\ge \hat{M}$. We now fix $x\in[0,1]$. By definition and since $M(x)\ge\hat{M}$, we have that
\begin{align*}
\expect{y\sim U([0,1])}{G(x,y)}
&=
\prob{}{r_2(y)\!<\!T(x)}\cdot r_1(x) + \prob{}{r_2(y)\!\ge\!T(x)}\cdot M(x) \ge \\
&\ge \prob{}{r_2(y)\!<\!T(x)}\cdot r_1(x) + \prob{}{r_2(y)\!\ge\!T(x)}\cdot \hat{M}.
\end{align*}
For the first part of \cref{gy-ge}, note that if $r_1(x)\le\hat{M}$, then
\begin{align*}
\expect{y\sim U([0,1])}{G(x,y)}
&\ge
\prob{}{r_2(y)\!<\!T(x)}\cdot r_1(x) + \prob{}{r_2(y)\!\ge\!T(x)}\cdot \hat{M} \ge \\
&\ge\prob{}{r_2(y)\!<\!T(x)}\cdot r_1(x) + \prob{}{r_2(y)\!\ge\!T(x)}\cdot r_1(x) = r_1(x).
\end{align*}
For the second part of \cref{gy-ge}, note that if $r_1(x)>\hat{M}$, then for any convex combination of $r_1(x)$ and $\hat{M}$, increasing the relative weight of $\hat{M}$ can only reduce the total value of the convex combination. Therefore,
\begin{align*}
\expect{y\sim U([0,1])}{G(x,y)}
&\ge
\prob{}{r_2(y)\!<\!T(x)}\cdot r_1(x) + \prob{}{r_2(y)\!\ge\!T(x)}\cdot \hat{M} \ge \\
&\ge\prob{}{r_2(y)\!<\!m}\cdot r_1(x) + \prob{}{r_2(y)\!\ge\!m}\cdot \hat{M} = \\
&\ge p_2\cdot r_1(x) + (1-p_2)\cdot \hat{M},
\end{align*}
concluding the proof of \cref{gy-ge}.

Let $p_1 \eqdef \prob{x\sim U([0,1])}{r_1(x)\!\le\!\hat{M}}$. By \cref{gy-ge}, we have that
\begin{multline}\label{gxy-ge}
\expect{(x,y)\sim U([0,1]^2)}{G(x,y)} = \expect{x\sim U([0,1])}{\expect{y\sim U([0,1])}{G(x,y)}} = \\
p_1\cdot\expect{x}{\expect{y}{G(x,y)} ~\big|~ r_1(x) \le \hat{M}} +
(1-p_1)\cdot\expect{x}{\expect{y}{G(x,y)} ~\big|~ r_1(x) > \hat{M}} \ge \\
p_1 \cdot \expect{x}{r_1(x) \mid r_1(x) \le \hat{M}} +
(1-p_1)\cdot \left(p_2\cdot \expect{x}{r_1(x) \mid r_1(x) > \hat{M}}+ (1-p_2)\cdot \hat{M}\right) \ge \\
p_1 \cdot \nicefrac{\hat{M}}{2} + (1-p_1)\cdot \left(p_2\cdot\frac{1+\hat{M}}{2}+ (1-p_2)\cdot \hat{M}\right),
\end{multline}
where the last inequality is by monotonicity and concavity of $r_1$ and since $r_1(1)=1$.

Since $\hat{M}\le 1$, we have that $\hat{M} \le \frac{1+\hat{M}}{2}$.
By concavity of $r_1$ and since $r_1(1)=1$, we note that $p_1 \le \hat{M}$. By both of these and by \cref{gxy-ge}, we have that
\begin{multline*}
\expect{(x,y)\sim U([0,1]^2)}{G(x,y)} ~~\ge~~ p_1 \cdot \nicefrac{\hat{M}}{2} + (1-p_1)\cdot \left(p_2\cdot\frac{1+\hat{M}}{2}+ (1-p_2)\cdot \hat{M}\right) ~~\ge \\
\ge~~ p_1 \cdot \nicefrac{\hat{M}}{2} + (1-p_1)\cdot \hat{M} ~~\ge~~ \hat{M}\cdot\nicefrac{\hat{M}}{2} + (1-\hat{M})\cdot\hat{M} ~~=~~ \hat{M} - \nicefrac{\hat{M}^2}{2}.
\end{multline*}

Finally, by monotonicity and concavity of $r_2$ and since since $r_2(1)=1$, we have that $\hat{M}\ge\frac{1+m}{2}$. Noting that the function $z-\nicefrac{z^2}{2}$ is increasing in $[0,1]$, we therefore have that
\[\expect{(x,y)\sim U([0,1]^2)}{G(x,y)}  \ge \hat{M} - \frac{\hat{M}^2}{2} \ge \frac{1+m}{2}-\frac{1}{2}\cdot\left(\frac{1+m}{2}\right)^2,\]
as required.
\end{proof}

\section*{Acknowledgments}

Yannai Gonczarowski is supported by the Adams
Fellowship Program of the Israel Academy of Sciences and Humanities;
his work is supported by ISF grant 1435/14 administered by the
Israeli Academy of Sciences, by Israel-USA Bi-national Science
Foundation (BSF) grant number 2014389, and by
the European Research Council (ERC) under the European
Union's Horizon 2020 research and innovation programme (grant
agreement No 740282). 
The research of Yishay Mansour was supported
in part by a grant from the Israel Science
Foundation, a grant from United States-Israel Binational
Science Foundation (BSF), and the Israeli Centers of Research Excellence (I-CORE)
program (Center  No.\ 4/11).
The research of Shay Moran is supported by the National Science Foundation under agreement No.\ CCF-1412958 and by the Simons Foundations.
We thank Zhiyi Huang and Tim Roughgarden for stimulating conversations.

\bibliographystyle{plainnat}
\bibliography{two-samples}

\appendix

\section{Omitted Calculations}\label{s:calc}

\subsection{Calculation Omitted from the Proof of Proposition~\ref{increase-area}}\label{s:calc-increase-area}

\begin{align*}
\ERM(F,2)&<
0.1^2\cdot\expect{q_1.q_2\sim U([0,0.1])^2}{\max\{r(q_1),r(q_2)\}}+\\
&\qquad\qquad+0.9^2\cdot\expect{q_1,q_2\sim U([0.1,1])^2}{\max\{r(q_1),r(q_2)\}}+\\*
&\qquad\qquad+2\cdot(t-0.1)\cdot0.1\cdot\expect{q\sim U([0.1,t])}{r(q)}+\\*
&\qquad\qquad+2\cdot0.1\cdot(1-t)\cdot\expect{q\sim U([0,0.1])}{r(q)}=\\
&=0.1^2\cdot\frac{2}{3}\cdot0.22+\\*
&\qquad\qquad+0.9^2\cdot\bigl(0.22+\frac{2}{3}\cdot(1-0.22)\bigr)+\\*
&\qquad\qquad+2\cdot0.1\cdot(t-0.1)\cdot\bigl(0.22+\frac{1}{2}\cdot(r(t)-0.22)\bigr)+\\*
&\qquad\qquad+2\cdot0.1\cdot(1-t)\cdot\frac{1}{2}\cdot0.22<\\
&\qquad\qquad<0.651<\ERM(G,2).\tag*{\qed}
\end{align*}

\subsection{Calculation Omitted from the Proof of Lemma~\ref{l:right}}\label{s:calc-right}

Consider first the case in which $\qo\geq\nicefrac{2}{3}$.
\begin{align*}
E_2^R(r) &= \expect{q}{E_2^R(r\vert q)}\\
       &\geq \expect{q}{r(q)\left(\frac{1}{2}+\frac{1}{4q}\right)}\tag{by \cref{c:dec}}\\
       &\geq \expect{q}{\left(\frac{1-q}{1-\qo}\right)\left(\frac{1}{2}+\frac{1}{4q}\right)}\tag{by \cref{r-concav}}\\
       &=\int_{\qo}^{1}\left(\frac{1-q}{1-\qo}\right)\left(\frac{1}{2}+\frac{1}{4q}\right)\left(2\frac{1-q}{(1-\qo)^2}\right)dq\tag{by
\cref{l:min}
in \cref{s:minmax}}\\
       &=\int_{\qo}^{1}\left(\frac{1-q}{1-\qo}\right)\left(\frac{1}{2}+\frac{1}{4}\cdot\frac{1}{1-(1-q)}\right)\left(2\frac{1-q}{(1-\qo)^2}\right)dq\\
       &=\int_{\qo}^{1}\left(\frac{1-q}{1-\qo}\right)\left(\frac{1}{2}+\frac{1}{4}\cdot\sum_{n=0}^{\infty}{\left(1-q\right)^n}\right)\left(2\frac{1-q}{(1-\qo)^2}\right)dq\\
       &=\frac{1}{(1-\qo)^3}\int_{\qo}^{1}\left(1-q\right)\left(1+\frac{1}{2}\cdot\sum_{n=0}^{\infty}{\left(1-q\right)^n}\right)\left(1-q\right)dq\\
       &=\frac{1}{(1-\qo)^3}\int_{\qo}^{1}\left((1-q)^2+\frac{1}{2}\cdot\sum_{n=0}^{\infty}{\left(1-q\right)^{n+2}}\right)dq\\
       &=\frac{1}{(1-\qo)^3}\left(\frac{(1-\qo)^3}{3}+\frac{1}{2}\cdot\sum_{n=0}^{\infty}{\frac{(1-\qo)^{n+3}}{n+3}}\right)\\
       &=\frac{1}{3}+\frac{1}{2(1-\qo)^3}\cdot\sum_{n=3}^{\infty}{\frac{(1-\qo)^{n}}{n}}\\
       &=\frac{1}{3}+\frac{1}{2(1-\qo)^3}\cdot\left(\sum_{n=1}^{\infty}{\frac{(1-\qo)^{n}}{n}-\frac{(1-\qo)^2}{2}-(1-\qo)}\right)\\
       &=\frac{1}{3}-\frac{1}{4(1-\qo)}-\frac{1}{2(1-\qo)^2}-\frac{\log\qo}{2(1-\qo)^3}.\\
\end{align*}
Now consider the case in which $\qo<\nicefrac{2}{3}$.
\begin{align*}
E_2^R(r) &= \expect{q}{E_2^R(r\vert q)}\\
           &=\int_{\qo}^{1}{E_2^R(r\vert q)\cdot\left(2\frac{1-q}{(1-\qo)^2}\right)dq}\tag{by
\cref{l:min}
in \cref{s:minmax}}\\
           &\ge \int_{\qo}^{\nicefrac{2}{3}}{\!\!\!\!\!\!r(q)\left(1-\frac{1}{16}\cdot\frac{q}{1-q}\right)\cdot\left(2\frac{1-q}{(1-\qo)^2}\right)dq} + \int_{\nicefrac{2}{3}}^{1}{\!\!\!\!r(q)\left(\frac{1}{2}+\frac{1}{4}\cdot\frac{1}{q}\right)\cdot\left(2\frac{1-q}{(1-\qo)^2}\right)dq}\tag{by \cref{c:dec}}\\
           &\ge \int_{\qo}^{\nicefrac{2}{3}}{\left(\frac{1-q}{1-\qo}\right)\left(1-\frac{1}{16}\cdot\frac{q}{1-q}\right)\cdot\left(2\frac{1-q}{(1-\qo)^2}\right)dq} +\\*
            &\qquad\qquad\qquad\qquad\qquad\int_{\nicefrac{2}{3}}^{1}{\left(\frac{1-q}{1-\qo}\right)\left(\frac{1}{2}+ \frac{1}{4q}\right)\cdot\left(2\frac{1-q}{(1-\qo)^2}\right)dq}\tag{by \cref{r-concav}}\\
           &=\frac{2}{(1-\qo)^3}\int_{\qo}^{\nicefrac{2}{3}}{\left(1-q-\frac{q}{16}\right)\cdot\left(1-q\right)dq} + \frac{1}{(1-\qo)^3}\left(\frac{(\nicefrac{1}{3})^3}{3}+\frac{1}{2}\cdot\sum_{n=0}^{\infty}{\frac{(\nicefrac{1}{3})^{n+3}}{n+3}}\right)\tag{calculation of the second integral as in the case $\qo\ge\nicefrac{2}{3}$}\\
           &=\frac{1}{(1-\qo)^3}\left(\int_{\qo}^{\nicefrac{2}{3}}{\left(2-\frac{33}{8}q+\frac{17}{8}q^2\right)dq} + \frac{(\nicefrac{1}{3})^3}{3}+\frac{1}{2}\cdot\sum_{n=3}^{\infty}{\frac{(\nicefrac{1}{3})^n}{n}}\right)\\
           &=\frac{1}{(1-\qo)^3}\left(2\left(\frac{2}{3}-\qo\right)-\frac{33}{16}\left(\frac{4}{9}-\qo^2\right)+\frac{17}{24}\left(\frac{8}{27}-\qo^3\right) + \frac{1}{81}+\frac{1}{2}\cdot\sum_{n=3}^{\infty}{\frac{\left(\nicefrac{1}{3}\right)^n}{n}}\right)\\
           &=\frac{1}{(1-\qo)^3}\left(\frac{2}{3}(1-\qo)^3+\frac{2}{3}-\frac{33}{16}\cdot\frac{4}{9}+\frac{\qo^2}{16}+\frac{17}{24}\cdot\frac{8}{27}-\frac{\qo^3}{24} + \frac{1}{81}+\frac{1}{2}\cdot\sum_{n=3}^{\infty}{\frac{\left(\nicefrac{1}{3}\right)^n}{n}}\right)\\
           &=\frac{2}{3}+\frac{1}{(1-\qo)^3}\left(\frac{2}{3}-\frac{33}{16}\cdot\frac{4}{9}+\frac{17}{24}\cdot\frac{8}{27} + \frac{1}{81}+\frac{\qo^2}{16}-\frac{\qo^3}{24}+\frac{1}{2}\cdot\sum_{n=3}^{\infty}{\frac{\left(\nicefrac{1}{3}\right)^n}{n}}\right)\\
           &=\frac{2}{3}+\frac{1}{(1-\qo)^3}\left(\frac{-1}{36}+\frac{\qo^2}{16}-\frac{\qo^3}{24}+\frac{1}{2}\cdot\sum_{n=3}^{\infty}{\frac{\left(\nicefrac{1}{3}\right)^n}{n}}\right)\\
            &=\frac{2}{3}-\frac{1}{(1-\qo)^3}\left(\frac{1}{36}-\left(\frac{\qo}{4}\right)^2+\frac{1}{3}\left(\frac{\qo}{2}\right)^3-\frac{1}{2}\cdot\sum_{n=3}^\infty{\frac{\left(\nicefrac{1}{3}\right)^n}{n}}\right) \\
            &=\frac{2}{3}-\frac{1}{(1-\qo)^3}\left(\frac{1}{36}-\left(\frac{\qo}{4}\right)^2+\frac{1}{3}\left(\frac{\qo}{2}\right)^3+\frac{1}{2}\log\nicefrac{2}{3}+\frac{1}{2}\cdot\frac{\left(\nicefrac{1}{3}\right)^2}{2}+\frac{1}{2}\cdot\nicefrac{1}{3}\right) \\
            &=\frac{2}{3}-\frac{1}{(1-\qo)^3}\left(\frac{2}{9}-\left(\frac{\qo}{4}\right)^2+\frac{1}{3}\left(\frac{\qo}{2}\right)^3+\frac{\log\nicefrac{2}{3}}{2}\right).\tag*{\qed} \\
\end{align*}

\subsection{Calculation Omitted from the Proof of Lemma~\ref{l:left}}\label{s:calc-left}

To evaluate the first expectation in \cref{l-case2-ege}, we note that by \crefpart{l:max}{le} in \cref{s:minmax}, we have that $\expect{q}{\tfrac{3}{4}q ~\middle|~ q\leq2\gamma}=\frac{3}{4}\cdot\frac{2}{3}\cdot2\gamma=\gamma$. Evaluating the second expectation, we have
\begin{multline*}
\expect{q}{\left(\left(\frac{\gamma}{q}\right)^2-\frac{\gamma}{q}+1\right)\cdot q ~\middle|~  q > 2\gamma}=
\int_{2\gamma}^1 \left(\left(\frac{\gamma}{q}\right)^2-\frac{\gamma}{q}+1\right)\cdot q \cdot 2q \cdot dq = \\
= 2\int_{2\gamma}^1 \left(\gamma^2 - \gamma q + q^2\right)dq =
2\left(\gamma^2q - \gamma \frac{q^2}{2} + \frac{q^3}{3}\middle)\right|_{2\gamma}^1= \\
= 2\left(\gamma^2-\frac{\gamma}{2}+\frac{1}{3} - 2\gamma^3 + 2\gamma^3 - \frac{8}{3}\gamma^3\right) =
-\frac{16}{3}\gamma^3 + 2\gamma^2 - \gamma + \frac{2}{3}.
\end{multline*}
Plugging the expressions for these two expectations into \cref{l-case2-ege}, we obtain
\begin{align*}
E_2(e) &\ge 4\gamma^2\cdot\gamma + (1-4\gamma^2)\cdot \left(-\frac{16}{3}\gamma^3 + 2\gamma^2 - \gamma + \frac{2}{3}\right) \\
           &=\frac{64}{3}\gamma^5-8\gamma^4+\frac{8}{3}\gamma^3-\frac{2}{3}\gamma^2-\gamma+\frac{2}{3},
\end{align*}
as required.\qed

\subsection{Calculation Omitted from the Proof of Lemma~\ref{l:both}}\label{s:calc-both}

\begin{multline*}
\expect{(q_1,q_2)\sim U(B_1)}{e_2(x,y)} =
\expect{(x,y)\sim U([0,1]^2)}{G(x,y)} \ge
\frac{1+m}{2} - \frac{1}{2}\cdot\left(\frac{1+m}{2}\right)^2 = \\
\frac{2+\qo}{2(1+\qo)}-\frac{(2+\qo)^2}{8(1+\qo)^2} =
\frac{1}{2} + \frac{1}{2(1+\qo)}-\frac{(2+\qo)^2}{8(1+\qo)^2} = \\
\frac{1}{2} - \frac{1}{2}\left(\frac{(2+\qo)^2-4(1+\qo)}{(2(1+\qo))^2}\right) =
\frac{1}{2} - \frac{1}{2}\left(\frac{\qo^2}{(2(1+\qo))^2}\right) =
\frac{1}{2}\left(1-\left(\frac{\qo}{2(1+\qo)}\right)^2\right).\tag*{\qed}
\end{multline*}

\subsection{Calculation Omitted from the Proof of Theorem~\ref{more-than-half}}\label{s:calc-combine}

By \cref{e2-homogeneous} assume w.l.o.g.\ that $\OPT(F)=1$. We consider two cases based on the peak point $\qo$ of $r$.

If $\qo\ge\frac{2}{3}$, then by \cref{l:right,l:left,l:both}, we have that
\begin{align}
\ERM(F,2)&\ge (1-\qo)^2 \cdot \left(\frac{1}{3}-\frac{1}{4(1-\qo)}-\frac{1}{2(1-\qo)^2}-\frac{\log\qo}{2(1-\qo)^3}\right) + \nonumber\\*
&\qquad\qquad\qquad\qquad\qquad\qquad\qo^2\cdot0.528 + 2\qo(1-\qo)\cdot\frac{1}{2}\left(1-\left(\frac{\qo}{2\cdot(1+\qo)}\right)^2\right)= \nonumber\\
&=\left(\frac{1}{3}(1-\qo)^2-\frac{1}{4}(1-\qo)-\frac{1}{2}-\frac{\log\qo}{2(1-\qo)}\right) + \nonumber\\*
&\qquad\qquad\qquad\qquad\qquad\qquad\qo^2\cdot0.528 + \qo(1-\qo)\left(1-\left(\frac{\qo}{2\cdot(1+\qo)}\right)^2\right).\label{q-ge}
\end{align}
The above has a unique global minimum at $\qo\approx0.713832$, where it equals $\approx0.50922$.

If $\qo<\frac{2}{3}$, then by \cref{l:right,l:left,l:both}, we have that
\begin{align*}
\ERM(F,2)&\ge (1-\qo)^2 \cdot \left(\frac{2}{3}-\frac{1}{(1-\qo)^3}\left(\frac{2}{9}-\left(\frac{\qo}{4}\right)^2+\frac{1}{3}\left(\frac{\qo}{2}\right)^3+\frac{\log\nicefrac{2}{3}}{2}\right)\right) + \\*
&\qquad\qquad\qquad\qquad\qquad\qquad\qo^2\cdot0.528 + 2\qo(1-\qo)\cdot\frac{1}{2}\left(1-\left(\frac{\qo}{2\cdot(1+\qo)}\right)^2\right)= \\
&=\left(\frac{2}{3}(1-\qo)^2-\frac{1}{(1-\qo)}\left(\frac{2}{9}-\left(\frac{\qo}{4}\right)^2+\frac{1}{3}\left(\frac{\qo}{2}\right)^3+\frac{\log\nicefrac{2}{3}}{2}\right)\right) + \\*
&\qquad\qquad\qquad\qquad\qquad\qquad\qo^2\cdot0.528 + \qo(1-\qo)\left(1-\left(\frac{\qo}{2\cdot(1+\qo)}\right)^2\right).
\end{align*}
The derivative of the above is negative for $\qo\in[0,\nicefrac{2}{3}]$, and since at $\qo=\nicefrac{2}{3}$ it coincides with the lower-bound from \cref{q-ge}, we therefore have that $\ERM(F,2)>0.509$, regardless of the value of $\qo$.\qed

\section{Auxiliary Technical Results used in Sections~\ref{sec-r} and~\ref{sec-l}}\label{s:minmax}

\begin{lemma}[Properties of $\min\{q_1,q_2\}$]\label[lemma]{l:min}
Let $m\in[0,1]$, and let $(q_1,q_2)\sim U\bigl([m,1]^2\bigr)$. The random variable $\min\{q_1,q_2\}$ is a random attaining values in $[m,1]$,
whose
Cumulative Distribution Function is $\prob{}{\min\{q_1,q_2\}\leq q} = 1-\bigl(\frac{1-q}{1-m}\bigr)^2$,
and
whose
Probability Density Function is $\mu(q)=2\frac{1-q}{(1-m)^2}$.
\end{lemma}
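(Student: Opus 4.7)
The proof is a routine exercise in order statistics, so the plan is simply to verify each of the three assertions in order, being careful about the normalization by $(1-m)^2$ which accounts for the conditioning of the marginals to $[m,1]$ rather than to $[0,1]$.

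First I would observe that the range claim is immediate: since both $q_1$ and $q_2$ take values in $[m,1]$, so does $\min\{q_1,q_2\}$. The main content is the CDF formula. I would obtain it by passing to the survival function. Since $q_1$ and $q_2$ are independent and each uniform on $[m,1]$, for any $q\in[m,1]$ we have
\[
\Pr\bigl[\min\{q_1,q_2\}>q\bigr]=\Pr[q_1>q]\cdot\Pr[q_2>q]=\left(\frac{1-q}{1-m}\right)^{\!2},
\]
where the marginal $\Pr[q_i>q]=\frac{1-q}{1-m}$ follows from the uniformity of $q_i$ on $[m,1]$. Taking complements yields the stated CDF $\Pr[\min\{q_1,q_2\}\le q]=1-\bigl(\frac{1-q}{1-m}\bigr)^2$.

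Finally, I would differentiate the CDF with respect to $q$ on the interior $(m,1)$ to extract the PDF:
\[
\mu(q)=\frac{d}{dq}\!\left(1-\left(\frac{1-q}{1-m}\right)^{\!2}\right)=2\cdot\frac{1-q}{(1-m)^2},
\]
which matches the claimed density. There is essentially no obstacle here; the only thing worth being explicit about is that the CDF is continuously differentiable on $[m,1]$ (so no point masses arise at the endpoints), which justifies identifying $\mu$ with the classical derivative of the CDF rather than with a distributional derivative. Once these three steps are written out, the \lcnamecref{l:min} is established.
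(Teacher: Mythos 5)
Your proof is correct, and since the paper states this lemma without proof (treating it as a routine order-statistics fact), your survival-function argument followed by differentiation is exactly the standard calculation the authors implicitly rely on. Nothing to add.
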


\begin{lemma}[Properties of $\max\{q_1,q_2\}$]\label[lemma]{l:max}
Let $(q_1,q_2)\sim U\bigl([0,1]^2\bigr)$. The random variable $\max\{q_1,q_2\}$ is a random variable attaining values in $[0,1]$ with the following properties:
\begin{enumerate}
\item\label[part]{l:max-density} Its Cumulative Distribution Function is $\prob{}{\max\{q_1,q_2\}\leq q} = q^2$,
and its Probability Density Function is $\mu(q)=2q$.
\item\label[part]{l:max-le} For every $q\in [0,1]$,~ $\expect{q_1,q_2}{\max\{q_1,q_2\} \Big\vert \max\{q_1,q_2\}\leq q}= \frac{2}{3}q$.
\item\label[part]{l:max-ge} For every $q\in [0,1]$,~ $\expect{q_1,q_2}{\max\{q_1,q_2\} \Big\vert \max\{q_1,q_2\}\geq q}= \frac{2}{3}\cdot\frac{1-q^3}{1-q^2}$.
\end{enumerate}
\end{lemma}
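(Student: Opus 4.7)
The plan is to treat this as a routine computation with independent uniform order statistics on $[0,1]^2$, proving the three parts in sequence by first establishing the distribution of $M \eqdef \max\{q_1,q_2\}$ and then using it to evaluate the two conditional expectations as ratios of simple polynomial integrals.

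For the first part, I would use independence of $q_1,q_2$: for $q\in[0,1]$, the event $\{M\le q\}$ is exactly the event $\{q_1\le q\}\cap\{q_2\le q\}$, and independence with $q_i\sim U([0,1])$ gives $\Pr[M\le q]=q\cdot q=q^2$. Differentiating the CDF in $q$ yields the density $\mu(q)=2q$ on $[0,1]$, establishing \cref{l:max-density}.

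For the second and third parts, I would apply the standard identity $\expect{}{X\mid A}=\expect{}{X\cdot\mathbf{1}_A}/\prob{}{A}$ with $X=M$ and $A=\{M\le q\}$ or $A=\{M\ge q\}$, using the density from the previous part. Concretely, for \cref{l:max-le} the numerator is $\int_0^q x\cdot 2x\,dx=\tfrac{2}{3}q^3$ and the denominator is $\Pr[M\le q]=q^2$, giving $\tfrac{2}{3}q$; for \cref{l:max-ge} the numerator is $\int_q^1 x\cdot 2x\,dx=\tfrac{2}{3}(1-q^3)$ and the denominator is $\Pr[M\ge q]=1-q^2$, giving $\tfrac{2}{3}\cdot\frac{1-q^3}{1-q^2}$. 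The $q=0$ and $q=1$ boundary cases of the conditional expectations are either vacuous (probability zero) or a trivial limit and do not require separate treatment beyond noting that the formulas are continuous in $q$.

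There is no real obstacle here: every step is an elementary integration against the density $\mu(q)=2q$, and the only mild subtlety is interpreting the conditional expectations at the endpoints $q\in\{0,1\}$, which we handle by limits. The entire argument fits in a short display of two one-line computations after stating the CDF.
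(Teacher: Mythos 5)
Your proof is correct and is the standard elementary computation; the paper in fact states \cref{l:max} in \cref{s:minmax} without proof, treating it as a routine auxiliary fact, and the argument you give (CDF by independence, density by differentiation, conditional expectations as ratios of polynomial integrals against $\mu(q)=2q$) is exactly the computation the authors are implicitly relying on.
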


\section{Proof of Lemma~\ref{e2-homogeneous}}\label{homogeneity-proof}

\begin{proof}[Proof of \cref{e2-homogeneous}]
For Part~3:
\[\OPT(\alpha\cdot r)=\max_{q\in[0,1]}\bigl(\alpha\cdot r(q)\bigr)=\alpha\cdot\max_{q\in[0,1]} r(q)=\alpha\cdot\OPT(r).\]
For Part~1, for every $q\in[0,1]$:
\[ v_{\alpha\cdot r}(q)=\frac{\alpha\cdot r(q)}{q}=\alpha\cdot\frac{r(q)}{q}=\alpha\cdot v_r(q). \]
For Part~2, let $q_1,q_2\in[0,1]$. By Part~1, $\max\bigl\{v_{\alpha\cdot r}(q_1),v_{\alpha\cdot r}(q_2)\bigr\}\geq 2\min\bigl\{v_{\alpha\cdot r}(q_1),v_{\alpha\cdot r}(q_2)\bigr\}$ if and only if $\max\bigl\{v_r(q_1),v_r(q_2)\bigr\}\geq 2\min\bigl\{v_r(q_1),v_r(q_2)\bigr\}$. We first consider the case where both of these conditions hold. In this case, by Part~1,
\[e_2^{\alpha\cdot r}(q_1,q_2)=\alpha\cdot r\left(\arg\smashoperator[l]{\max_{q\in\{q_1,q_2\}}}v_{\alpha\cdot r}(q)\right)=\alpha\cdot r\left(\arg\smashoperator[l]{\max_{q\in\{q_1,q_2\}}}v_r(q)\right)=\alpha\cdot e_2^r(q_1,a_2).\]
The case in which neither of these conditions holds is handled similarly, replacing $\arg\max$ with $\arg\min$.
\end{proof}

\end{document}